\documentclass[a4paper,10pt]{article}

\usepackage[utf8]{inputenc}
\usepackage[margin=1.5cm]{geometry}
\usepackage{mathtools, amssymb, amsfonts, amsthm, mathrsfs, braket, mathbbol}
\usepackage{upgreek, stmaryrd, xparse, wasysym}
\usepackage{hyperref}
\usepackage[style=numeric, sortcites]{biblatex}
\usepackage{graphicx, tikz-cd, subcaption}
\usepackage[shortlabels, inline]{enumitem}
\usepackage{parskip}
\usepackage{comment}

\graphicspath{{./figs/}}
\addbibresource{Ref.bib}

\title{Emergence of the Circle in a Statistical Model of Random Cubic Graphs}
\author{Christy Kelly$^1$\footnote{\url{ckk1@hw.ac.uk}}, Carlo Trugenberger$^2$, Fabio Biancalana$^1$\\
$^1$ School of Engineering and Physical Sciences, Heriot-Watt University, Edinburgh, UK.\\
$^2$ SwissScientific, Geneva, Switzerland.}
\date{October 2020}

\setlist[enumerate,1]{label={(\roman *)}}

\theoremstyle{plain}
\newtheorem{theorem}{Theorem}
\newtheorem{proposition}[theorem]{Proposition}
\newtheorem{lemma}[theorem]{Lemma}
\newtheorem{corollary}[theorem]{Corollary}
\newtheorem{example}[theorem]{Example}

\theoremstyle{definition}
\newtheorem{definition}[theorem]{Definition}

\theoremstyle{remark}
\newtheorem{remark}[theorem]{Remark}

\begin{document}
	\maketitle
	\begin{abstract}
		We consider a formal discretisation of Euclidean quantum gravity defined by a statistical model of random $3$-regular graphs and making using of the Ollivier curvature, a coarse analogue of the Ricci curvature. Numerical analysis shows that the Hausdorff and spectral dimensions of the model approach $1$ in the joint classical-thermodynamic limit and we argue that the scaling limit of the model is the circle of radius $r$, $S^1_r$. Given mild kinematic constraints, these claims can be proven with full mathematical rigour: speaking precisely, it may be shown that for $3$-regular graphs of girth at least $4$, any sequence of action minimising configurations converges in the sense of Gromov-Hausdorff to $S^1_r$. We also present strong evidence for the existence of a second-order phase transition through an analysis of finite size effects. This---essentially solvable---toy model of emergent one-dimensional geometry is meant as a controllable paradigm for the nonperturbative definition of random flat surfaces.
	\end{abstract}
	\section{Introduction}
	Discrete models of Euclidean quantum gravity based on dynamical triangulations or random tensors typically converge to either a crumpled phase of infinite Hausdorff dimension or to a phase of branched polymers in the large $N$ limit \cite{ADJ, Gurau_Invitation, Gurau_RandomTensors, GurauRyan_MelonsBranchedPolymers}. Indeed if we ignore the crumpled phase which is manifestly pathological, the branched polymer appears to be the principal universal scaling limit of random regularised geometries when the dimension $D\neq 2$; presumably, this makes it the main fixed point in some associated renormalisation group flow. The branched polymer itself may be characterised as the continuum limit of random discrete one-dimensional objects and for this reason is known as the continuum random tree in the mathematics literature \cite{Aldous_CRTI,Aldous_CRTII,Aldous_CRTIII}; it also has Hausdorff dimension 2 \cite{ADJ_SummingGenera} and spectral dimension $4/3$ \cite{JonssonWheater_SDBP} making it a highly fractal object. The situation is a little more involved for $D=2$; pure quantum gravity is equivalent to the quantum Liouville theory \cite{Polyakov_QuantumGeometryBosonicStrings,KPZ_Fractal2dQG,David_CFT,DistlerKawai_CFT2DQG} while the spectral and Hausdorff dimensions of these theories are $4$ and $2$ respectively \cite{AmbjornWatabiki_ScalingQG,AmbjornEtAl_SD2dQG}. When coupled to conformal matter of central charge $c$, however, branched polymers appear as a possible phase of $2D$-quantum gravity above the so-called \textit{$c=1$ barrier} and are interpreted as a kind of discrete manifestation of a tachyonic instability that leads to the breakdown of the continuum Liouville theory in this regime \cite{Cates_BranchedPolymer, KPZ_Fractal2dQG, David_CFT, ADJ}. As such, branched polymers are also generally regarded as a highly pathological model of $D$-dimensional Euclidean quantum spacetimes. 
	
	On the other hand, the formalism of causal dynamical triangulations---see \cite{AGJL,Loll_CDTReview} for a review---has shown that alternative scaling limits exist: two-dimensional causal dynamical triangulations, for instance, is equivalent to Horava-Lifschitz gravity in two dimensions and the spectral and Hausdorff dimensions both take on their natural values \cite{AmbjornEtAl_2DCDTHL,DurhuusJonssonWheater_SDCDT}. Again building causal structure into the dynamical triangulations model \textit{a priori} leads to substantial improvements in the geometric character of one of the phases in three dimensions \cite{AJL_3dCDT}, while in $4$-dimensions there is both a richer phase structure and at least one phase of reasonably geometric configurations \cite{AGJL,AmbjornEtAl_NewPhase, AmbjornEtAl_PhaseStructure}. From a practical point of view, the causal dynamical triangulations formalism represents a restriction of configuration space to a particular (non-spherical) topology in order to escape the universality classes of the Brownian map and the branched polymer; in particular, in the causal framework, there is a privileged foliation of spacetime which makes the quantum spacetimes obtained in (for instance) $2D$ causal dynamical triangulations homotopy cylinders with vanishing Euler characteristic. For compact surfaces, this is an essential condition for the existence of a Lorentzian metric \cite{ONeill_SRGeom}, so it is perhaps no surprise that standard Euclidean models which are dominated by contributions of genus zero give distinct non-Lorentzable results.
	
	Rather than restricting configuration space to a particular topology by fixing a foliation, an alternative strategy for escaping the universality class of branched polymers involves the \textit{expansion} of configuration space to include structures which have even less \textit{a priori} geometry than piecewise linear ones. One class of models that corresponds to such an expansion of configuration space is the class of \textit{network models} of gravitation which have attracted growing attention from researchers as of late \cite{BianconiRahmede_Flavour,Gibbs_Review,FarrFink_GeometryContinuum,Baird_EmergenceGeom,ChenPlotkin_GraphModels,Conrady_SpaceLowTemp,Wall_DiscretQG,Lombard_NetworkGravity, KonopkaMarkopoulouSmolin_QGraphity,KonopkaMarkopoulouSeverini_QGraphity, KellyEtAl,Trugenberger_CombQG, Trugenberger_QGasNetSO}. It seems desirable to study such expansions for two reasons: firstly, in order to fully clarify the role of causal structure in suppressing Euclidean pathologies and secondly, because models of emergent and latent network geometry are of independent intrinsic interest in network theory \cite{Bianconi_Challenges,Wu_EmergentComplex,BianconiRahmede_EmergentHyperbolic,Krioukov_Clustering, KrioukovEtAl_Hyperbolic, FarrFink_GeometryContinuum}. Stating the first point more ambitiously the aim is to present a model in which causal structure itself emerges dynamically at macroscopic scales \cite{Seiberg_EmergentSpacetime}. For sceptics, it is worth stressing that even if this more ambitious aim is not achieved, insights into the role of causal structure can be obtained from an analysis of Euclidean models. This strategy of generalisation of the structures under consideration by way of network models is the one adopted here.
	
	Of course any such expansion on phase space comes with new associated difficulties, such as reduced prospects for the recovery of an average approximate geometric structure as well as worsened control of the entropy. Both of these problems are manifest in discrete models with superexponential growth of the size of configuration space in terms of the number of spacetime points; for instance in causal set theory this superexponential growth is explicitly used to justify the nonlocality of the causal set action \cite{BenincasaDowker_ScalCurvCausSet}. It was also used as evidence against an early regularisation of random surfaces in terms of lattices imbedded in $\mathbb{Z}^D$ \cite{ADJ}. Nonetheless, from our perspective the issue of the superexponential growth of configuration space is perhaps somewhat moot; in standard random graph models \cite{Bollobas_RandomGraph,Janson_RandomGraph,AlbertBarabasi_StatMechCompNet,ParkNewman_StatMech} it has long been recognised that phase transitions occur in a manner that depends on threshold functions $\beta=\beta(N)$ that depend on the system size. From a physical perspective, the bare parameter under variation---which will in general have some relation to the relevant physical couplings of the theory---is thus to be regarded as scale dependent and the phenomenal coupling is a kind of renormalised parameter with the scale dependence factored out. Note that pursuing this line of thought in \cite{Trugenberger_CombQG,KellyEtAl} led to area-law scaling for the entropy of the model. This largely resolves the problem of excess entropy in principle; the problem of emergent geometry, however, remains to be addressed.
	
	In \cite{Trugenberger_CombQG, KellyEtAl} we considered a series of related network models regarded as combinatorial quantum gravity. There we showed that, subject to certain constraints, random $4$ and $6$-regular graphs spontaneously organised themselves into Ricci-flat graphs that were, on average, locally isomorphic to $\mathbb{Z}^2$ and $\mathbb{Z}^3$ respectively. {(Recall that a $k$-regular graph is one in which every vertex has $k$ neighbours.) This is in line with naive expectations; taking $\mathbb{Z}^D$ as the paradigm for a flat discrete $D$-manifold, we expect to fix the dimension by considering $2D$-regular graphs.} We also provided some evidence for the existence of a continuous phase transition in the form of a divergent correlation length plot. Nonetheless the numerical analysis presented in that work left many basic questions open including, crucially, the \textit{global} geometric consistency of observed classical configurations as evidenced, for instance, by spectral and Hausdorff dimension results. Further evidence for the existence and continuous nature of the phase transition is also desirable.
	
	Improvements in the code---partly following insights presented in \cite{Kelly_Exact}---have allowed some of these issues to be addressed, and in a future work we aim to present the case of flat surfaces; here, however, we address a more controllable model in which the configuration space consists of cubic ($3$-regular) graphs. In this model we have a classification of possible classical configurations and strong indications that in the $N\rightarrow \infty$ limit, the graphs in question converge to the circle $S^1_r$ of radius $r$ for some fixed $r>0$. This follows from spectral and Hausdorff dimension results, both of which come out at approximately $1$ for large graphs in the classical limit, as well as the observed nature of the classical configurations. We have further circumstantial evidence relating to the orientability of the observed classical configurations. However, we may do rather better in that an additional kinematic constraint---which appears to arise dynamically in the model we consider---allows us to show rigorously that a sequence of classical configurations converges to $S^1_r$ in the sense of Gromov-Hausdorff. We also present strong evidence for the continuous nature of the phase transition by an analysis of finite-size effects. We thus present a model with a continuous phase transition between a phase of random discrete spaces and emergent regular one-dimensional geometry.  {This is perhaps evidence for the existence of a nontrivial UV fixed point for gravity, which in the Euclidean context is equivalent to the random structure of spacetime near Planckian scales.}
	
	The basis for the combinatorial approach to quantum gravity introduced in \cite{Trugenberger_CombQG,KellyEtAl} is a rough analogue of the Ricci curvature introduced by Yann Ollivier \cite{Ollivier_RCMCMS, Ollivier_RCMS}. We discuss the Ollivier curvature more fully in appendix \ref{appendix: OllivierCurvature}. It is valid for arbitrary metric measure spaces \cite{Gromov_MetricStructures} and utilises synthetic notions of curvature that have arisen in optimal transport theory \cite{Villani_Topics, Villani_OptimalTransport}. It has also seen wide application in network theory, particularly as a measure of clustering and robustness \cite{Ni_RicIntTop, Sandhu_Cancer, Sandhu_Market, TannenbaumEtAl_Cancer, WangEtAl_DiffGeom, WangEtAl_Interference, WangEtAl_QUBO, WangEtAl_WirelessNetwork, WhiddenMatsen_SubtreeGraph, FarooqEtAl_Brain, SiaEtAl_CommunityDetection, JostLiu_RicciCurv}. Intuitively speaking the Ollivier curvature captures the notion that the average distance between two unit-mass balls in a positively curved space is less than the distance between their centres. The first hint that Ollivier curvature may play a role in the context of quantum gravity appears to be in \cite{Trugenberger_RHLW}, but other than our own work, we have recently seen Ollivier curvature appearing in the context of some much publicised network models of gravity by Wolfram and collaborators \cite{Gorard, Wolfram_Project}. From a slightly different perspective, Klitgaard and Loll have generalised the basic intuition of Ollivier curvature to define a notion of \textit{quantum Ricci curvature} as a quasi-local quantum observable for dynamically triangulated models that is valid at near the Planck scale \cite{KlitgaardLoll_QuantumRicciCurvature, KlitgaardLoll_ImplementingQuantumRicciCurvature, KlitgaardLoll_HowRound}. It is also worth noting that recently, it has been shown that optimal transport ideas have a role to play in classical gravity \cite{MondinoSuhr, McCann}. More generally, optimal transport ideas have seen fruitful application in noncommutative geometry \cite{AndreaMartinetti, MartinettiWallet} and in the renormalisation group flow of the nonlinear $\sigma$-model \cite{Carfora_WassGeomRicciFlow, CarforaFamiliari}.
	
	Note that the Ollivier curvature is not the only valid notion of discrete curvature. A closely related coarse notion of curvature has been introduced in the context of optimal transport theory by Sturm \cite{Sturm_GeomI,Sturm_GeomII} and Lott and Villani \cite{LottVillani}; see also \cite{Ohta_MeasureContraction}. Unfortunately the Sturm-Lott-Villani curvature does not generalise easily to discrete spaces because the $L^2$-Wasserstein space over such spaces lacks geodesics; Erbar and Maas have generalised the Sturm-Lott-Villani curvature to discrete spaces using an alternative metric on the space of probability distributions but its concrete properties are as yet unclear \cite{Maas_EntropyGradientFlow, ErbarMaas_Curvature}. A variety of alternative discrete notions of curvature also exist \cite{Najman_DiscreteCurvature, Forman, SreejithEtAl_Forman1, SreejithEtAl_Forman2, SreejithEtAl_Forman3, BakryEmery_DiffHyp, Bakry_AnalGeom, LinLuYau, Kamtue_Review}, and it would be interesting to see how far the results obtained here depend on the precise notion of discrete curvature adopted. Indeed, the \textit{Forman curvature} \cite{Forman,SreejithEtAl_Forman1,SreejithEtAl_Forman2,SreejithEtAl_Forman3}, an alternative notion of curvature defined for networks (strictly speaking regular $CW$-complexes), is the basis for a recent proposal of \textit{network gravity} \cite{Lombard_NetworkGravity} where quantum spacetimes are grown according to a stochastic model governed by the discrete Forman curvature.
	
	In this paper we adopt a now common perspective in mathematics viz. we essentially interpret emergent geometry in terms of Gromov-Hausdorff convergence \cite{Schramm_ScalingLimits}. (Of course, physically speaking we also require a continuous phase transition---or at least a divergent correlation length---in order to justify the construction of the scaling limit in the first place.) As such, it is worth briefly reflecting on the nature of Gromov-Hausdorff convergence and some of its physical ramifications. Given any metric space $(X,\mathcal{D})$ one can define a finite metric $\mathcal{D}_H$ on the space of compact subsets of $X$ called the \textit{Hausdorff metric}. The \textit{Gromov-Hausdorff distance} between two (isometry classes of) compact metric spaces $X$ and $Y$ is the infimum of the Hausdorff distance between the two spaces over all pairs of isometric imbeddings of the spaces $X$ and $Y$ into an arbitrary ambient space $Z$. The key point to note is that Gromov-Hausdorff convergence characterises the scaling limit invariantly precisely because one minimises over all possible backgrounds; this ensures that discrete manifolds, regarded as sequences of graphs which converge to a given manifold in the sense of Gromov-Hausdorff, are a generally covariant regularisation of their respective scaling limits. There are several technical caveats worth considering at greater length which we discuss in appendix \ref{appendix: GromovHausdorffDistance}.
	
	In this sense we have a reasonably rigorous interpretation of (Euclidean quantum) spacetimes as (sequences of) graphs. Gromov-Hausdorff convergence however is not sufficient; as stressed in \cite{Trugenberger_CombQG,KellyEtAl} we also need at least the convergence of some normalisation of the discrete Einstein-Hilbert action to its continuum counterpart in the Gromov-Hausdorff limit.  {This precise convergence question is somewhat subtle and has yet to be answered fully, essentially because the Ollivier curvature depends both on the metric and the measure theoretic structures of the spaces in question, where typically the graph measures are discrete while the requisite manifold measures are continuous.} For the models we have studied thus far this has been no obstacle since the curvature of classical configurations vanishes trivially and we wish to characterise Ricci-flat scaling limits by Ollivier-Ricci flat graphs.  {That is to say, \textit{here we make no claims about quantum gravity in general} considering only the generation of flat geometry from random degrees of freedom. We consider the precise relation of the model considered here to quantum gravity below.} 
	\section{Combinatorial Quantum Gravity}
	\subsection{The Model}
    We shall consider a discrete statistical model $(\Omega,\mathcal{A})$ defined schematically by the partition function
    \begin{align}\label{equation: StatisticalModel_PartitionFunction}
        \mathcal{Z}[\beta,\mathcal{A}]= \sum_{\omega\in \Omega}\exp(-\beta \mathcal{A}(\omega))
    \end{align}
    where $\Omega$ is some \textit{configuration space} consisting of graphs, $\mathcal{A}:\Omega\rightarrow \mathbb{R}$ is an \textit{action functional} and  {$\beta=\beta(\tilde{\beta},N)$ an \textit{a priori} $N$-dependent function which acts as a scale-dependent parameter for the model. We call $\beta$ the \textit{inverse temperature} of the system. Typically we will consider $\Omega$ consisting of random regular graphs at fixed $N$ subject to an additional constraint discussed subsequently, and investigate different values of $N$. Note that in principle when we say \textit{graph} we mean \textit{abstract graph}, but in practice abstract graphs are somewhat difficult to work with and simulations will use labelled graphs. The number of labellings of an abstract graph $\omega$ with $N$ vertices is $N!/|\text{Aut}(\omega)|$ where $\text{Aut}(\omega)$ is the automorphism group of $\omega$ and so at fixed $N$ we over-count each configuration in the partition function a fixed number of times unless global symmetries are present. Since a typical graph has no nontrivial automorphisms we have neglected this latter consideration. Note that since the Gromov-Hausdorff limit is characterised invariantly, there is in fact no strict need to consider abstract graphs.} The action $\mathcal{A}$ is a \textit{discrete Einstein-Hilbert action}, defined:
    \begin{equation}\label{equation: Action}
        \mathcal{A}(\omega)=-\sum_{e\in E(\omega)}\kappa_\omega(e)=-\frac{1}{2}\sum_{u\in V(\omega)}\sum_{v\in N_\omega(u)}\kappa_\omega(uv),
    \end{equation}
    where $N_\omega(u)$ denotes the set of neighbours of $u$ in $\omega$ and $\kappa_\omega(e)$ is the \textit{Ollivier curvature} of the edge $e\in E(\omega)$. A more complete presentation of the Ollivier curvature is given in appendix \ref{appendix: OllivierCurvature}, but for present purposes it is sufficient to recognise that the Ollivier curvature is a coarse version of the Ricci curvature in the following sense:  {consider two points $x$ and $y$ in a manifold $\mathcal{M}$ that are separated by a sufficiently small distance $\ell$, as well as the (unique) vector field $X$ parallel to the geodesic connecting $x$ and $y$.} Then we have:
    \begin{equation}
        \kappa_{\mathcal{M}}(e)\sim \ell^2\text{Ric}(X,X)+\mathcal{O}(\ell^3).
    \end{equation}
    In this way the Ollivier curvature represents a discretisation of the manifold Ricci curvature. The action \ref{equation: Action} thus corresponds to a discretisation of the (Euclidean) Einstein-Hilbert action as long as the edges incident to a vertex span the tangent space.
    
    The Ollivier curvature is discrete in that it takes values in the rational numbers $\mathbb{Q}$ and is also local in the following sense: let $\omega$ be a graph; for each edge $uv\in E(\omega)$, we have
    \begin{align}
        \kappa_\omega(uv)=\kappa_{C(uv)}(uv)
    \end{align}
    where $C(uv)\subseteq \omega$ is a subgraph of $\omega$ called a \textit{core neighbourhood} of $uv$. For our purposes it is sufficient to assume that $C(uv)$ is the induced subgraph of $\omega$ with the vertex set
    \begin{align}
        V(C(uv))=N_\omega(u)\cup N_\omega(v)\cup \pentagon(uv)
    \end{align}
    where $\pentagon(uv)$ is the set of non-neighbours of $u$ and $v$ that lie on a pentagon supported by the edge $uv$. Discreteness and locality are of course naively desirable properties for a quantised gravitational coupling.
    
    It turns out that for certain classes of graph the Ollivier curvature may be evaluated exactly. We shall need a little notation. 
    \begin{itemize}
        \item $\triangle_{uv}$ denotes the number of triangles supported on the edge $uv$.
        \item Consider the induced subgraph on the set $V(C(uv))/\set{uv}$. This will have $K$ connected components---each roughly corresponding to a cycle---labelled with the lower case letter $k$. We shall call these connnected components the \textit{components of the core neighbourhood}.
        \item $\square^k_w$ denotes the number of vertices in the $k$th component of the core neighbourhood that neighbour $w\in \set{u,v}$ such that the shortest cycle support by $uv$ that they lie on is a square.
        \item $\pentagon^k_w$ denotes the same as $\square^k_w$ for $w\in \set{u,v}$ except the shortest cycle is a pentagon instead of a square.
    \end{itemize}
    Using this notation we have the following expression for the Ollivier curvature in cubic graphs \cite{Kelly_Exact}:
    \begin{align}\label{equation: OllivCurvCubic}
          \kappa_\omega(uv)=\frac{1}{3}\triangle_{uv}-\frac{1}{3}\left[1-\triangle_{uv}-\sum_k\square^k_u\land \square^k_v\right]_+-\frac{1}{3}\left[1-\triangle_{uv}-\sum_k(\square^k_u+\pentagon^k_u)\land (\square^k_v+\pentagon^k_v)\right]_+
    \end{align}
    for each edge $uv\in E(\omega)$ where the sum over $k$ runs over the components of the core neighbourhood, $a\land b\coloneqq \inf\set{a,b}$ and $[a]_+\coloneqq \max(a,0)$ for any $a\in \mathbb{R}$. Note that the main property of $3$-regular graphs that allows this expression to be derived is the severe restriction on possible core neighbourhoods imposed by regularity of low degree. 
    
    We will not, in general, study the full configuration space of $3$-regular graphs, instead imposing additional kinematic constraints on $\Omega$. One particularly attractive constraint, derived in \cite{KellyEtAl} is the so-called \textit{independent short cycle condition}: we say that an edge $uv$ has \textit{independent short cycles} iff any two short (length less than $5$) cycles supported on the edge share no other edges. Graphs satisfying this condition admit an exact expression, independently of any additional constraints on the core neighbourhoods due to regularity. Furthermore the quantities appearing in the exact expression have an unambiguous interpretation in terms of numbers of short cycles. In particular the constraint ensures that
    \begin{align}
        \square^k_u=\square^k_v && \pentagon^k_u=\pentagon^k_v
    \end{align}
    for all $k$. Thus
    \begin{align}
        \sum_k\square^k_u\land \square^k_v=\square_{uv} && \sum_k(\square^k_u+\pentagon^k_u)\land (\square^k_v+\pentagon^k_v)=\square_{uv}+\pentagon_{uv}
    \end{align}
    where $\square_{uv}$ and $\pentagon_{uv}$ are the number of squares and pentagons supported on the edge $uv$ respectively. For $3$-regular graphs, we may thus express the curvature of an edge as
    \begin{align}\label{equation: OllivCurv_ISC}
        \kappa_\omega(uv)=\frac{1}{3}\triangle_{uv}-\frac{1}{3}\left[1-\triangle_{uv}-\square_{uv}\right]_+-\frac{1}{3}\left[1-\triangle_{uv}-\square_{uv}-\pentagon_{uv}\right]_+.
    \end{align}
    This expression for the curvature allows us to rewrite the action. Defining
    \begin{align}
        P=\set{uv\in E(\omega):\triangle_{uv}+\square_{uv}>1} && Q=\set{uv\in E(\omega):\triangle_{uv}+\square_{uv}+\pentagon_{uv}>1},
    \end{align}
    we have:
    \begin{subequations}
    \begin{align}
        \mathcal{A}(\omega)&=\mathcal{A}_{MF}+\mathcal{A}_{P}+\mathcal{A}_{Q}\\
        \mathcal{A}_{MF}&=N-3\triangle_\omega -\frac{8}{3}\square_\omega -\frac{5}{3}\pentagon_\omega\\
        \mathcal{A}_P&=-\frac{1}{3}\sum_{uv\in P}(1-\triangle_{uv}-\square_{uv})\\
        \mathcal{A}_Q&=-\frac{1}{3}\sum_{uv\in Q}(1-\triangle_{uv}-\square_{uv}-\pentagon_{uv}),
    \end{align}
    \end{subequations}
    where $\triangle_{\omega}$, $\square_\omega$ and $\pentagon_\omega$ denote the total numbers of triangles, squares and pentagons in the graph $\omega$ respectively. $\mathcal{A}_{MF}$ is determined by global quantities and thus represents a kind of mean field contribution to the action.
    
    The precise significance of the independent short cycle condition is not entirely clear. In \cite{KellyEtAl} it was a kind of `integrability' constraint insofar as it allows one to write the action rather explicitly. It also functions similarly to standard hard core constraints in statistical mechanics and prevents short cycles from `condensing' on an edge, though to a certain extent this is also guaranteed by regularity. For $3$-regular graphs, the main utility of the hard core condition appears to be the dynamical suppression of triangles it entails which we will see is an important requirement for the emergence of geometric structure in the model, though it appears that a somewhat weaker constraint is sufficient for this purpose. 
    
    Having expressions of the form \ref{equation: OllivCurvCubic} and \ref{equation: OllivCurv_ISC} for the Ollivier curvature permits the efficient running of simulations studying statistical models $(\Omega(N),\mathcal{A})$, where $\mathcal{A}$ is the Einstein-Hilbert action above and $\Omega(N)$ is the class of cubic graphs on $N$ vertices. (Note that since each $\omega\in \Omega(N)$ is regular with odd degree, $N$ must be even.) We use elementary Monte Carlo techniques \cite{NewmanBarkema}, evolving the graphs at each step via edge switches: given the random edges $uv$ and $xy$ in a graph $\omega$ such that $u\nsim x$ and $v\nsim y$, we construct a new graph $\tilde{\omega}$ by breaking the edges $uv$ and $xy$ and forming the edges $ux$ and $vy$, subject to any additional constraints on the configuration space that we may wish to impose.
    \subsection{The Relation to Euclidean Quantum Gravity}
     {As we have tried to make clear in the introduction, the main purpose of this paper is not to study the problem of (Euclidean) quantum gravity proper but the generation of flat geometries in a model of random graphs. We believe the analogy between the action \ref{equation: Action} and the Einstein-Hilbert action are sufficient to call this model combinatorial quantum gravity, as we have done previously. At the same time it would be false to claim that we have no pretensions to addressing Euclidean quantum gravity and for the purpose of clarity it seems desirable to try and explain the present position of our approach in relation to ordinary Euclidean quantum gravity.}
    
     {Clearly our hope is that there is some Ollivier curvature based action---which we expect to look rather like the action in equation \ref{equation: Action}---that will act as a discrete regularisation of the Einstein-Hilbert action on graphs $\omega$ which are sufficiently close to a manifold $\mathcal{M}$ in the sense of Gromov-Hausdorff. As mentioned in the introduction, precise convergence results for the Ollivier curvature are not yet known, and until they are we are somewhat restricted in the exact analysis of quantum gravity in general. The exception is the Ricci-flat sector where the convergence problem is absent---since the discrete and continuous Einstein-Hilbert actions vanish trivially. Thus on this sector Gromov-Hausdorff convergence alone guarantees agreement of our model with ordinary Euclidean quantum gravity.}
    
     {What are the prospects of a precise convergence result? There are essentially two required steps: first we need to show that the Ollivier curvature is respected by Gromov-Hausorff limits, and secondly we need to specify a discrete action in terms of the Ollivier curvature that converges to the Einstein-Hilbert action. Recently there has been major progress with regards to the first step.} In his original work, Ollivier \cite{Ollivier_RCMCMS} demonstrated the stability of the Ollivier curvature under Gromov-Hausdorff limits, in the sense that given a sequence of metric-measure spaces $X_n\rightarrow X$ and pairs of points $(x_n,y_n)\rightarrow (x,y)$ with $(x_n,y_n)\in X_n\times X_n$ we have $\kappa(x_n,y_n)\rightarrow \kappa(x,y)$. The convergence $X_n\rightarrow X$ is, however, Gromov-Hausdorff convergence \textit{augmented} by additional assumptions controlling the Wasserstein distance between the push-forward measures of the points under isometric imbeddings. The difficulty, of course, is that it is these auxiliiary assumptions which represent the foremost challenge in showing the convergence of Ollivier curvature in general.  {Much of this challenge has been addressed in the recent paper \cite{Hoorn_Convergence} which shows} that there is pointwise convergence of the Ollivier curvature---suitably rescaled---in random geometric graphs in arbitrary Riemannian manifolds. This is the first rigorous demonstration of the convergence of a notion of network curvature to its Riemannian counterpart known to the authors. One interesting feature is the need for two distinct length-scales that that are both sent to $0$ in the continuum limit: one is the microscopic `edge' scale defined by the threshold for connecting points obtained by a Poisson point process and the other is an effective curvature scale governing the rescaling of the Ollivier curvature and the radius of the unit balls used for comparison in the random geometric graph. From a gravitational perspective, the kind of point-wise convergence described in \cite{Hoorn_Convergence} must be augmented by showing that the limit is in fact generally covariant; on the other hand it is perhaps a stronger requirement than is physically necessary since one expects only physically meaningful quantities to converge in general. Nevertheless this result considerably expands the potential of discrete-curvature quantum gravity models.
    
     {Assuming that there is indeed a precise convergence result, we may turn to some general problems of discrete approaches to quantum gravity. One basic question is whether the partition function \ref{equation: StatisticalModel_PartitionFunction} gives well-defined dynamics in the $N\rightarrow \infty$ limit. This of course requires energy-entropy balance, i.e. that $\beta(N)\mathcal{A}$ and $S$ have the same $N$-dependence where $S$ is the entropy of the model. In practice we use the energy-entropy balance condition to fix the $N$-dependence of $\beta$; $\mathcal{A}$ grows as $N$ so we need to know the $N$-dependence of the entropy. This depends rather strongly on our choice of $\Omega$; in the present paper we consider $\Omega$ consisting of random regular graphs, and the number of such graphs on $N$-vertices is known. Specifically, we have \cite{Wormald_ModRRG} that the number of $d$-regular graphs on $N$-vertices is
    \begin{align}
        |\Omega_{N,d}|\sim \frac{(dN)!}{\left(\frac{1}{2}dN\right)!2^{\frac{1}{2}dN}(d!)^N}\exp\left(\frac{1-d^2}{4}+\mathcal{O}\left(\frac{1}{N}\right)\right).
    \end{align}
    Using the Stirling approximation, taking the logarithm and using its properties we thus get the following naive estimate for the entropy:
    \begin{align}
        S&\sim \frac{1}{2}dN\log N+\mathcal{O}(N)
    \end{align}
    We do not, however, simply consider random regular graphs, but instead random regular graphs satisfying an additional hard-core constraint. This constraint will have the effect of reducing the number of configurations and \textit{may} lead to corrections in the entropy, though these---if they exist---are hard to compute. Indeed, below we find numerically that for $N$ large, $\beta$ is in fact constant with $N$ suggesting that the independent short cycle condition leads to a logarithmic correction to the entropy, at least in the case of $3$-regular graphs.  We do not believe that this constant growth of $\beta$ is a generic feature of the model; instead we see it as a consequence of considering cubic graphs which, as we argue below, correspond to one-dimensional geometries and consequently a severely restricted set of action minimising configurations.}
    
    {Conceptually the issue of the $N$-dependence of $\beta$ is an expression of a well-known issue with local actions in discrete quantum gravity \cite{BenincasaDowker_ScalCurvCausSet} and can perhaps be interpreted as an expression of the nonlocality of the dimensionless action $\beta\mathcal{A}$. From a network theoretic perspective it is an expression of the infinite dimensionality of network phase transitions. To see how these perspectives relate, consider the standard square-lattice Ising model in $D$-dimensions. Such a model is finite-dimensional because the number of possible local interactions experienced by any bulk spin is fixed regardless of the system size. The average behaviour of these spins then only depends on the value of $\beta$, with the same average effect arising from the same value of $\beta$ for a bulk system whatever the system size. Since it is these local interactions which determine the presence or absence of long-range order in the system we may control the phase of the system with a $\beta$-independent of the system size. In a graph, where local interactions are modelled by edges, the number and type of possible interactions depends on the system size and so the parameter $\beta$ can only have the same average effect on a vertex as the system grows if $\beta$ also grows with the system size to compensate for the additional possible interactions.}
    
    {Of course, like any discrete approach to quantum gravity, we must decide whether discreteness is fundamental as in causal set theory, or simply a regularisation technique that may be removed as a cut-off is removed in line with the asymptotic safety scenario. Both points of view require the recovery of a more or less geometric scaling limit, which as argued in the introduction we interpret in terms of Gromov-Hausdorff convergence. We adhere to the---more conservative---latter attitude which further demands a continuous phase transition; the main result of this paper is that both of these aims can be achieved by our model in the Ricci flat sector, i.e. precisely where our model potentially agrees with quantum gravity.} 
    
    {Finally, let us briefly comment on some defects of the Euclidean approach to quantum gravity. One major well-known problem is that the continuum Euclidean action is not positive definite for $D>2$ since one may choose a metric with conformal mode undergoing arbitrarily fast variations. Such a problem is of course immediately removed upon discretisation but may reappear in the $N\rightarrow \infty$ limit. In the model discussed here---which recall is not quantum gravity---this problem does not arise firstly because the Ollivier curvature is bounded between $-2$ and $1$ for unweighted graphs and secondly because the independent short cycle condition effectively excludes positive curvature (negative action) geometries. More generally, as described in sections 1.8 and 1.9 of \cite{AGJL}, it is possible that the partition function is concentrated on configurations with bounded action near a non-Gaussian UV fixed point since the effective Euclidean action contains an entropic term coming from the number of configurations which share the same value of the action. Since our claims in this paper essentially amount to the existence of a UV fixed point, it seems quite possible that our approach may permit a similar escape from the problem of unboundedness.}
    
     {The biggest problem with our approach from the perspective of quantum gravity proper is the fundamentally Euclidean nature of the approach; in particular this refers to the absence of causal structure and difficulties related to giving sense to some notion of `Wick rotation'. A related issue is the unitarity of the resulting quantum theory. We have little concrete to say on these matters at present.}
    \section{The Classical Limit}\label{section: ClassicalLimit}
    Recall that the Gibbs distribution given a partition function \ref{equation: StatisticalModel_PartitionFunction} is
    \begin{align}
        p(\omega;\beta)=\frac{1}{\mathcal{Z}[\beta, \mathcal{A}]}\exp(-\beta \mathcal{A}(\omega)).
    \end{align}
    In the limit $\beta\rightarrow 0$, this becomes the uniform distribution on $\Omega$, leading to the standard model of random regular graphs; this model is characterised, in particular, by small world behaviour and sparse short cycles \cite{Wormald_ModRRG}. We shall call this limit the \textit{random phase}.
    
    The opposite limit $\beta\rightarrow\infty$ will be called the \textit{classical limit}. As is well known, in the classical limit the Gibbs distribution is concentrated about minima of the action, justifying the terminology. Heuristically, this conclusion is essentially an application of the Laplace method of approximation \cite{BenderOrszag}: suppose that $\omega_0\in \Omega$ is a minimum of the action, i.e. $\mathcal{A}(\omega_0)\leq \mathcal{A}(\omega)$ for all $\omega\in \Omega$. We shall denote the set of all such minima by $\Omega_0$. Then
    \begin{align}
        \frac{p(\omega;\beta)}{p(\omega_0;\beta)}=\exp(-\beta(\mathcal{A}(\omega)-\mathcal{A}(\omega_0))=\exp(-\beta|\mathcal{A}(\omega)-\mathcal{A}(\omega_0|)
    \end{align}
    for all $\omega\in \Omega$. That is to say, the ratio $p(\omega;\beta)/p(\omega_0;\beta)$ decays exponentially for $\omega\notin \Omega_0$ and contributions to the Gibbs distribution are (exponentially) concentrated about minima of the action as $\beta$ increases. In particular, taking the limit $\beta\rightarrow \infty$ sufficiently rapidly we see that:
    \begin{align}
        p(\omega;\infty)\coloneqq \lim_{\beta\rightarrow \infty}p(\omega;\beta)=\frac{1}{|\Omega_0|}\sum_{\omega_0\in \Omega_0}\delta_{\omega_0}(\omega)
    \end{align}
    where for any $\omega_0\in \Omega_0$, $\delta_{\omega_0}:\Omega\rightarrow [0,1]$ denotes the Dirac mass:
    \begin{align}
        \delta_{\omega_0}(\omega)=\left\{\begin{array}{rl}
            1, & \omega=\omega_0 \\
            0, & \omega\neq \omega_0
        \end{array}\right..
    \end{align}
    That is to say, the distribution $p(\omega;\infty)$ is supported on minima of the action as required; note that in the context of quantum gravity this suggests the identification $\beta \propto \hbar^{-1}$, since then the (semi)classical limit corresponds to the limit $\hbar\rightarrow 0$. We call $\Omega_0$ the \textit{classical phase} and call configurations $\omega_0\in \Omega_0$ \textit{classical} or \textit{tree-level} configurations.
    
    The purpose of this section is to study the classical limit of our model in the thermodynamic limit, i.e. as the number of points $N$ in the graphs goes to infinity. We find that as $N\rightarrow \infty$ the classical configurations converge to a limiting geometry described by a circle $S^1_r$ of some radius $r>0$. We begin with a classification of the possible classical configurations for given $N$ in section \ref{subsection: ClassicalConfigurations} before arguing that the limit of these configurations is $S^1_r$ in section \ref{subsection: LimitGeometry}. Together these conclusions constitute an argument that the classical limit of the present model is characterised by emergent one-dimensional geometric structure as long as the the continuum limit taken in section \ref{subsection: LimitGeometry} is justified.
    
    More precisely, in section \ref{subsection: ClassicalConfigurations} we show that we may easily define a model in which the classical configurations are either \textit{prism graphs} or \textit{M\"{o}bius ladders}---essentially discretisations of cylinders and M\"{o}bius strips respectively. In particular cubic graphs satisfying the independent short cycle condition have this property. In section \ref{subsection: LimitGeometry}, we then provide numerical evidence that the classical configurations are in fact one-dimensional as $N\rightarrow \infty$ by looking at the behaviour of the Hausdorff and spectral dimensions of the graphs observed. We obtain further incidental evidence for the one-dimensional nature of the limit by looking at the sequence $\mathfrak{o}(\omega_N)$ as $N\rightarrow \infty$, where $\omega_N$ is a classical configuration on $N$ vertices and
    \begin{align}
        \mathfrak{o}(\omega)=\square_\omega \mod 2
    \end{align}
    for any graph $\omega$. As shown below, this quantity appears to take on the value $1$ iff $\omega_N$ is not orientable (i.e. a M\"{o}bius ladder) and $0$ otherwise (i.e. for $\omega_N$ a prism graph). We see that $\mathfrak{o}(\omega_N)$ oscillates between $1$ and $0$ as we increase $N$ by $2$; recalling that a $D$-dimensional CW-complex is orientable iff its $D$-th homology group is $\mathbb{Z}$ and insofar as $\omega_N$ is simply a discretisation of either a M\"{o}bius strip or a cylinder, the divergence in $\mathfrak{o}(\omega_N)$ seems to indicate that the second (cellular) homology groups of the classical geometries corresponding to $\omega_N$ are unstable under the thermodynamic limit. This is of course to be expected if the limiting geometry has dimension one rather than two, and in this way the divergence provides circumstantial evidence for dimensional reduction in the thermodynamic limit. Finally we note that we may prove rigorously that a sequence of classical configurations in a configuration space with triangles excluded converges in the sense of Gromov-Hausdorff to $S^1_r$; a precise statement and proof of this result is given in appendix \ref{appendix: GromovHausdorffDistance}.
    \subsection{Classical Configurations and the Suppression of Triangles}\label{subsection: ClassicalConfigurations}
    
    \begin{figure}
        \centering
        \includegraphics[width=0.5\textwidth]{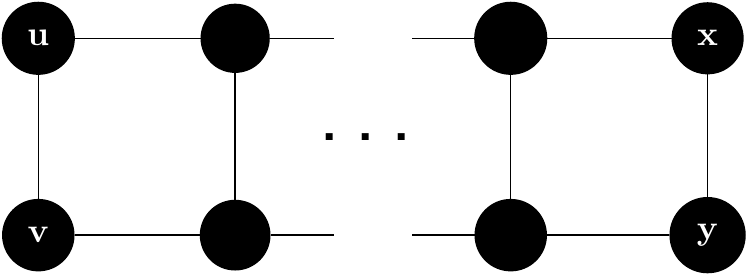}
        \caption{A chain of $n$ squares.}
        \label{figure: SquareChain}
    \end{figure}
    
    What are the classical configurations of the model? By the specification \ref{equation: Action}, action minimising configurations maximise the total curvature $\sum_{e\in E(\omega)}\kappa_\omega(e)$. In \cite{GraphCurvatureCalculator}, Cushing and collaborators have given a classification of positively curved cubic graphs which goes a long way towards a classification of $\Omega_0$. (Note that we say a graph is \textit{positively curved} iff all its edges have curvature at least zero.) Essentially we find that a positively curved graph is either a discrete cylinder or a discrete M\"{o}bius strip. More precisely:
    
    \begin{itemize}
        \item Consider a chain of $n$ squares, as in figure \ref{figure: SquareChain}. A \textit{prism graph} of length $n$, denoted $P_n$, is the graph obtained by identifying $u\cong x$ and $v\cong y$. 
        \item The \textit{M\"{o}bius ladder} of length $n$, denoted $M_n$, is obtained by gluing $u\cong y$ and $v\cong x$ in figure \ref{figure: SquareChain}.
    \end{itemize}
    
    With this terminology the classification of Cushing et al. \cite{GraphCurvatureCalculator} may be summarised as follows:
    \begin{itemize}
        \item If a $3$-regular graph $\omega$ has positive Ricci curvature for all vertices then it is either a prism graph $P_m$ for some $m\geq 3$ or a M\"{o}bius ladder $M_n$ for some $n\geq 2$.
        \item If $m=3$ the prism graph $P_m$ has edges $e\in E(P_m)$ with $\kappa_{P_m}(e)>0$. Otherwise $P_m$ is Ollivier-Ricci flat, i.e. $\kappa_{P_m}(e)=0$ for all $e\in E(P_m)$.
        \item Similarly if $n=2$ the M\"{o}bius ladder $M_n$ (which is also the complete graph on $4$-vertices $K_4$) has strictly positive curvature for each edge $e\in E(M_2)$. Otherwise $M_n$ is Ollivier-Ricci flat.
    \end{itemize}
    
    The key point to note is that for large $N$, a positively curved graph is Ollivier-Ricci flat and hence has total curvature zero. Moreover since the graphs in question have such an obvious geometric interpretation, it is tempting to see this as a model with a geometric classical phase. The issue, of course, is that one can imagine the situation where a graph has both positive and negative curvature edges with positive total curvature. Indeed such configurations may be constructed quite simply, and some examples are given in figure \ref{figure: TotalCurvatureTriangle}. Inspection of the expression \ref{equation: OllivCurvCubic} immediately shows that triangles necessarily appear in such configurations, and if we are to obtain a model with a classical phase $\Omega_0(N)=\Omega_0\cap \Omega_N=\set{P_N,M_N}$ where $\Omega_N$ denotes the class of $3$-regular graphs on $N$ vertices, it is sufficient to ensure that triangles are suppressed. This can of course be achieved by fiat---by restricting to configurations of girth greater than $3$ or bipartite graphs, for instance---but following \cite{KellyEtAl} we know that the suppression of triangles is a dynamical consequence of the model given the independent short cycle condition.
    
    To see this first note that $\mathcal{A}_{MF}$ is an extensive quantity, i.e. $\mathcal{A}_{MF}\sim N$. The scaling of $\mathcal{A}_P$ and $\mathcal{A}_Q$ depends on the behaviour of $|P|$ and $|Q|$ as $N\rightarrow \infty$. In \cite{KellyEtAl} it was argued that $\mathcal{A}_P$ and $\mathcal{A}_Q$ do play an important role as $\beta\rightarrow \infty$, but in the random phase $\beta\rightarrow 0$, short cycles are sparse and we may analyse the low $\beta$ dynamics by considering $\mathcal{A}_{MF}$ alone.
    
    \begin{figure}
        \centering
        \begin{subfigure}{0.45\textwidth}
			\centering
			\includegraphics[width=\textwidth]{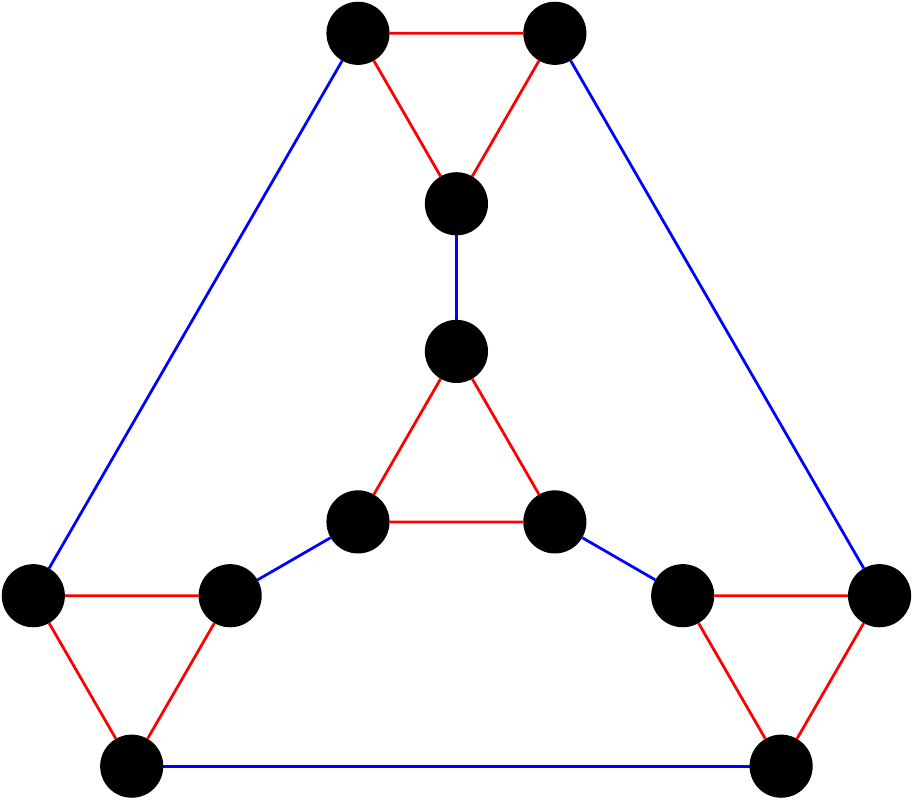}
			\subcaption{A cubic graph with vanishing total curvature and no squares.}
		\end{subfigure}
		\begin{subfigure}{0.45\textwidth}
			\centering
			\includegraphics[width=\textwidth]{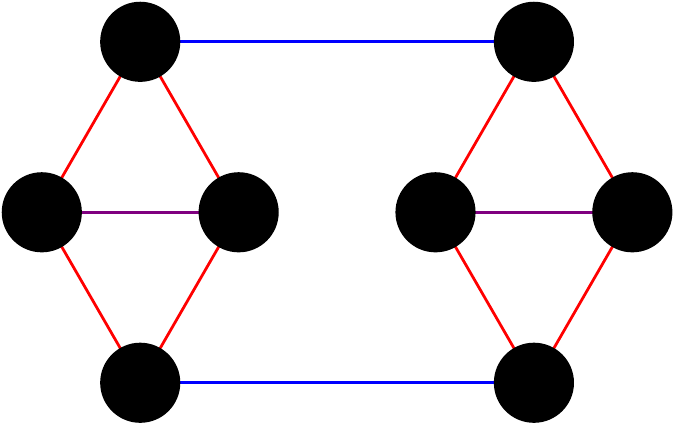}
			\subcaption{A cubic graph with total curvature $N/3>0$.}
		\end{subfigure}
        \caption{Cubic graphs with total curvature at least $0$ and with edges of strictly negative curvature. Both graphs are of a type that easily extends to larger numbers of vertices. Blue lines, red lines and purple lines have respectively curvatures of $-2/3$, $1/3$ and $2/3$.}
        \label{figure: TotalCurvatureTriangle}
    \end{figure}
    
    Naively we expect that $\mathcal{A}_{MF}$ is minimised by increasing the number of short cycles with the greatest effect coming from an increase in the number of triangles and the least effect coming from an increase in the number of pentagons. Indeed, an edge switch that converts a pentagon to a square leads to a reduction in the action of $1$, a pentagon to a triangle a reduction of $4/3$ and a switch from a square to a triangle a reduction of $1/3$. On these grounds we expect the number of triangles to \textit{increase} in the random phase. However there is an alternative way of looking at the independent short cycle constraint which leads to different conclusions. In particular the independent short cycle condition can be viewed as an excluded subgraph condition with the abstract graphs in figure \ref{figure: ISC} excluded. The key point to note is that the subgraphs \ref{figure: ISCa} and \ref{figure: ISCc} ensure that neither can an edge support two triangles nor can it support a triangle and a square. As such any triangle must share each of its edges with a pentagon and the gain in the action of $1/3$ that arises from the conversion of a triangle to a square is more than made up for by the potential loss of $3$ in the action arising from the conversion of each of the three pentagons to a square. Hence, due to `kinematic' constraints on the configuration space, triangles are strongly unfavoured in the local dynamics of the model in the random phase, and as such we expect them to be absent by the time the geometric phase is reached. On another note we expect pentagons to be similarly suppressed. Figure \ref{figure: Defect} indicates that these expectations are corroborated.
    
    \begin{figure}
	\centering
		\begin{subfigure}{0.3\textwidth}
			\centering
			\includegraphics[height=0.45\textwidth]{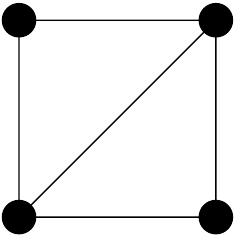}
			\subcaption{$(\triangle,\square)$}\label{figure: ISCa}
		\end{subfigure}
		\begin{subfigure}{0.3\textwidth}
			\centering
			\includegraphics[height=0.45\textwidth]{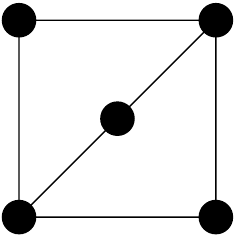}
			\subcaption{$(\square,\square)$}\label{figure: ISCb}
		\end{subfigure}
		\begin{subfigure}{0.3\textwidth}
			\centering
			\includegraphics[height=0.45\textwidth]{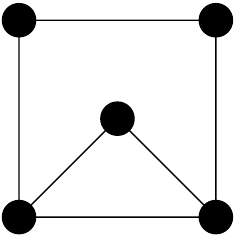}
			\subcaption{$(\triangle,\pentagon),\:(\square,\pentagon)$}\label{figure: ISCc}
		\end{subfigure}
		\begin{subfigure}{0.3\textwidth}
			\centering
			\includegraphics[height=0.45\textwidth]{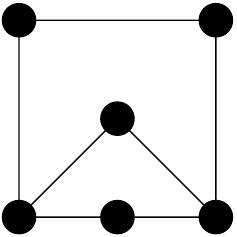}
			\subcaption{$(\square,\pentagon),\:(\pentagon,\pentagon)$}\label{figure: ISCd}
		\end{subfigure}
		\begin{subfigure}{0.3\textwidth}
			\centering
			\includegraphics[height=0.45\textwidth]{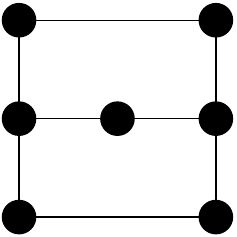}
			\subcaption{$(\pentagon,\pentagon)$}\label{figure: ISCe}
		\end{subfigure}
		\caption{Excluded subgraphs characterising graphs satisfying the independent short cycle condition. Pairs labelling the subgraphs indicate combinations of short cycles sharing more than a single edge excluded due to the subgraph in question.}\label{figure: ISC}
	\end{figure}
    
    Figure \ref{figure: ClassicalConfigs} shows examples of classical configurations observed following simulations run in a configuration space consisting of graphs satisfying the independent short cycle condition. In the simulations we looked at an exponential sweep of different values of $\beta$ from $-10\leq \log (\beta)\leq 10$ and allowed the graph to `thermalise' for $3N/2=|E(\omega)|$ sweeps at each value of $\beta$, where each sweep consists of $3N/2$ attempted Monte Carlo updates, i.e. one attempted update per edge on average. As desired one obtains both prism graphs and M\"{o}bius ladders indicating that we have an effective model in which the classical configurations are $\Omega_0(N)=\set{P_N,M_N}$ with high probability. In fact, for smaller configurations (each possible value of $N$, from $N=20$ to $N=50$) we have studied the appearance of prism graphs and M\"{o}bius ladders systematically and have not found a single exception to one of these configurations appearing in 100 runs for each graph size. Larger graph sizes have not been studied systematically, but again we have not observed a single configuration $\omega$ at $\beta=\exp(-10)$ that is neither a prism graph nor a M\"{o}bius ladder, over the course of several runs of each graph size from $N=100$ to $N=500$ at intervals of $50$ and then from $N=500$ to $N=1000$ at intervals of $100$.
    
    \begin{figure}
        \centering
        \includegraphics[width=0.7\textwidth]{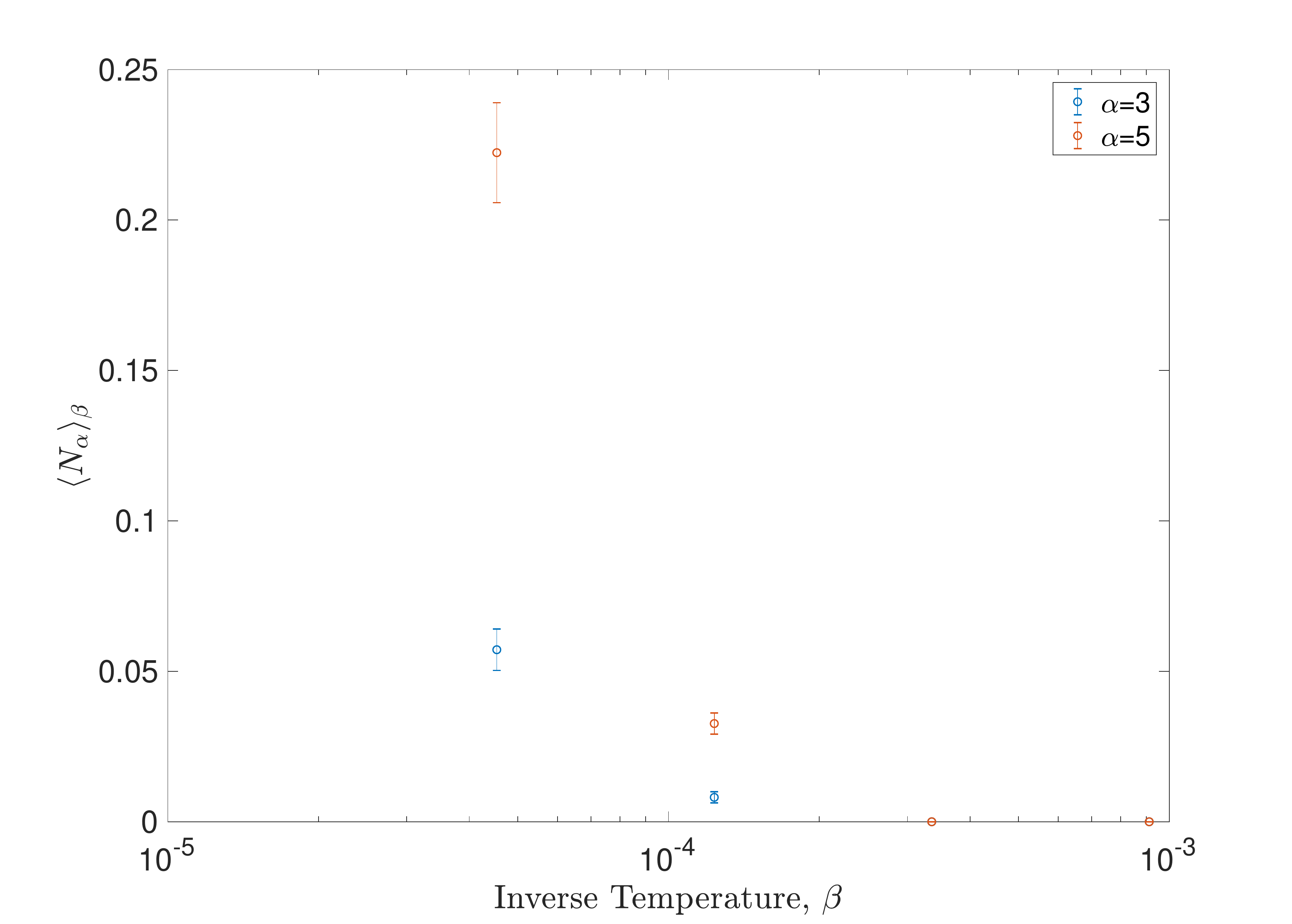}
        \caption{$\braket{N_\alpha}_\beta$ denotes the expected number of cycles of length $\alpha\in \set{3,5}$ at a given value of $\beta$ in a graph with $N=500$ vertices. As can be seen, odd short cycles are totally suppressed at small values of $\beta$.}
        \label{figure: Defect}
    \end{figure}
    
    \begin{figure}
        \centering
        \begin{subfigure}{0.45\textwidth}
			\centering
			\includegraphics[width=\textwidth]{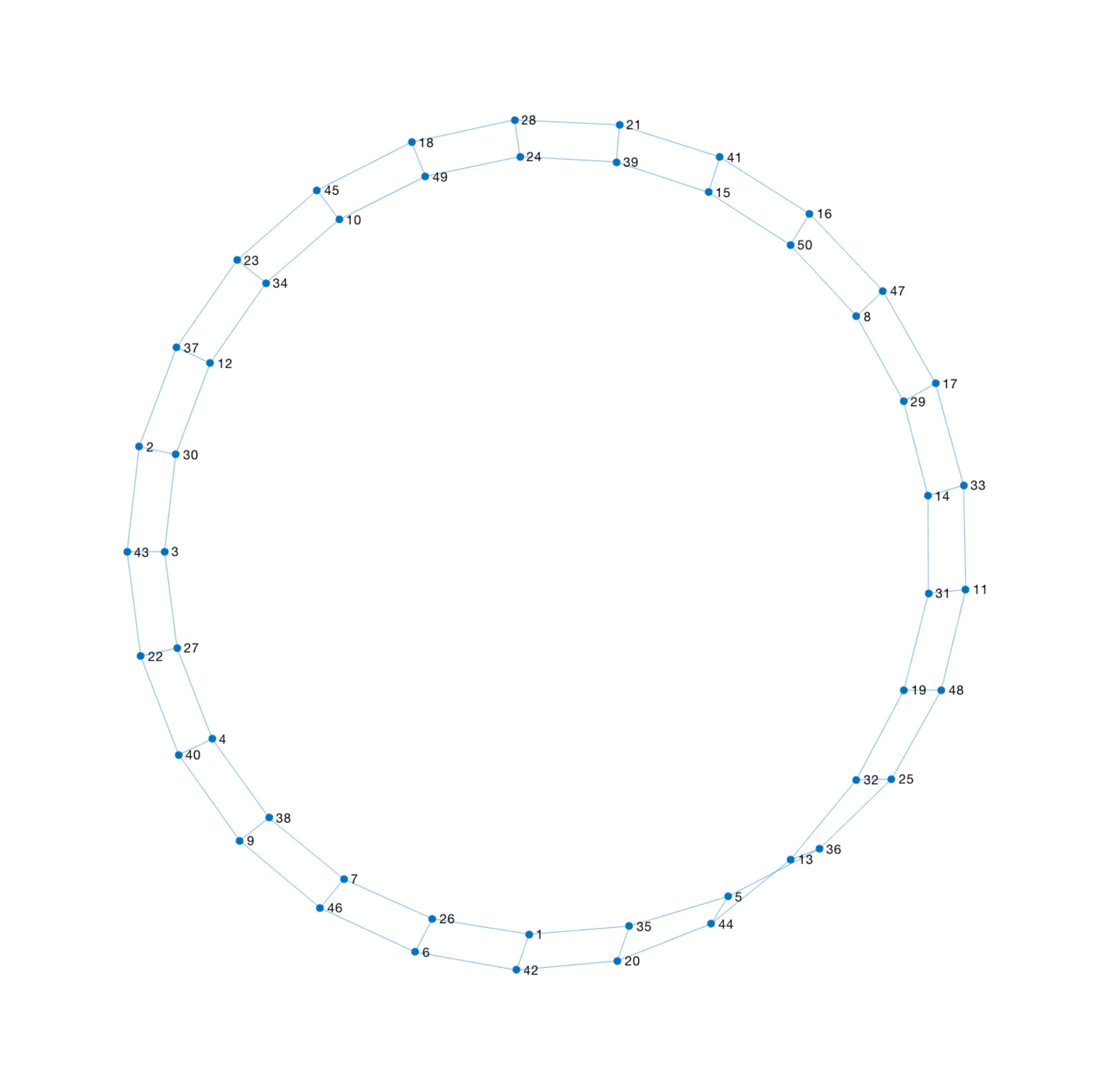}
			\subcaption{Mobi\"{u}s ladder on 50 vertices.}
		\end{subfigure}
		\begin{subfigure}{0.45\textwidth}
			\centering
			\includegraphics[width=\textwidth]{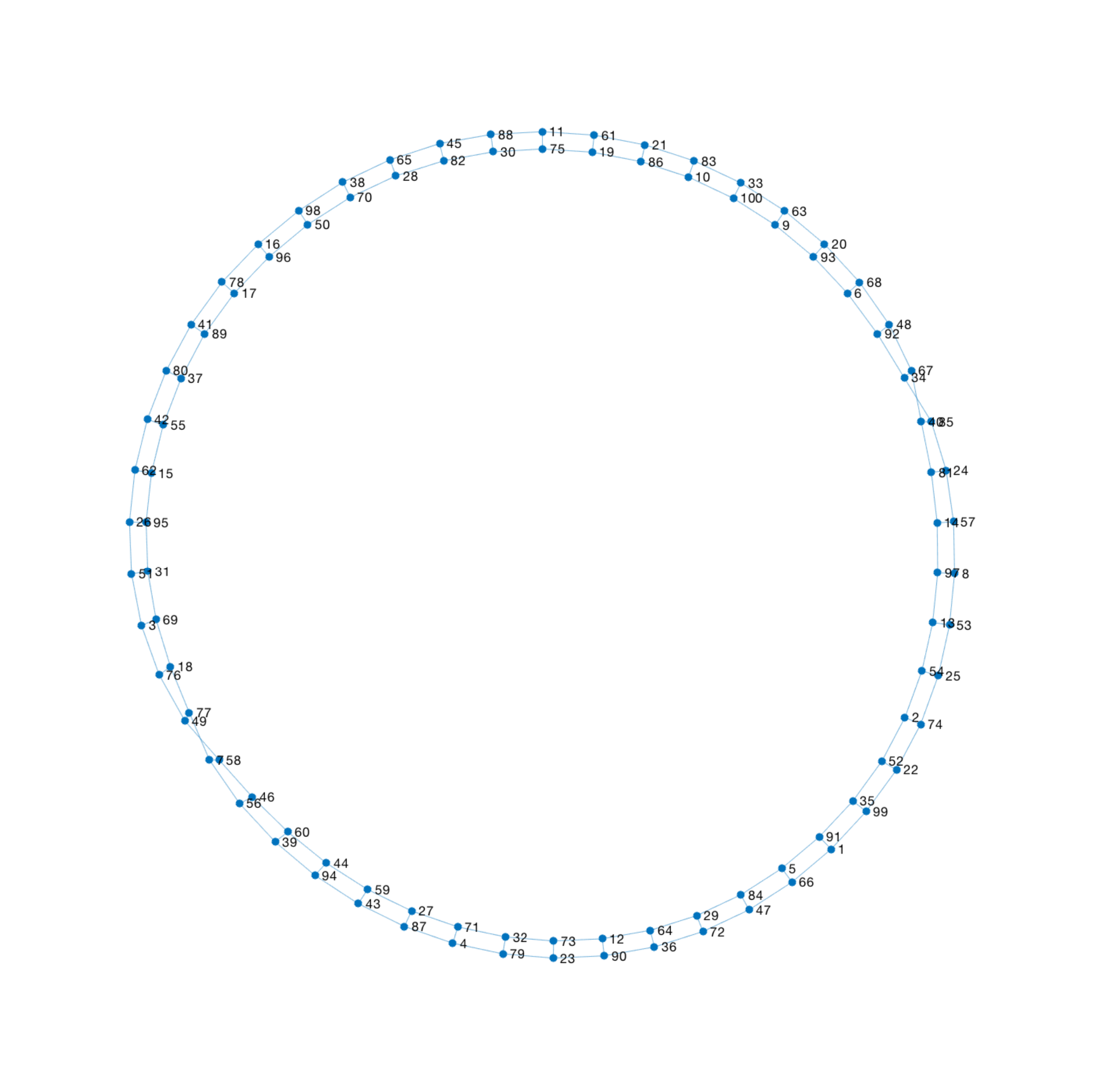}
			\subcaption{Prism graph on 100 vertices.}
		\end{subfigure}
        \caption{Observed classical configurations for $N=50$ and $N=100$ after running simulations in a configuration space consisting of $3$-regular graphs satisfying the hard core condition. As expected we have a prism graph and a M\"{o}bius ladder. Larger configurations also follow this pattern.}
        \label{figure: ClassicalConfigs}
    \end{figure}

    \subsection{Classical Configurations in the Thermodynamic Limit}\label{subsection: LimitGeometry}
    
    \begin{figure}
        \centering
        \includegraphics[width=0.7\textwidth]{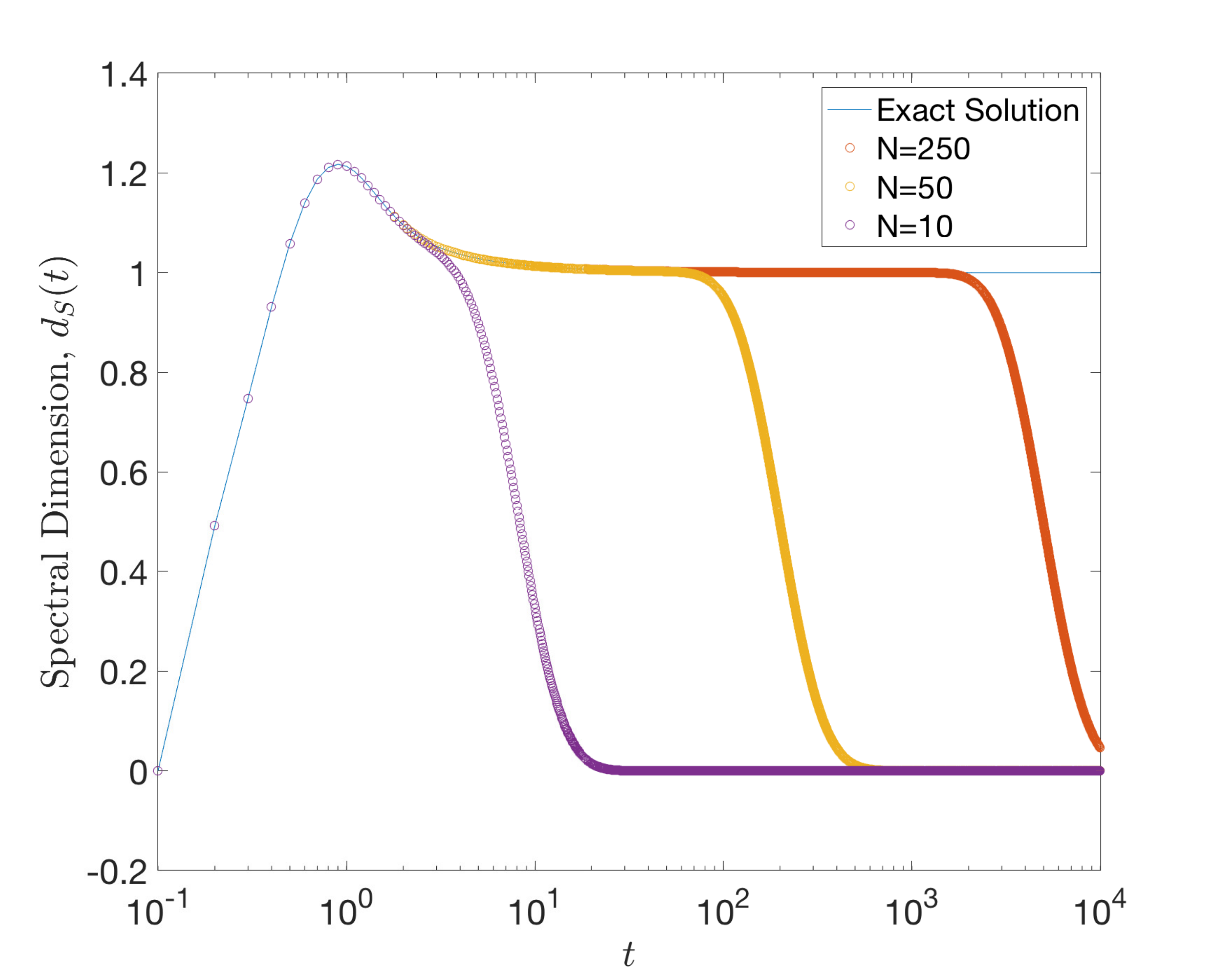}
        \caption{Spectral dimension for toroidal lattice graphs in $1$-dimension, i.e. square lattice graphs satisfying periodic boundary conditions. In $1$-dimension such graphs are simply cycles of length $N$.}
        \label{figure: SpectralDimension_Toroid}
    \end{figure}
    
    In the preceding section we provided strong numerical evidence that given a statistical model with configuration space $\Omega$ consisting of $3$-regular cubic graphs with independent short cycles, the classical phase $\Omega_0$ would consist overwhelmingly of prism graphs and M\"{o}bius ladders. This was to be expected: it can be proven that the classical phase consists of these graphs if we restrict to graphs without triangles while a strong heuristic argument and numerical evidence both point to this constraint effectively arising due to the low $\beta$ dynamics. We now consider the thermodynamic limit $N\rightarrow \infty$ of the classical configurations, given that $\omega_N\in \Omega_0(N)=\set{P_n,M_m}$, $N=2n$. We show that in the thermodynamic limit, the classical configurations are effectively one-dimensional and argue that the correct limiting geometry should in fact be $S^1$. It turns out that thus intuition can be rigorously formulated in terms of so-called \textit{Gromov-Hausdorff limits} as we prove in appendix \ref{appendix: GromovHausdorffDistance}. 
    
    \begin{figure}
        \centering
        \begin{subfigure}{0.7\textwidth}
			\centering
			\includegraphics[width=\textwidth]{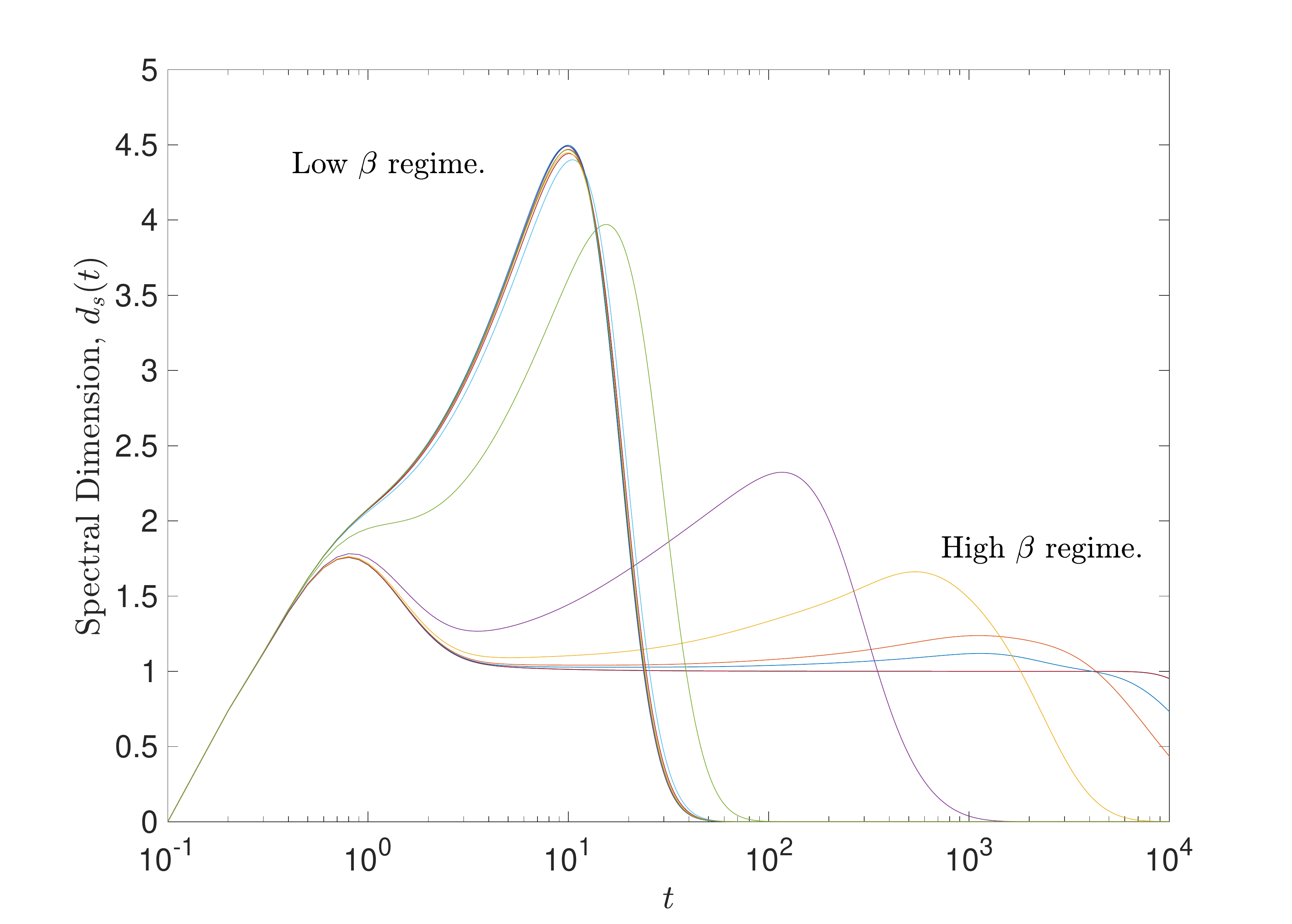}
			\subcaption{$d_s(t)$ for graphs with $N=1000$ at various values of $\beta$.}\label{subfigure: SpectralDimension_ClassicalConfiguration}
		\end{subfigure}
		\begin{subfigure}{0.7\textwidth}
			\centering
			\includegraphics[width=\textwidth]{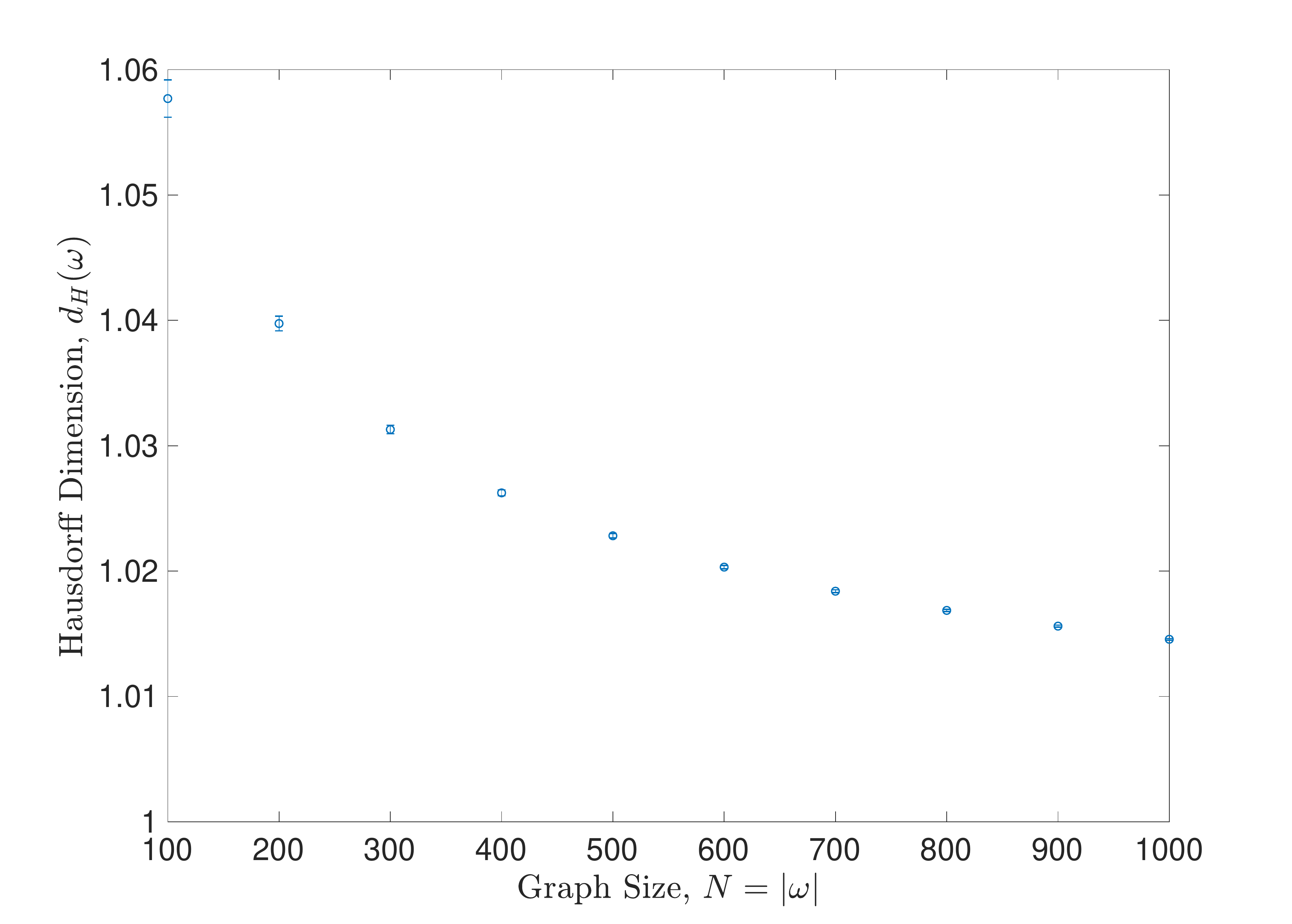}
			\subcaption{$d_H(\omega)$ for graphs $\omega$ of various sizes at $\beta=\exp(10)$.}\label{subfigure: HausdorffDimension}
		\end{subfigure}
        \caption{Spectral and Hausdorff dimension of $3$-regular graphs with independent short cycles. Both plots indicate that the dimension of observed classical configurations is approximately $1$.}
        \label{figure: Dimension}
    \end{figure}
    
    The first point to note is that dimensional evidence in the model already considered of $3$-regular graphs with independent short cycles suggests that the limiting geometry as $N\rightarrow \infty$ is one-dimensional. Dimensional data specifically refers to numerical data on the \textit{spectral} and \textit{Hausdorff} dimensions $d_s(\omega)$ and $d_H(\omega)$ of the graphs in question \cite{Carlip_Dimension,OritiCalcagniThurigen_SpectralDimension,Dunne_SpectralDim}; heuristically speaking, these quantities respectively define the dimension experienced by a particle diffusing in a space and the dimension governing the scaling of volume with distance in the space. Moreover both spectral and Hausdorff dimensions have become important quantities in quantum gravity: c.f. e.g. \cite{GurauRyan_MelonsBranchedPolymers,AmbjornWatabiki_ScalingQG,AmbjornEtAl_SD2dQG,ADJ_SummingGenera,JonssonWheater_SDBP,DurhuusJonssonWheater_SDCDT,OritiCalcagniThurigen_SpectralDimension, Carlip_Dimension,SteinhausThurigen_Emergence,EcksteinTrzesniewski_SD}. For our purposes, their present significance lies in the fact that we find $d_s(\omega)\sim d_H(\omega)\sim 1$ for classical configurations $\omega\in \Omega_0$ as $N\rightarrow \infty$. Figure \ref{figure: Dimension} shows the relevant plots. We shall briefly describe how the spectral and Hausdorff dimensions are defined here.
    
    The spectral dimension is defined for spaces equipped with a Laplacian $L$ and controls the scaling properties of the eigenvalues of the Laplacian $\lambda\in \sigma(L)$; note that $\sigma(L)$ is the spectrum of the Laplacian which is assumed to be a finite-dimensional operator. Indeed, defining the heat kernel trace as the function
    \begin{align}
        K(t)=\text{tr}\exp(-Lt) = \sum_{\lambda\in \sigma(L)}\exp(-\lambda t)
    \end{align}
    for $t$ in some open subset of $\mathbb{R}$, we may define the \textit{spectral dimension function}
    \begin{align}
        d_s(t)=-2\frac{\text{d} \log K(t)}{\text{d} \log t}.
    \end{align}
    The spectral dimension of the space on which the Laplacian is defined is then defined as $d_s(t)$ in some limit of $t$ or for some range of values of $t$. For Riemannian manifolds $\mathcal{M}$, for instance, we are interested in the small $t$ asymptotics where it can be shown that 
    \begin{align}
        K(t)\sim t^{-\frac{1}{2}D}
    \end{align}
    where $D=\dim(\mathcal{M})$. Thus if we define
    \begin{align}
        d_s(\mathcal{M})=\lim_{t\rightarrow 0}d_s(t)
    \end{align}
    we find $d_s(\mathcal{M})=D$. 
    
    For our purposes, we shall calculate the heat kernel trace $K(t)$ taking $L$ as the standard graph Laplacian: 
    \begin{align}
        L=D-A
    \end{align}
    where $D$ is an $N\times N$-dimensional diagonal matrix with the diagonal given by the graph degree sequence and $A$ is the adjacency matrix. $L$ is a finite symmetric matrix; its spectrum $\sigma(L)$ thus consists of the $N$-diagonal components of the diagonalisation of $L$. This ensures that $K(t)\rightarrow N$ as $t\rightarrow 0$ and $K(t)\rightarrow 0$ as $t\rightarrow \infty$; defining the spectral dimension as either the small or large $t$ limit of $d_s(t)$ thus gives $0$ trivially and for graphs such a prescription fails to specify an informative quantity of the space. Nonetheless it is possible to find a reasonable interpretation of the spectral dimension for discrete spaces by finding regions of the value $t$ for which the spectral dimension function $d_s(t)$ plateaus. The case of the torus is an instructive example: figure \ref{figure: SpectralDimension_Toroid} shows the characteristic behaviour of the spectral dimension function for our purposes viz. vanishing asymptotics, a peak due to discreteness effects and a plateau at the expected integer dimension. Similar behaviour is displayed by high $\beta$ configurations in figure \ref{subfigure: SpectralDimension_ClassicalConfiguration} indicating that $d_s(\omega)\approx1$ for classical configurations $\omega\in \Omega_0$.
    
    Alternatively, the Hausdorff dimension of a metric space equipped with some background measure effectively governs the volume growth of open balls in the space. The assumption is that for large $r$ the volume scales as
    \begin{align}
        \text{vol}(B_r(x))\sim r^{d_H}.
    \end{align}
    The Hausdorff dimension is then typically defined:
    \begin{align}
        d_H(\omega)=\lim_{r\rightarrow\infty}\frac{\log \text{vol}(B_r(x))}{\log r}
    \end{align}
    In a graph equipped with the shortest path metric, we take the background measure to be the counting measure and the volume of a ball of radius $r$ centred at a vertex $u$ is simply the number of points within a distance $r$ of $u$. For an unweighted graph this is simply all the points that can be reached from $u$ by a path of length at most $\lfloor r\rfloor$, the largest positive integer strictly less than $r$. If the graph is connected the diameter is finite and $B_r(x)=B_R(x)$ for all $r\geq R$ where $R$ is the diameter of the graph. Thus if we take the above definition $d_H(\omega)=0$ trivially for connected graphs and like the asymptotic definitions of the spectral dimension, the Hausdorff dimension so defined fails to be a useful measure of dimension for discrete spaces. As such we take the modified definition:
    \begin{align}\label{equation: HD}
        d_H(\omega)=\lim_{r\rightarrow \text{diam}(\omega)}\frac{\log |B_r(u)|}{\log r}
    \end{align}
    where $u\in V(\omega)$ and we assume that the diameter is large; the point is that if $|B_r(u)|=\alpha r^{d_H}$, then
    \begin{align}\label{equation: HDEstimate}
        \log |B_r(u)|=d_H\log r+\log(\alpha),
    \end{align}
    and $\log(\alpha)/\log(r)$ becomes negligible as $r\rightarrow R$ if the diameter $R$ is sufficiently large. In practice, we will estimate $d_H(\omega)$ according to equation \ref{equation: HDEstimate}, that is by taking the gradient of a plot of $\log |B_r(u)|$ against $\log r$, and then take the mean of this estimate over all vertices. This estimate is valid as long as $|B_r(u)|\sim r^{d_H}$ for most values of $r$ in the relevant range, i.e. as long as the plot of $\log |B_r(u)|$ against $\log r$ is approximately linear; in fact, for such graphs this approach extends the definition \ref{equation: HD} since it remains valid even if the diameter is small with respect to $\alpha$. Figure \ref{figure: HDEstimate} is a typical example of a plot of $\log|B_r(u)|$ against $\log r$, indicating that our estimation procedure for $d_H$ is indeed valid. Calculated values of this estimate for classical configurations of various sizes are given in \ref{subfigure: HausdorffDimension} and show that the Hausdorff dimension of these configurations is approximately $1$ for large $N$.
    
    \begin{figure}
        \centering
        \includegraphics[width=0.7\textwidth]{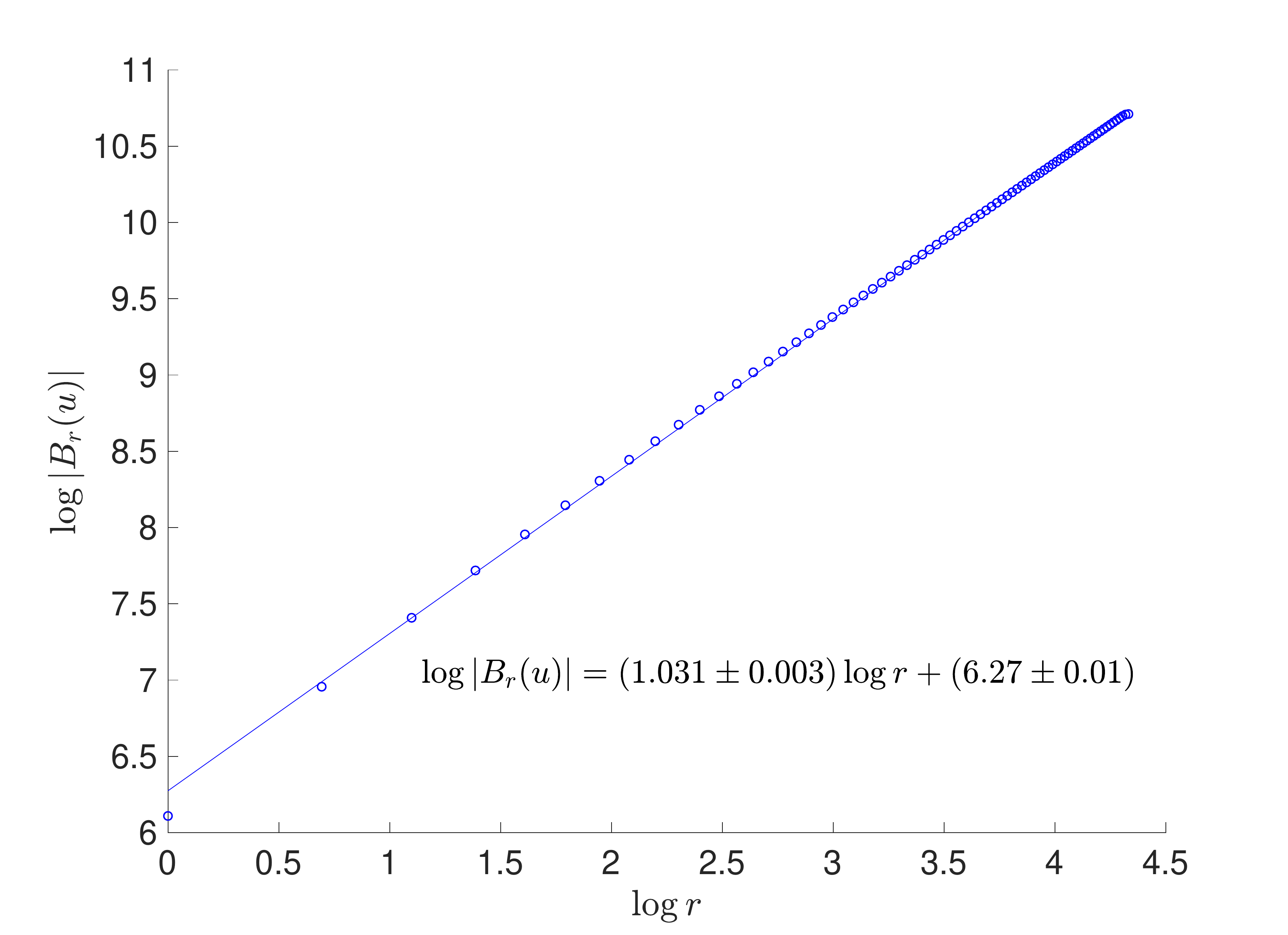}
        \caption{The volume growth of ball $B_r(u)$ centred at a vertex $u$ in a graph with $300$ vertices as the radius $r$ is varied. The $\log$-$\log$ relationship is approximately linear.}
        \label{figure: HDEstimate}
    \end{figure}
    
    As mentioned above we have additional---somewhat circumstantial---evidence for the one-dimensionality of the limiting geometry coming from the apparent instability of the second cellular homology group of our classical geometries in the thermodynamic limit. The key point is that the classical configurations are decidedly not choosing their topology at random. The situation is somewhat puzzling: Ollivier curvature is a local quantity and should not be able to distinguish between the M\"{o}bius ladder and the prism graph of the same size; as such there seems to be nothing in the action to distinguish the two possible classical configurations and from this perspective we perhaps expect M\"{o}bius ladders and prism graphs to appear in equal number. The numerical evidence, however, belies this expectation. We have the following anecdotal evidence:
    \begin{itemize}
        \item Every run of graphs of sizes $N=2n\in \set{100,200,300,400,500,600,700,800,900,1000}$ resulted in a prism graph. 
        \item Every run of graphs of sizes $N=2n\in \set{150,250,350,450}$ resulted in a M\"{o}bius ladder.
    \end{itemize}
    Precise numbers of runs in the above vary from a few to 15 runs for $N=100$; the data is not systematic but already makes the possibility of random choice between topologies highly implausible. We also see both prism graphs and M\"{o}bius ladders for fairly large graphs (about $400$ points) and the possibility of topology choice being a finite size effect seems slight. With these points in mind it is worth considering a more systematic study of the orientability of classical configurations obtained in simulations of small graphs: figure \ref{figure: Orientability} shows the average value of the quantity
    \begin{align}
        s(\omega)=\left\{\begin{array}{rl}
            1, & \omega\text{ orientable} \\
            -1, & \text{otherwise}
        \end{array}\right.\nonumber
    \end{align}
    where $\omega$ is an observed classical configuration over 100 runs of the code for graphs of size $N=2n\in \set{20,22,...,48,50}$. As can be seen the classical configuration appears to alternate between orientable and nonorientable results as two vertices---or one square---are added to the graph. In particular noting that $n=\square_\omega$ we see that the quantity 
    \begin{align}
        \mathfrak{o}(\omega)\coloneqq \square_\omega \mod 2
    \end{align}
    appears to contain crucial topological information about the tree-level configurations that actually appear in the simulations: $\omega$ is orientable if $\mathfrak{o}(\omega)=0$ and nonorientable otherwise. It would be curious to see if this can be explained. It is tempting to see some relation to the fact that the diagram
    \begin{equation}
        \begin{tikzcd}
        \Omega_0\arrow{r}{\mathfrak{o}}\arrow[swap]{d}{\pi} & \mathbb{Z}/2\mathbb{Z}\\
        \text{Vect}_1(S^1)\arrow[swap]{ur}{w_1} & 
        \end{tikzcd}
    \end{equation}
    commutes, where $\Omega_0$ is the set of classical configurations, $\text{Vect}_1(S^1)$ the set of line bundles over $S^1$ and $w_1$ the first Stieffel-Whitney class; it is well-known that, in particular, $w_1$ is a bijection and $\text{Vect}_1(S^1)$ consists of a cylinder and a M\"{o}bius band (the possible classical geometries observed in this model) so $\pi$ is to be interpreted as a mapping that sends prism graphs to cylinders and M\"{o}bius ladders to M\"{o}bius strips. 
    
    \begin{figure}
        \centering
        \includegraphics[width=0.7\textwidth]{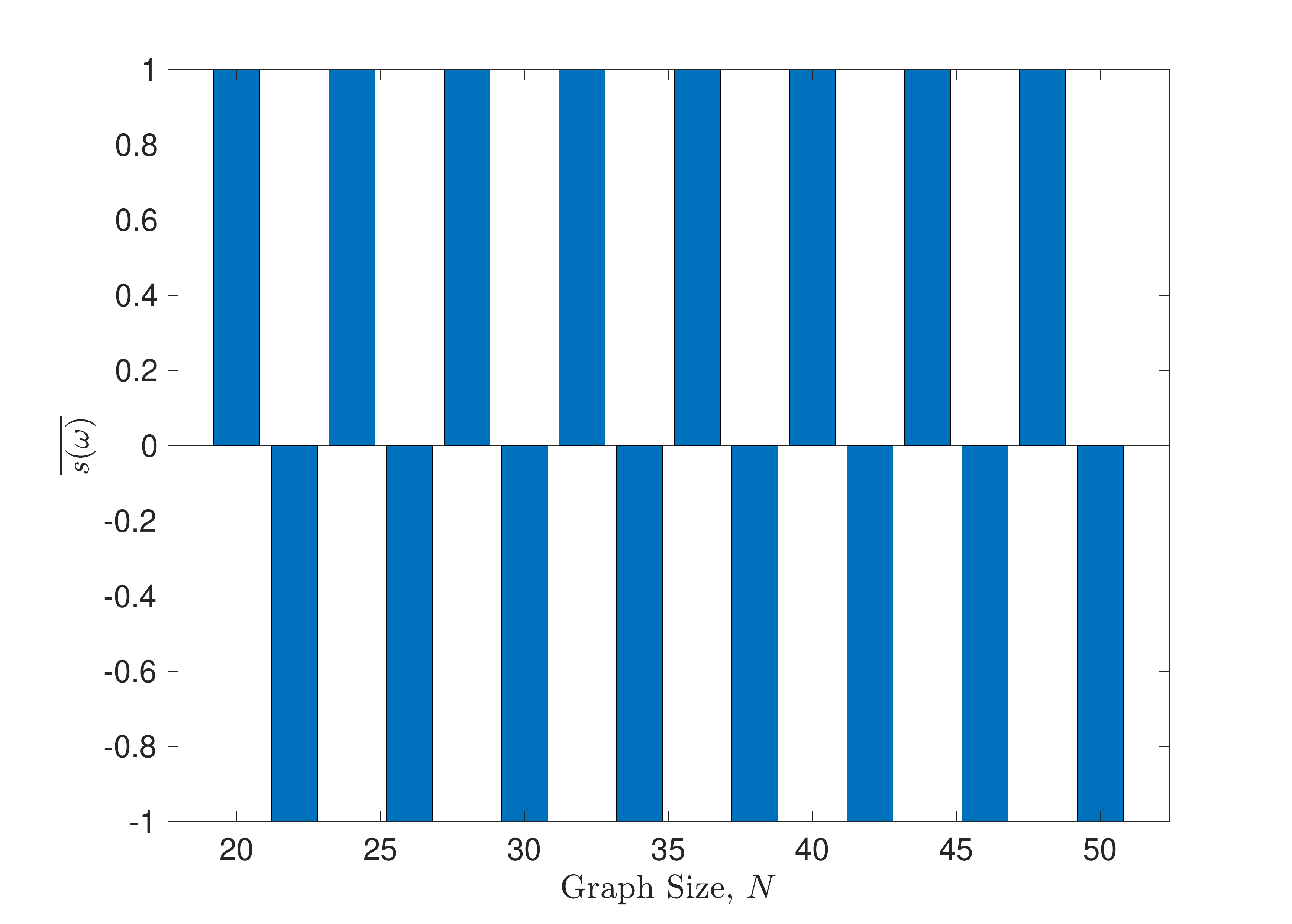}
        \caption{Mean value of $s(\omega)$ evaluated for observed classical configurations $\omega$, averaged over $100$ runs of the code for each shown graph size.}
        \label{figure: Orientability}
    \end{figure}
    
    Given, then, that our classical configurations are converging to a one-dimensional space, what is the space and in what sense are they converging to that space? The limiting geometry is presumably (hopefully) a connected manifold $\mathcal{M}$. Again, as is well-known this means we have (up to homeomorphism) four possibilities \cite{1manifold}:
    \begin{enumerate}
        \item if $\mathcal{M}$ is noncompact and without boundary then $\mathcal{M}=\mathbb{R}$.
        \item if $\mathcal{M}$ is noncompact and has a nonvoid boundary then $\mathcal{M}=\mathbb{R}_+$.
        \item if $\mathcal{M}$ is closed, i.e. compact and without boundary then $\mathcal{M}=S^1$.
        \item if $\mathcal{M}$ is compact with boundary then $\mathcal{M}=[0,1]$.
    \end{enumerate}
    Note that $1$-manifolds with boundary have a privileged set of points corresponding to the boundaries: $\set{0,1}=\partial [0,1]$ and $\set{0}=\partial \mathbb{R}_+$. If the limiting geometry were to be one of these manifolds with boundary, then, the classical configurations would also presumably have to be endowed with a privileged set of vertices that converge to boundary points. This cannot be the case insofar as we regard our classical configurations as abstract graphs and the limiting geometries are without boundary i.e. either $S^1$ or $\mathbb{R}$ respectively. 
    
    Naively, since the geometry in question is the limit of a discrete space it is rather more natural to assume that the limit is compact, but in fact it seems clear that the precise geometry obtained in the limit depends on the way that the limit is taken. Suppose we have a cylinder $[0,\ell]\times S^1_r$, where $\ell>0$ and $S^1_r$ is the circle of radius $r>0$. We may suppose that the cylinder is discretised in terms of a prism graph $P_n$ with the radius
    \begin{align}\label{equation: Radius}
        r=\frac{n\ell}{2\pi}.
    \end{align}
    Then we have effectively weighted each edge of the prism graph with an edge length $\ell$. \textit{A priori}, we have two natural choices: keep $\ell$ fixed and let $r$ diverge as $n\rightarrow \infty$, or alternatively keep $r$ fixed and let $\ell\rightarrow 0$ as $n$ increases. The former corresponds to the non-compact limit $[0,1]\times \mathbb{R}$ which is two-dimensional and ruled out by the arguments above. The latter, on the other hand, corresponds to a one-dimensional compact limit $S^1_r$; it is also heuristically more in line with the idea that $P_n$ is a lattice regularisation of the underlying geometry $[0,\ell]\times S^1_r$. In particular, insofar as $\ell$ is a lattice cutoff we wish to take $\ell\rightarrow 0$ as $n\rightarrow \infty$. Roughly speaking, the limiting geometry is simply defined as the limit of cylinders $[0,\ell]\times S^1_r$ as the width $\ell\rightarrow 0$, which is of course simply the circle $S^1_r$. Similar considerations hold for M\"{o}bius strips.
    
    These heuristic considerations on the limiting geometry have a rigorous formulation in terms of \textit{Gromov-Hausdorff limits}. The fundamental idea of `metric sociology' as Gromov termed it, is that there is a notion of distance between compact metric spaces, that gives us a precise notion of what it means for one compact metric space to converge to another. This is a particularly attractive framework for studying the problem of emergent geometry in a Euclidean context because every compact Riemannian manifold can be obtained as the Gromov-Hausdorff limit of some sequence of graphs \cite{BuragoBuragoIvanov_MetricGeometry}. We introduce the Gromov-Hausdorff distance in appendix \ref{appendix: GromovHausdorffDistance} and show that the classical configurations and geometries discussed here each converge to $S^1_r$ as $N\rightarrow \infty$.
    \section{Phase Structure}
    In the preceding section we investigated the thermodynamic limit of the classical configurations of our model and found that the statistical models under investigation gave rise to the limiting geometry $S^1$. In physical terms, we in fact took a joint thermodynamic-continuum limit
    \begin{align}
        N\rightarrow \infty && \ell\rightarrow 0
    \end{align}
    keeping the quantity $r=N \ell/2\pi$ constant, with the edge length $\ell$ a UV cutoff. The factor of $2\pi$ is simply a choice of normalisation which ensures that the constant quantity is the radius of the limiting circle and can be dropped without loss of generality. More generally, we are interested in taking a joint thermodynamic and continuum limit $N\rightarrow \infty$, $\ell(N)\rightarrow 0$, while preserving some fixed length scale $\ell_0$; noting that $N$ is essentially a volume measure, the correct generalisation of the expression \ref{equation: Radius} to arbitrary dimensions $D$ is:
    \begin{align}\label{equation: ThermoContinuum}
        \ell_0=\ell(N)N^{\frac{1}{D}}.
    \end{align}
    We expect $N$ to be large so the physical problem is the justification of the expression \ref{equation: ThermoContinuum}.
    
    Our acquaintance with ordinary statistical mechanics suggests that such an expression holds as $\beta\rightarrow \beta_c$ where $\beta_c$ is some finite number at which the system in question undergoes a continuous phase transition. In models of quantum geometry one expects quantum fluctuations about classical spacetime geometries in the limit $\beta\rightarrow \beta_c$ and the extent to which quantum geometries preserve the smooth structures characteristic of manifolds is not clear. Of course to rigorously obtain (smooth) manifold limits as observed classical configurations, we must assume that the expression holds exactly as $\beta\rightarrow\infty$. 
    
    Thus our aim is to investigate the phase structure of the statistical models defined above. In particular we are looking to find a second-order phase transition. Following \cite{Binder_PhaseTransition} we note that strong evidence for both the existence and the continuous nature of a transition may be derived by an analysis of finite-size effects. A recent example of this kind of study in a similar context is \cite{Glaser_FinSizeScale} which provides strong evidence that two-dimensional causal set theory experiences a first-order phase transition as an analytic continuation parameter $\beta$ is varied. In this section we conduct a similar analysis of finite size effects and provide evidence that our models evince a second-order phase transition.
    \subsection{\texorpdfstring{Existence of a Phase Transition and Estimating the Critical $\beta$}{Existence of a Phase Transition and Estimating the Critical Beta}}
    We shall study the phase structure of the theory via an examination of the quantities
    \begin{align}
        \braket{\mathcal{A}}_\beta=\frac{\partial (\beta \mathcal{F})}{\partial \beta } && C = \beta^2\braket{(\mathcal{A}-\braket{\mathcal{A}}_\beta)^2}_\beta =-\beta^2\frac{\partial^2(\beta \mathcal{F})}{\partial \beta^2}
    \end{align}
    where $C$ is called the \textit{specific heat} of the system. Note that $\beta \mathcal{F}=-\log \mathcal{Z}$. In a phase transition we expect to find some nonanalyticity in the free energy $\beta \mathcal{F}$ in the thermodynamic limit and in the Ehrenfest classification the order of the derivative in which a discontinuity is observed gives the order of the phase transition. That is to say in a first-order transition we expect to observe a discontinuity in $\beta \mathcal{A}$ and a delta-function peak in $C$, while in a second-order transition we expect $\beta \mathcal{A}$ to be continuous and the specific heat to contain the discontinuity. Since such symptoms of nonanalyticity only arise in the thermodynamic limit \cite{Goldenfeld_SM} the difficulty lies in the assessment of the phase transition in finite systems.
    
    \begin{figure}
        \centering
        \begin{subfigure}{0.7\textwidth}
        \centering
        \includegraphics[width=\textwidth]{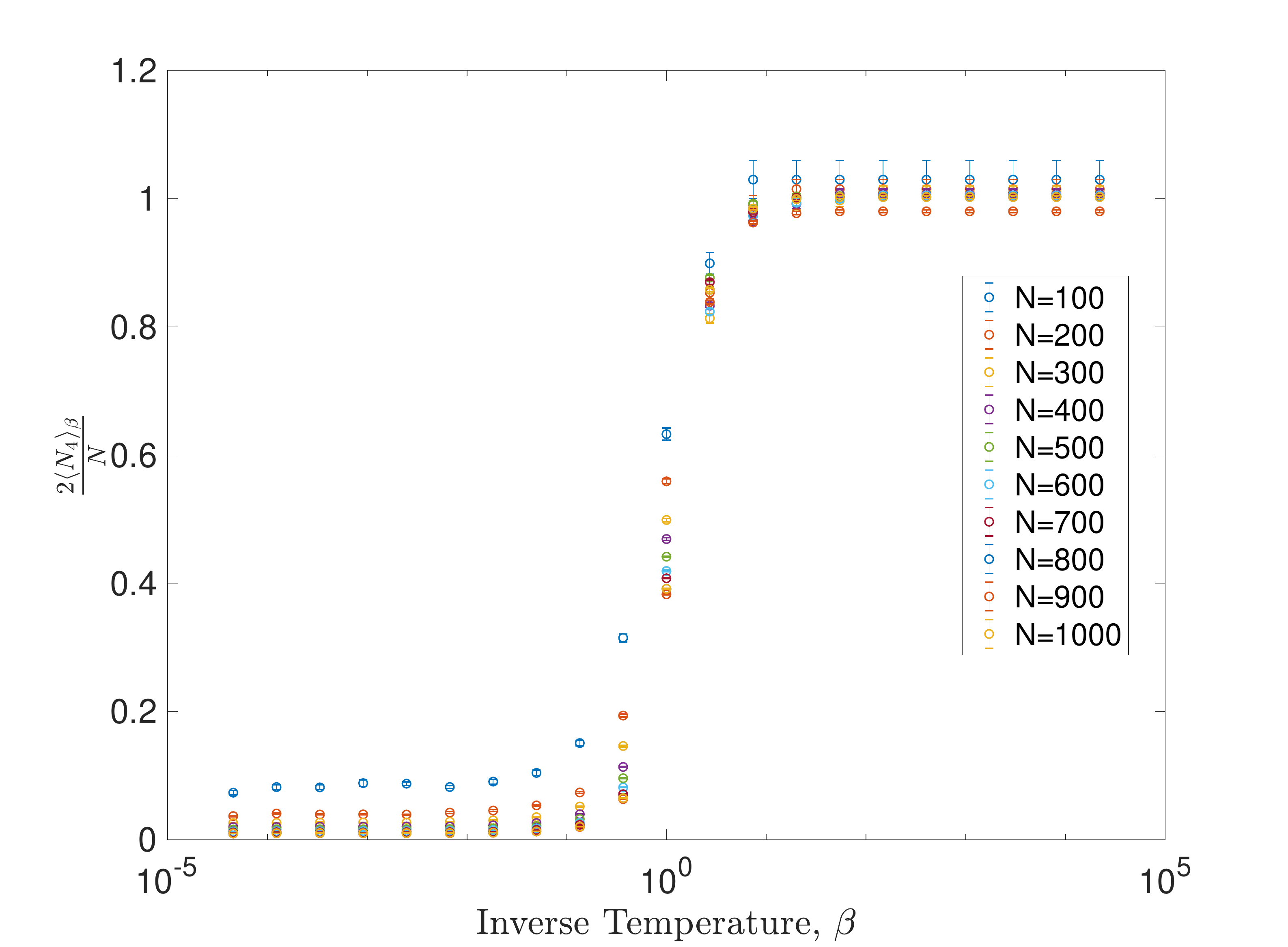}
        \subcaption{Expected number of squares; the normalisation is set to $1$ for the prism graph/M\"{o}bius ladder on $N$ vertices.}
        \end{subfigure}
        \begin{subfigure}{0.7\textwidth}
        \centering
        \includegraphics[width=\textwidth]{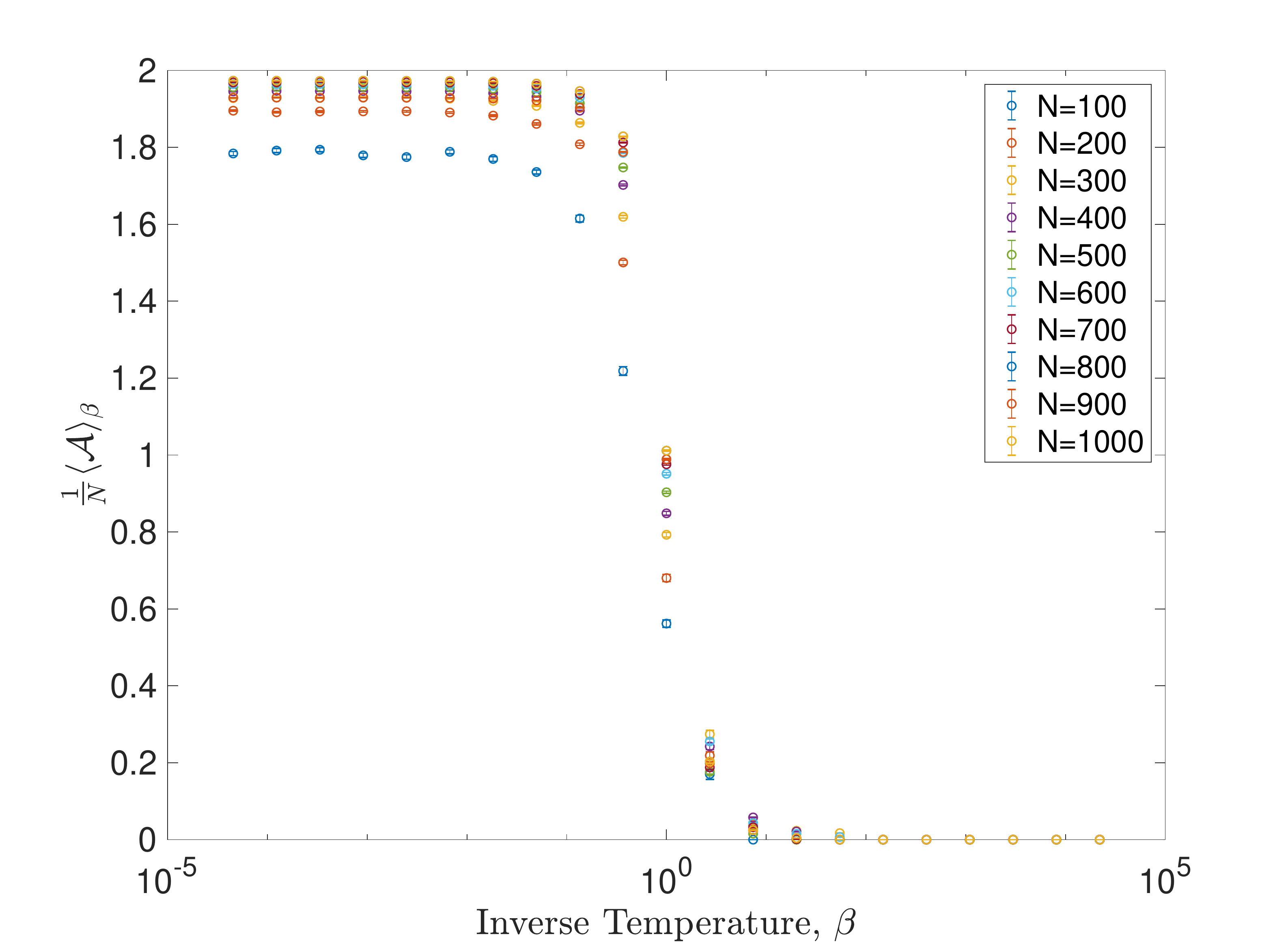}
        \subcaption{Intensive expected action.}
        \end{subfigure}
        \caption{Variation of two observables for a variety of graph sizes in an exponential range in the parameter $\beta$. Each point of the graph is obtained from $|E(\omega)|=3N/2$ sweeps where each sweep consists of $|E(\omega)|=3N/2$ attempted edge switches. We clearly see two regimes in both observables, and we identify a critical transitional region in the range $0.05\approx\exp(-3)\leq \beta\leq \exp(3)\approx 20$. Later we shall study this region more carefully.}
        \label{figure: OrderParameters}
    \end{figure}
    
    It has already been argued that two distinct phases exist: the random phase at low $\beta$ and the classical phase as $\beta\rightarrow \infty$. Indeed figure \ref{figure: OrderParameters} shows the behaviour of some observables including the action for an exponential range of values of the parameter $\beta$ and for a large range of graph sizes. The appearance of the two expected regimes as $\beta$ is varied is quite clear. From the same figure it appears that the phase transition occurs roughly in the region $0.05\leq \beta\leq 20$, and in fact we find that if we allow for longer relaxation times the upper value can be somewhat reduced. Figure \ref{figure: Action} shows a plot of the quantity $\beta\braket{\mathcal{A}}_\beta$ in the critical region for a smaller range of graph sizes. While we should not draw too many conclusions on the basis of such plots, there does not appear to be a discontinuity anywhere in the plot and as such we already have an indication that the phase transition is not first-order.
    
    \begin{figure}
        \centering
        \includegraphics[width=0.7\textwidth]{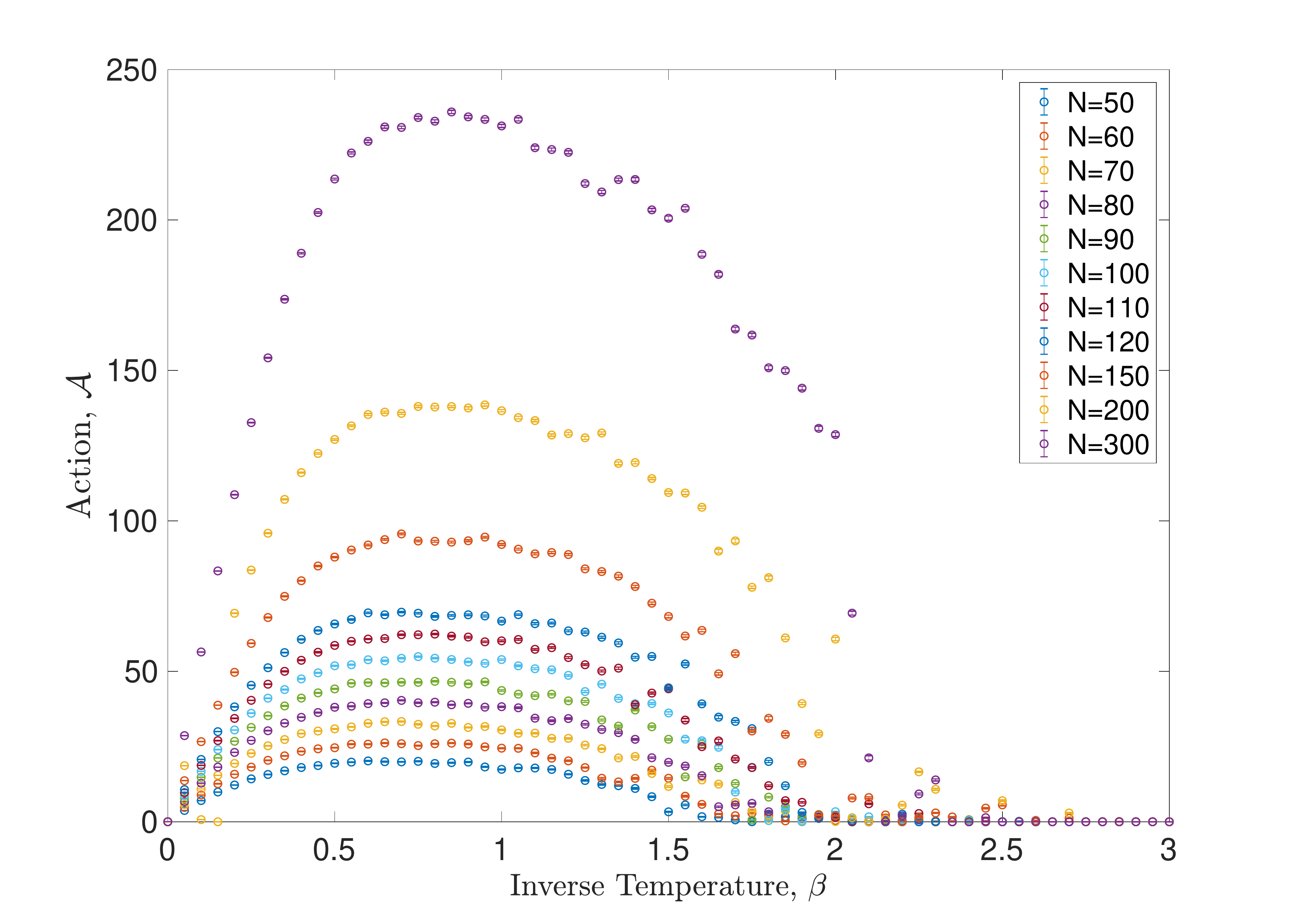}
        \caption{$\beta\braket{\mathcal{A}}_\beta$ in the critical region.}
        \label{figure: Action}
    \end{figure}
    
    A key step in the analysis of the phase transition is an estimation of the critical $\beta_c$.  {In general this procedure is a little subtle since the random graph models under consideration are expected to be infinite-dimensional statistical models. Indeed infinite dimensionality is typical in both random graph models \cite{Bollobas_RandomGraph, Janson_RandomGraph} and the statistical mechanics of networks \cite{AlbertBarabasi_StatMechCompNet, ParkNewman_StatMech} and is also a feature of discrete quantum gravity models \cite{KellyEtAl, Glaser_FinSizeScale} essentially as an expression of some expected nonlocality in the system. In practice this leads an $N$-dependent function $\beta_c$ with the precise $N$-dependence to be estimated via finite size scaling arguments as in \cite{Glaser_FinSizeScale}. If $\beta_c(N)$ appears to converge to some value $\beta_c$ as $N$ is increased we see that $N$-dependence of $\beta_c(N)$ detected is simply a finite-size effect and not a consequence of the infinite dimensionality of random graph models. Conversely if convergence is only observed after factoring out some power of $N$ then we see that infinite dimensionality of the model is at play in addition to finite size effects.}
    
    The most natural way to estimate $\beta_c(N)$ is to find the value of $\beta$ which maximises $C(N)$ where $C(N)$ denotes a numerical estimate of the specific heat $C$ for graphs of order $N$. In a first-order transition we expect $C$ to behave as a delta-function singularity at the transition point since, by definition, the action $\beta\braket{\mathcal{A}}_\beta$ has a discontinuity at the transition point. On the other hand, in a second-order transition we expect $C$ to diverge at $\beta_c$ as a natural consequence of the appearance of critical fluctuations and the divergence of the correlation length \cite{Goldenfeld_SM}. In both cases, in finite systems amenable to numerical analysis, the divergences are smeared out into finite peaks which become sharper as $N$ increases. Figure \ref{figure: SpecificHeat} shows the specific heat at various values graph sizes; we indeed observe sharp peaks forming as $N$ increases.
    
    \begin{figure}
        \centering
        \includegraphics[width=0.7\textwidth]{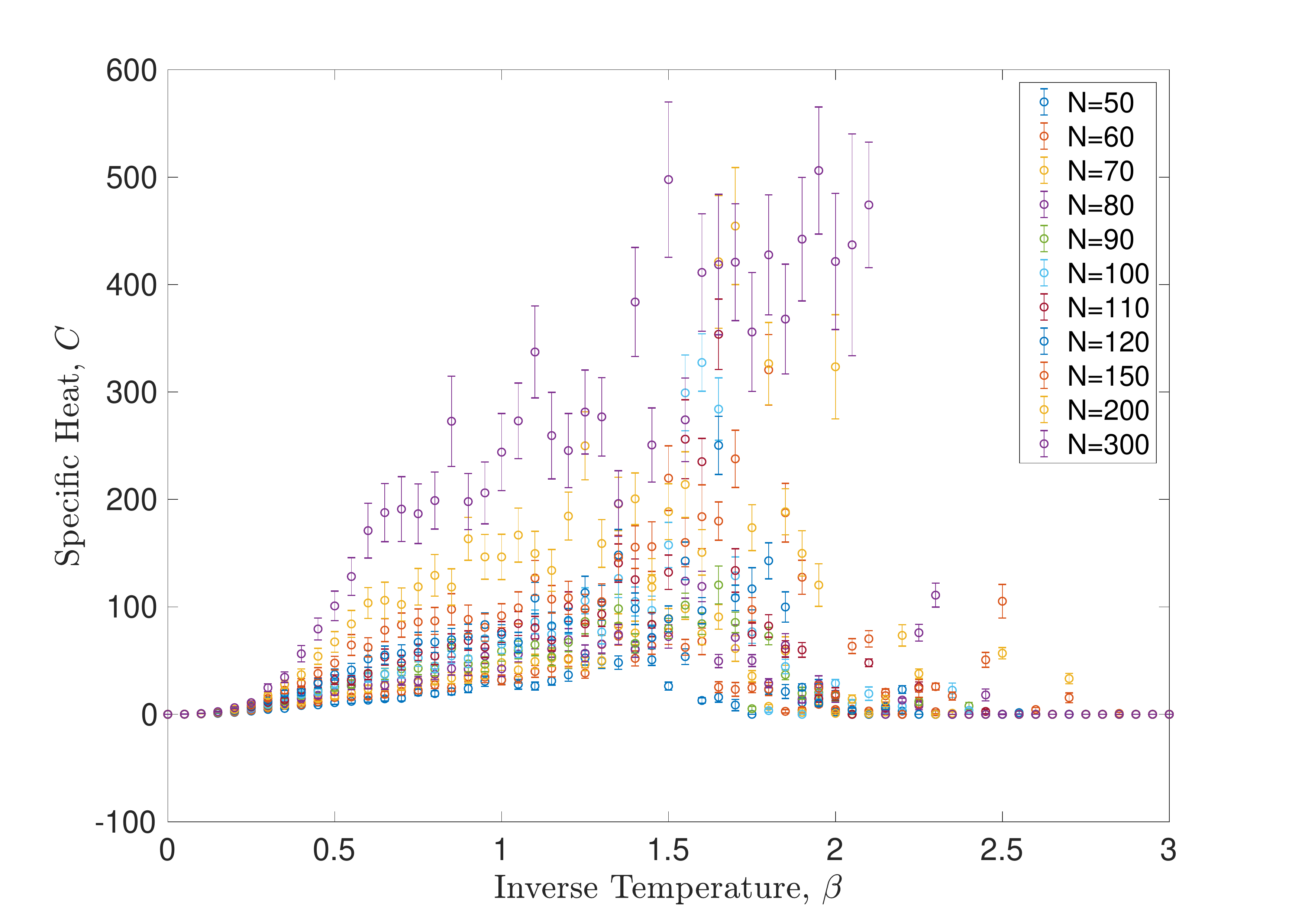}
        \caption{Specific heat in the critical region.}
        \label{figure: SpecificHeat}
    \end{figure}
    
    Figure \ref{figure: PseudoCriticalPoint} shows the value of $\beta$ which maximises the specific heat as a function of the graph size. This effectively shows the scaling of the  observed critical point values with $N$. Fitting against a plot of $y=m N^{-1}+c$ suggests that the critical value $\beta_c(\infty)=1.84\pm 0.03$; more important than the precise value stemming from this quantitative estimate is the qualitative adequacy of the $N^{-1}$ fit, in line with earlier expectations, showing that the system is finite-dimensional and thus has a well-defined critical inverse temperature.
    
    \begin{figure}
        \centering
        \includegraphics[width=0.7\textwidth]{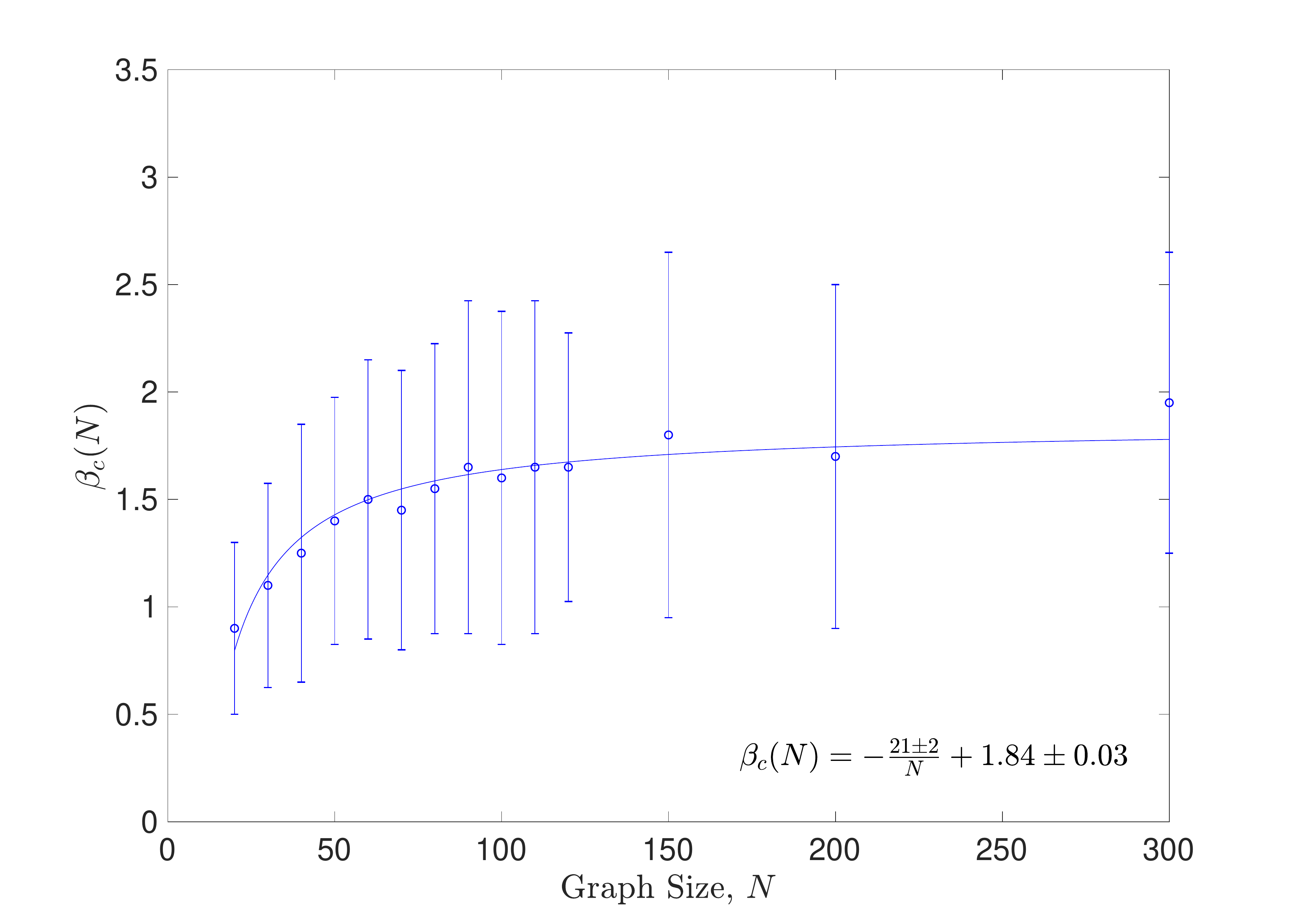}
        \caption{Observed values of the pseudocritical value $\beta_c(N)$. A fit of functions of the form $y=mN^{-1}+c$ to the observed data gives $\beta_c(N)=-(21\pm 2)N^{-1}+(1.84\pm 0.03)$. This suggests a constant critical value of $\beta_c(\infty)=1.84\pm 0.03.$}
        \label{figure: PseudoCriticalPoint}
    \end{figure}
    
    \subsection{The Nature of the Phase Transition}
    
    As already discussed above, a phase transition is first-order if there is a discontinuity in the quantity $\beta\braket{\mathcal{A}}_\beta$ and second-order if $\beta\braket{\mathcal{A}}_\beta$ is continuous while the specific heat $C$ is discontinuous. Such discontinuities are not easy to spot in finite systems while, as already argued, divergences in $C$ which do manifest in easily detectable traits of plots of finite systems cannot be used to distinguish between the order of the transition. Thus we need to consider other diagnostics if we are to identify the nature of the phase transition in question. 
    
    \begin{figure}
        \centering
        \includegraphics[width=0.7\textwidth]{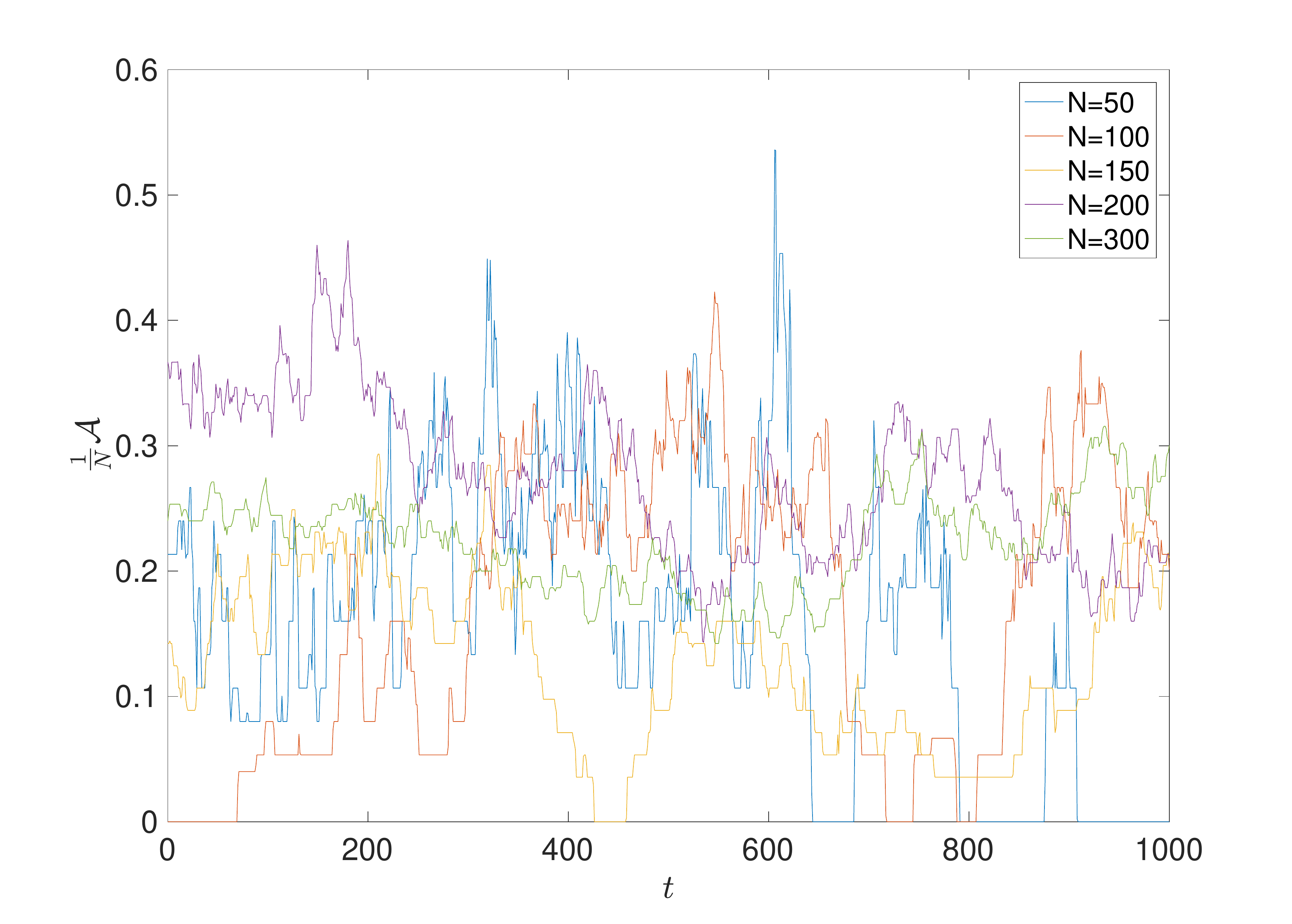}
        \caption{Time series data for the action at the critical temperature. $t$ is measured in sweeps, i.e. attempted Monte Carlo updates per edge.}
        \label{figure: PhaseCoexistence}
    \end{figure}
    
    One characteristic feature of first-order transitions which is not apparent in second-order transitions is phase coexistence at the critical value $\beta_c$: the action at $\beta_c$ can take on any of at least two distinct values separated by the discontinuity that characterises the transition. Regarded as a random variable the action will thus be distributed as a superposition of the distributions in each of the distinct pure phases. Assuming two distinct pure phases with distributions $P_0$ and $P_1$ respectively, the distribution of $\mathcal{A}$ at the critical temperature is then given
    \begin{align}\label{equation: ActionDistribution}
        \text{dist}_{\mathcal{A}}(E)=\alpha P_0(E)+(1-\alpha)P_1(E)
    \end{align}
    for some $\alpha\in (0,1)$, where $E$ is some event. Assuming that we have enough observations we expect both $P_0$ and $P_1$ to be normally distributed around (or even entirely concentrated at) some point and the action distribution \ref{equation: ActionDistribution} takes the form of two superimposed Gaussians, becoming more pronounced as the size of $N$ increases. Assuming that the discontinuity is larger than the effective support of the two Gaussians combined, which will occur if we consider sufficiently large $n$, we thus expect to observe two distinct peaks in the frequency distribution of the action at the critical value. This can be observed directly by looking at a frequency histogram of the observations of $\mathcal{A}$, checked for in time-series data where we expect to observe sharp transitions between the two Gaussian centres, or observed in the following modified \textit{Binder cumulant}:
    \begin{align}
        B = \frac{1}{3}\left(1-\frac{\braket{\mathcal{A}^4}_\beta}{\braket{\mathcal{A}^2}^2_\beta}\right).
    \end{align}
    Up to an overall factor, this is one minus the kurtosis of the distribution, and up a constant shift is the coefficient introduced in \cite{Binder_FiniteSizeEffects}. It can be shown that in a first-order transition, an observable taking the double Gaussian distribution \ref{equation: ActionDistribution} will have a non-zero minimum at the transition point and will take on the value $0$ elsewhere \cite{Binder_FiniteSizeEffects}. In a second-order transition, on the other hand, the action is distributed according to a single Gaussian with corresponding consequences for the observed frequency histogram and time-series data. In particular we expect $B$ to vanish identically everywhere in a second-order transition. 
    
    \begin{figure}
        \centering
        \begin{subfigure}{0.7\textwidth}
			\centering
			\includegraphics[width=\textwidth]{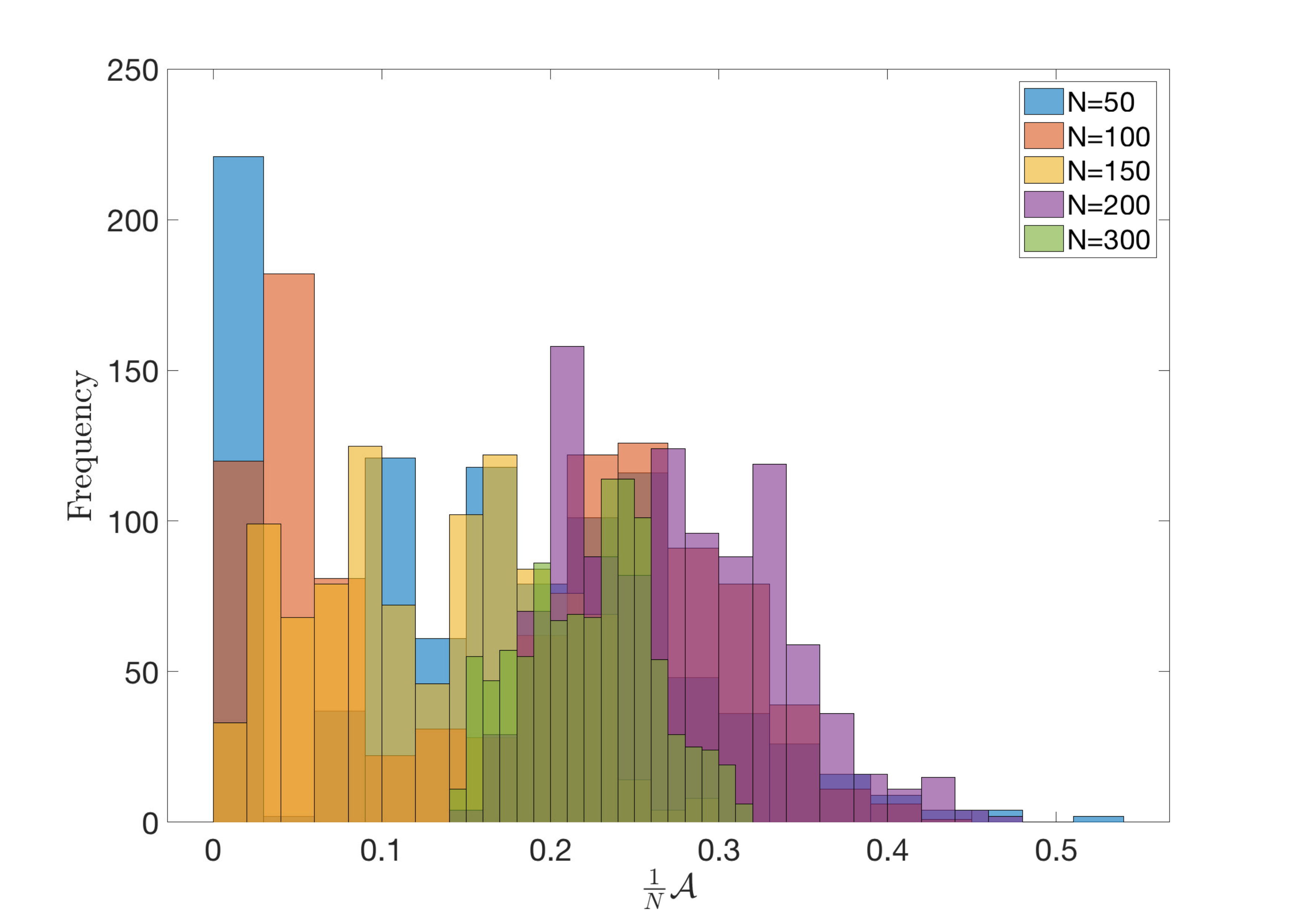}
			\subcaption{Histograms}\label{subfigure: Frequency_Histogram}
		\end{subfigure}
		\begin{subfigure}{0.7\textwidth}
			\centering
			\includegraphics[width=\textwidth]{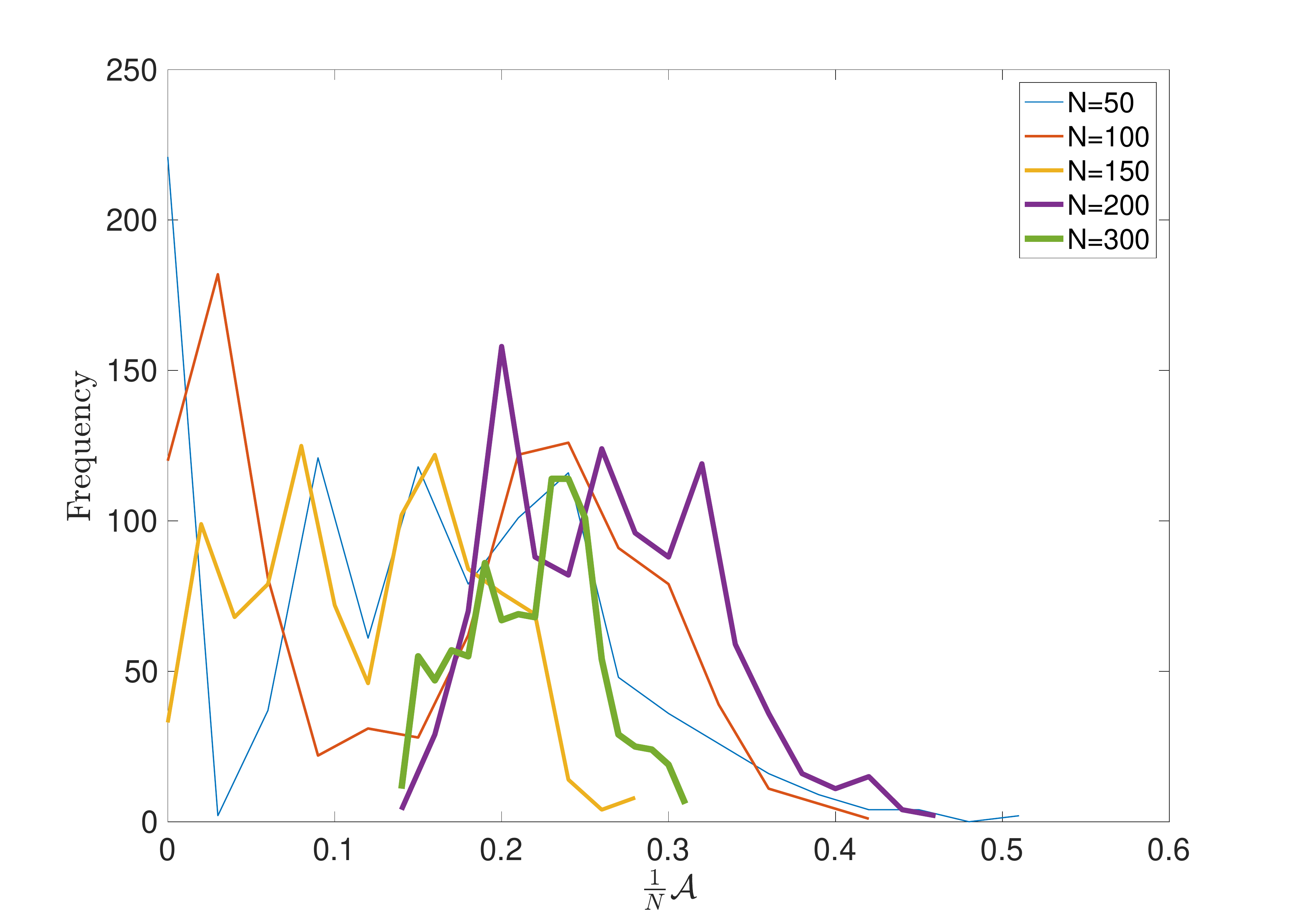}
			\subcaption{Frequency peaks.}\label{subfigure: Frequency_Peak}
		\end{subfigure}
        \caption{Frequency histograms of $\frac{1}{N}\mathcal{A}$. Figure \ref{subfigure: Frequency_Histogram} shows the frequency histograms for a variety of graph sizes. The tendency is somewhat more transparent in figure \ref{subfigure: Frequency_Peak} where only the peaks---of a smaller number of bins---are displayed. There appear to be two peaks that merge as $N$ increases.}\label{figure: FrequencyGaussians}
    \end{figure}
    
    In finite systems, of course, phase coexistence is observed for both first and second-order transitions. However in the former case it is a fundamental feature of the transition while it is simply a finite-size effect in the latter; as such phase coexistence and the associated phenomena become more pronounced as we increase $N$ in a first-order transition and less pronounced in a second-order transition. More concretely, we expect transitions between phases to become less sharp in time-series data, distinct Gaussians to merge and $B$ to approach $0$ as $N$ increases in a second-order transition. Figures \ref{figure: PhaseCoexistence}, \ref{figure: FrequencyGaussians} and \ref{figure: BinderCoefficient} all corroborate these expectations indicating that we indeed have a second-order transition.
    
    \begin{figure}
        \centering
        \includegraphics[width=0.7\textwidth]{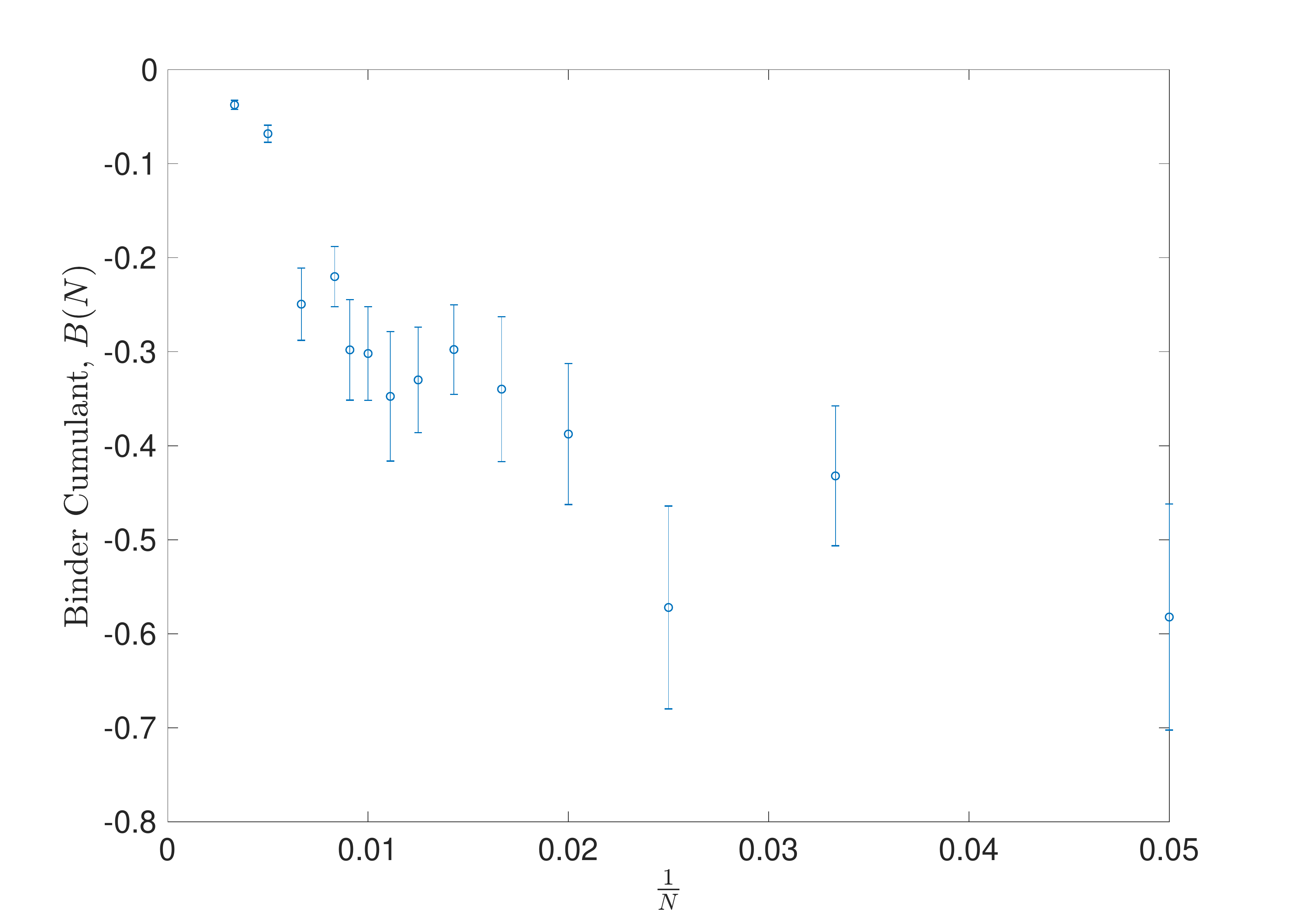}
        \caption{Binder coefficient for a variety of graph sizes. We see a clear approach to $0$ as $N^{-1}\rightarrow 0$.}
        \label{figure: BinderCoefficient}
    \end{figure}
    
     {Finally let us consider the divergence of the normalised specific heat. This is characterised in terms of a critical exponent $\lambda$ defined by:
    \begin{align}
        \frac{C}{N}\sim N^\lambda.
    \end{align}
    This may be estimated by looking at the collapse of the normalised specific heat $C/N$ in the critical regime. This is shown in figure \ref{figure: Scaling} where we have also shown the normalised dimensionless action; we obtain an estimate of $\lambda=0.15$. Note that the maximum value of the specific heat is indeed increasing as may be seen from figure \ref{figure: MaxSpHeat}.}
    
    \begin{figure}
        \centering
        \begin{subfigure}{0.7\textwidth}
			\centering
			\includegraphics[width=\textwidth]{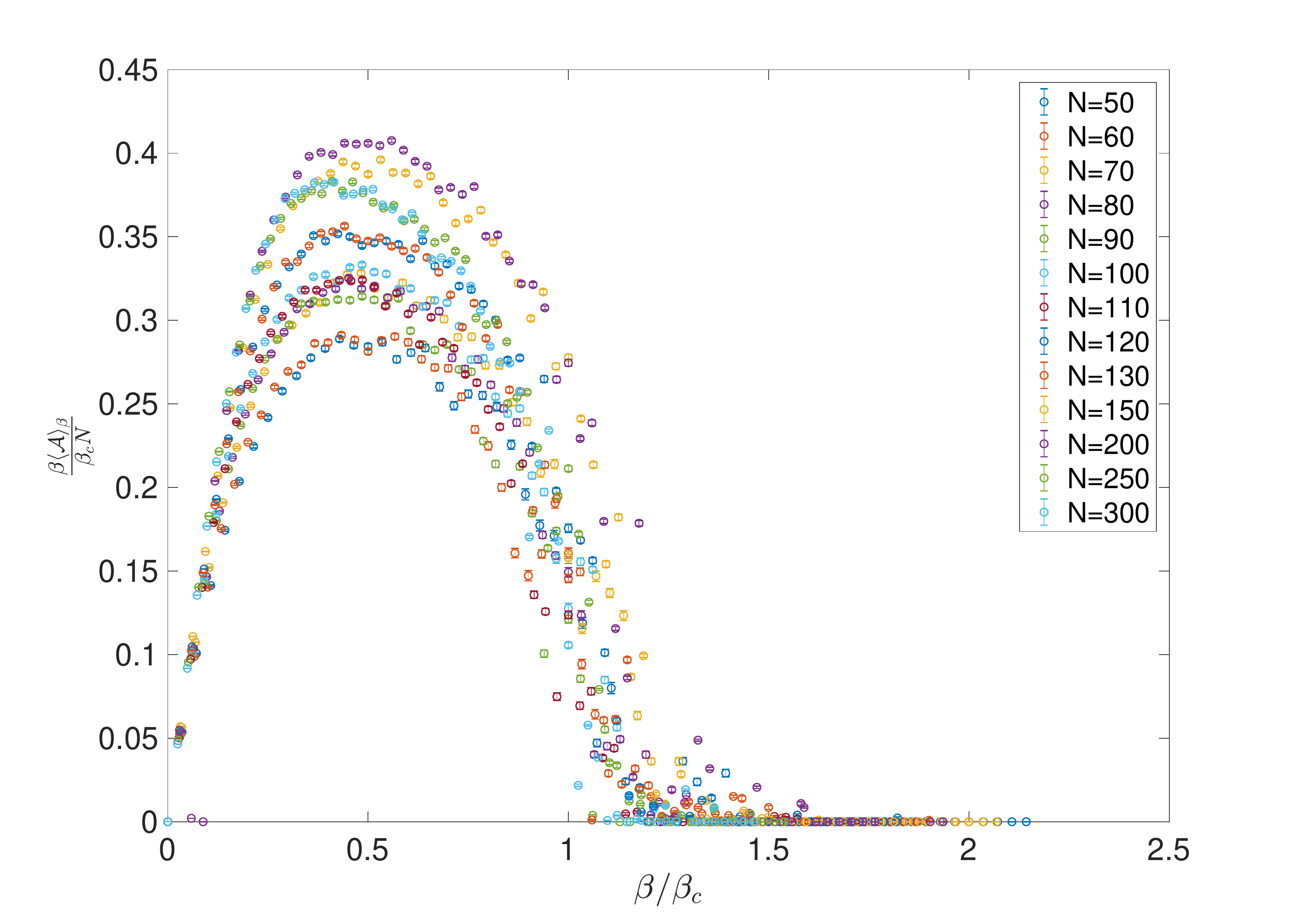}
			\subcaption{Action}\label{subfigure: Scaling_Action}
		\end{subfigure}
		\begin{subfigure}{0.7\textwidth}
			\centering
			\includegraphics[width=\textwidth]{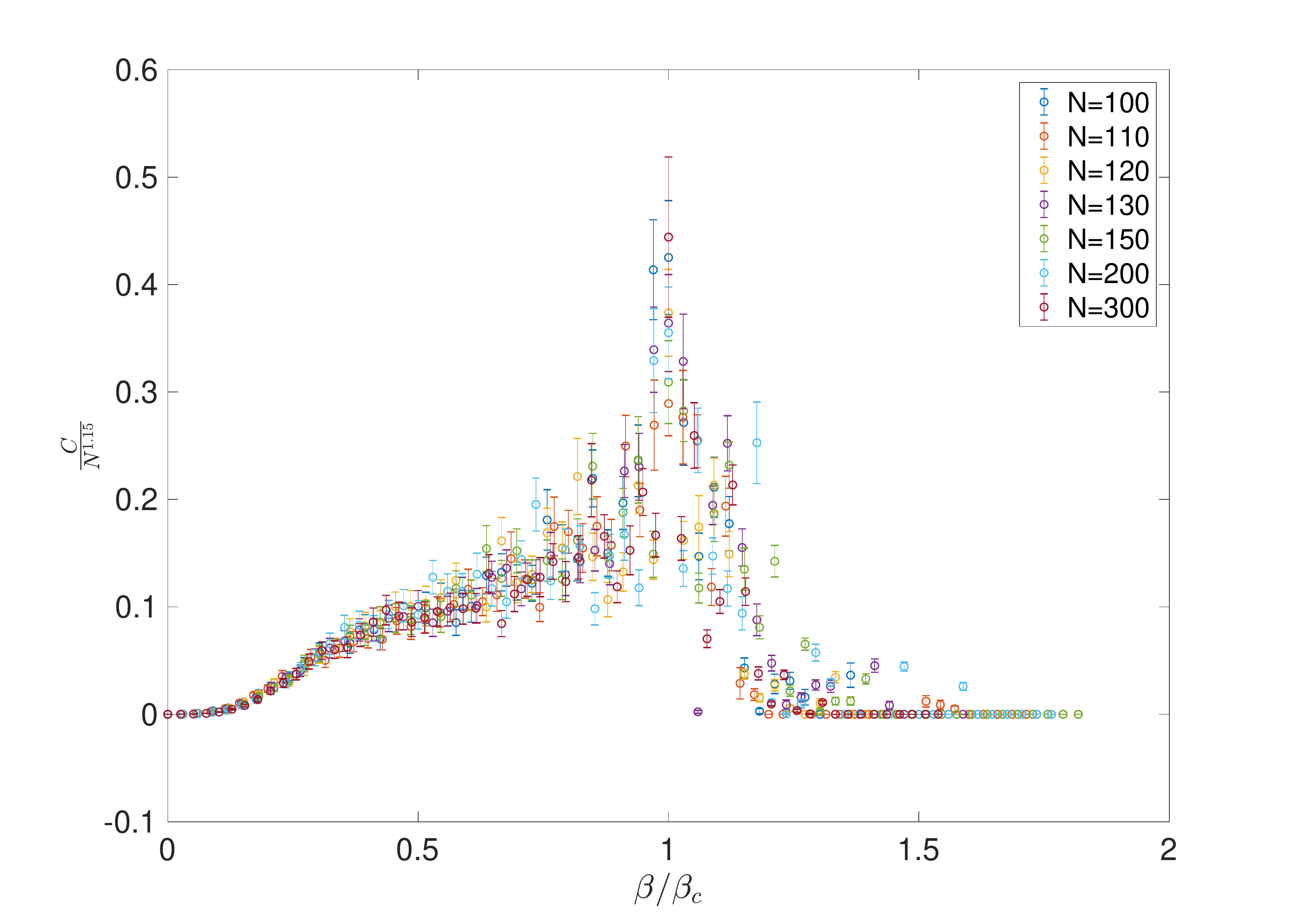}
			\subcaption{Specific heat.}\label{subfigure: Scaling_SpecificHeat}
		\end{subfigure}
        \caption{Scaling of $(\beta/\beta_c)\braket{\mathcal{A}}_\beta/N$ and $C/N$. We see relatively good collapse in the latter if we choose $\lambda=1.15$ compared to figure \ref{figure: SpecificHeat}.}
        \label{figure: Scaling}
    \end{figure}
    \begin{figure}
        \centering
        \includegraphics[width=0.7\textwidth]{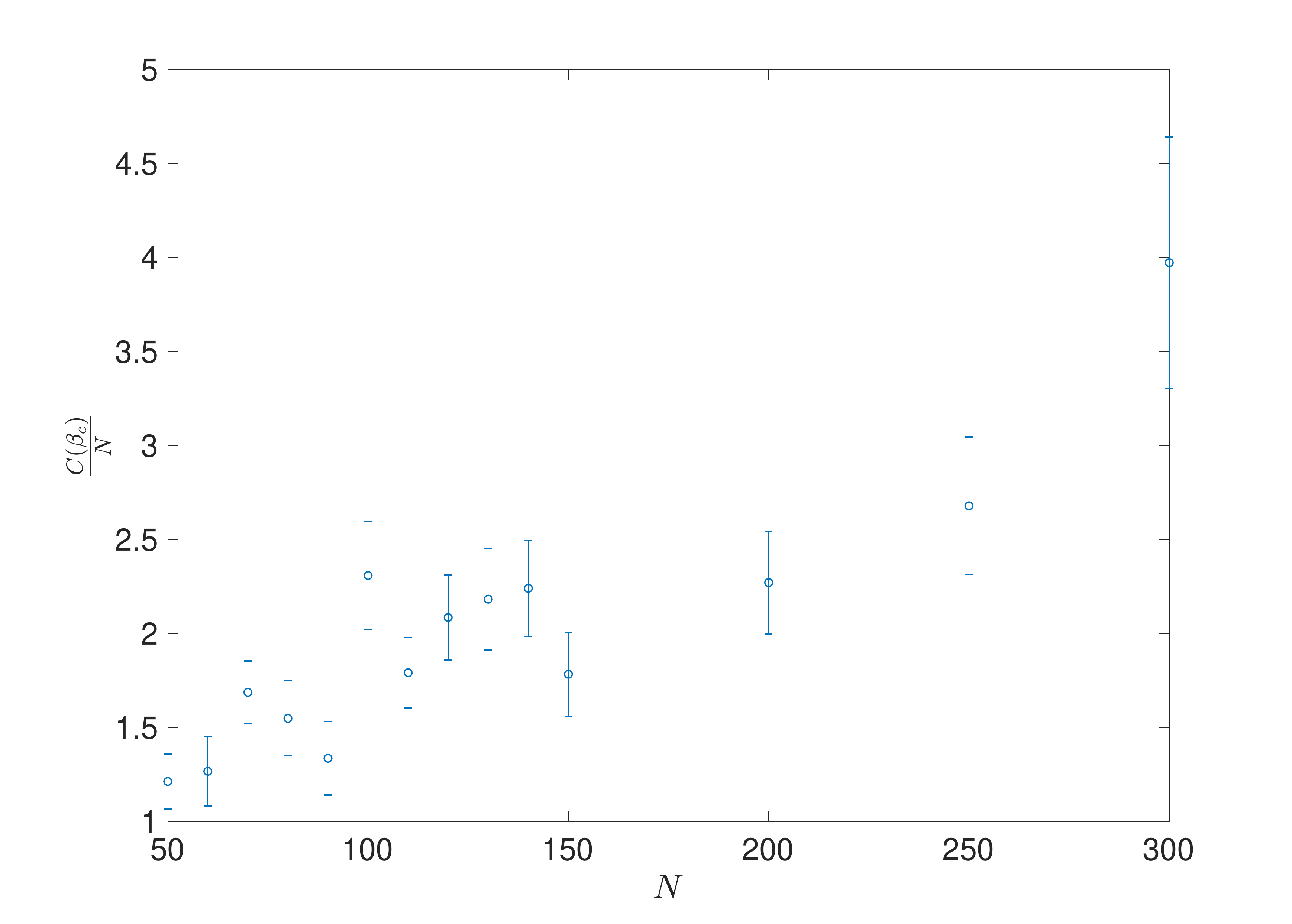}
        \caption{Variation of the maximum value of the specific heat with graph size.}
        \label{figure: MaxSpHeat}
    \end{figure}
    \section{Conclusion}
    In conclusion we have presented a toy model of emergent geometry in $1$-dimension. From a practical point of view, the key features of the model are the (dynamic or kinematic) suppression of triangles in conjunction and a classification of Ricci-flat graphs enabling us to guarantee that the scaling limit of possible classical configurations is a smooth geometry. The other essential ingredient is the fact that the system undergoes a continuous phase transition. Since Ricci-flat graphs are likely to exhibit significantly different properties from random regular graphs, and the evidence from \cite{KellyEtAl} suggests that the continuous phase transition was likely to persist as we increase the degree, difficulties in extending this model are thus concentrated on the classification of Ricci-flat graphs; some nontrivial information is contained in \cite{Kelly_Exact} but it seems unlikely that we will be able to obtain similarly rigorous results for the case of $4$-regular graphs, which in our formalism corresponds to surfaces. Nonetheless there are prospects for similar conclusions to be drawn in the case of surfaces ($4$-regular graphs) in the form of spectral and Hausdorff dimension results. This will be the topic of a future work. 
    \section*{Acknowledgements}
    C.K. acknowledges studentship funding from EPSRC under the grant number EP/L015110/1. F.B. acknowledges funding from EPSRC (UK) and the Max Planck Society for the Advancement of Science (Germany).
    \appendix
    \section{Ollivier Curvature}\label{appendix: OllivierCurvature}
    In this appendix we introduce the Ollivier curvature beginning with some basic ideas in optimal transport theory. We rely heavily on \cite{Villani_OptimalTransport,Ollivier_RCMCMS,JostLiu_RicciCurv, KellyEtAl}.
    \subsection{General Features}
    The Ollivier curvature is closely related to ideas of optimal transport theory in metric measure geometry. Let $X$ be a Polish (separable completely metrisable) space and let $\mathcal{P}(X)$ denote the family of all probability measures on the Borel $\sigma$-algebra of $X$. Given probability measures $\mu,\:\nu\in \mathcal{P}(X)$, a \textit{transport plan} from $\mu$ to $\nu$ is a probability measure $\xi$ on the Borel $\sigma$-algebra of $X^2$ satisfying the following \textit{marginal constraints}:
    \begin{subequations}
    \begin{align}
        \int_X \rm{d} \xi(x,y)f(x)&=\int \rm{d} \mu(x)f(x)\\
        \int_X \rm{d} \xi(x,y)f(y)&=\int \rm{d} \nu(y)f(y)
    \end{align}
    \end{subequations}
    for all measurable mappings $f:X\rightarrow \mathbb{R}$. Let $\Pi(\mu,\nu)$ denote the set of all transport plans from $\mu$ to $\nu$. Roughly speaking a transport plan $\xi\in \Pi(\mu,\nu)$ defines a method of transforming a distribution $\mu$ of matter into a distribution $\nu$ of matter.
    
    Given a (measurable) \textit{cost} function $c:X^2\rightarrow \mathbb{R}$, the \textit{transport cost} of a transport plan $\xi\in \Pi(\mu,\nu)$ is defined
    \begin{align}
        W_c(\xi)=\int_X\rm{d} \xi(x,y)c(x,y).
    \end{align}
    The \textit{optimal transport cost} is then
    \begin{align}
        W_c(\mu,\nu)=\inf_{\xi\in \Pi(\mu,\nu)}W_c(\xi).
    \end{align}
    Given a metric $\rho$ on $X$ (compatible with the topology) the \textit{Wasserstein $p$-distance} is defined
    \begin{align}
        W_p(\mu,\nu)=\sqrt[p]{W_{\rho^p}(\mu,\nu)}
    \end{align}
    i.e. as the $p$-th root of the optimal transport cost for cost function given by the $p$th power of the metric. It can be shown that the Wasserstein $p$-distances define infinite metrics on the space $\mathcal{P}(X)$. The metrics are finite if we restrict to the space of probability measures with finite $n$th moments for $n\leq p$.
    
    The Ollivier curvature is an extension of the Wasserstein distance particular to \textit{metric measure spaces}, i.e. metric spaces $(X,\rho)$ equipped with a family of probability measures $\set{\mu_x}_{x\in X}\subseteq \mathcal{P}(X)$. We may refer to the family $\set{\mu_x}_{x\in X}$ as a \textit{random walk} on $X$. We give two particularly important examples:
    \begin{itemize}
        \item Every (connected locally finite) graph $G$ is naturally a metric measure space in the following way: the metric structure is given by the standard geodesic distance between vertices; the random walk on $X$ is given by picking
        \begin{align}
            \mu_x^p(y)=\left\{\begin{array}{rl}
                p, & y=x \\
                \frac{1}{d_x}(1-p), & y\in N_G(x)\\
                0, & \rm{otherwise}
            \end{array}\right.
        \end{align}
        for each $x\in X$ for some $p\in [0,1]$. We call this random walk the \textit{lazy random walk} of idleness $p$. The lazy random walk of idleness $0$ is referred to as the \textit{uniform random walk.}
        \item A Riemannian manifold $(\mathcal{M},g)$ is a metric measure space where again the metric is given by the standard geodesic metric and a random walk defined by the assignment:
        \begin{align}
            \rm{d}\mu_x^\varepsilon(y)=\left\{\begin{array}{rl}
                \frac{1}{\rm{vol}(B_{\varepsilon}(x))}, & y\in B_\varepsilon(x) \\
                0, & \rm{otherwise}
            \end{array}\right.
        \end{align}
        for each $x\in \mathcal{M}$ for some (small) choice of $\varepsilon>0$.
    \end{itemize}
    In such contexts, the Ollivier curvature is defined
    \begin{align}
        \kappa_X(x,y)=1-\frac{W_1(\mu_x,\mu_y)}{\rho(x,y)}
    \end{align}
    for all distinct $x,\:y\in X$. Rearranging we see that
    \begin{align}
        W_1(\mu_x,\mu_y)=(1-\kappa_X(x,y))\rho(x,y),
    \end{align}
    i.e. lower bounds on the curvature imply control over the dilatation of the natural imbeddings $x\mapsto \mu_x$. In particular we see that $\kappa_X(x,y)\geq 0$ iff $W_1(x,y)\leq \rho(x,y)$; for Riemannian manifolds this means that a space is positively curved iff the average distance between two open balls centred at $x$ and $y$ respectively is less than the distance between their centres. This is closely related to the Ricci curvature; indeed for $x$ and $y$ sufficiently close, if $u$ is a vector field generating a geodesic from $x$ to $y$ we have
    \begin{align}
        \kappa_X(x,y)=\frac{\varepsilon^2}{2(D+2)}\rm{Ric}(u,u)+\mathcal{O}(\varepsilon^3)+\mathcal{O}(\varepsilon^2\rho(x,y)),
    \end{align}
    where $D=\rm{dim}(\mathcal{M})$. In this sense the Ollivier curvature is a generalisation of the manifold Ricci curvature to much rougher contexts than is typical. 
    \subsection{Discrete Properties}
    The discrete context has several important consequences for the Ollivier curvature. In particular suppose we are working in a (connected, locally finite, simple, unweighted) graph $G$ equipped with the uniform random walk. Henceforth, the Ollivier curvature is also regarded as a mapping $\kappa_G:E(G)\rightarrow \mathbb{R}$ on edges; clearly $\rho(u,v)=1$ for any edge $uv\in E(G)$ so
    \begin{align}
        \kappa_G(uv)=1-W_1(\mu_u,\mu_v),
    \end{align}
    
    The fundamental effect of discreteness is to give a linear character to the entire problem: by the construction of finite-dimensional free vector spaces, probability measures $\mu$ on graphs with finite support may be regarded as real vectors $\boldsymbol{\mu}\in [0,1]^{n}$ for any $n\geq |\rm{supp}(\mu)|$ such that $\boldsymbol{\mu}_x=\mu(x)$ for all $x\in \rm{supp}(\mu)$ and $0$ otherwise; in particular, assuming that $G$ is equipped with the uniform random walk, the measure $\mu_u$ is given by a $d_u$-dimensional real vector $\boldsymbol{\mu_u}\in (0,1]^{d_u}$. Transport plans $\boldsymbol{\xi}\in \Pi(\boldsymbol{\mu_u},\boldsymbol{\mu_v})$ are then $[0,1]$-valued matrices satisfying the following \textit{discrete} marginal constraints:
    \begin{align}
        \boldsymbol{\xi}\boldsymbol{1}_{d_v}=\boldsymbol{\mu_u} && \boldsymbol{\xi}^T\boldsymbol{1}_{d_u}=\boldsymbol{\mu_v}
    \end{align}
    where $\boldsymbol{1}_n$ is the $n$-dimensional column with $1$ for each entry. Letting $\mathbb{D}(u,v)$ denote the $d_u\times d_v$-dimensional matrix with entry $(x,y)\in N_G(u)\times N_G(v)$ given by the distance $\rho(x,y)$, we have that the transport cost
    \begin{align}
        W_\rho(\xi)=\boldsymbol{\xi}\cdot \mathbb{D}(u,v)
    \end{align}
    where $\cdot$ denotes the Frobenius (elementwise) inner product. The Wasserstein distance $W(\mu_u,\mu_v)$ is then defined via a linear programme which gives a general computational framework for exact evaluations of the Ollivier curvature.
    
    The second point to note about the discrete context is the \textit{Kantorovitch duality theorem}. In particular
    \begin{subequations}
    \begin{align}
        W_1(\mu_u,\mu_v)&=\sup_{f\in \mathbb{L}(X,\mathbb{R})}\sum_{x\in X}(\mu_u(x)-\mu_v(x))f(x)\\
        &=\sup_{f\in \mathbb{L}(X,\mathbb{R})}\left(\sum_{x\in N_G(u)}\frac{f(x)}{d_u}-\sum_{y\in N_G(v)}\frac{f(y)}{d_v}\right) 
    \end{align}
    \end{subequations}
    where $\mathbb{L}(X,\mathbb{R})$ is the set of all $1$-Lipschitz maps $f:X\rightarrow\mathbb{R}$. (Recall that a $1$-Lipschitz map between two metric spaces $(X,\rho_X)$ and $(Y,\rho_Y)$ is a mapping $f:X\rightarrow Y$ such that $\rho_Y(f(x),f(y))\leq \rho_X(x,y)$ for all $x,\:y\in X$.) In the discrete context, Kantorovitch duality is simply an expression of the strong duality theorem in linear optimisation theory. Note that Kantorovitch duality---in a somewhat more involved form---generalises to continuous spaces. 
    
    A third key feature of the discrete setting---which does not generalise well beyond the discrete context---is the existence of \textit{core neighbourhoods}. A core neighbourhood of an edge $uv\in E(G)$ is a subgraph $H\subseteq G$ such that $N_G(u)\cup N_G(v)\subseteq V(H)$ and such that $\rho_H(x,y)=\rho_G(x,y)$ for all $(x,y)\in N_G(u)\times N_G(v)$. Then clearly
    \begin{align}
        \kappa_G(uv)=\kappa_H(uv)
    \end{align}
    and for the purposes of calculating the Ollivier curvature one may as well calculated the curvature in the core neighbourhood. The main utility of this notion comes when we realise that the induced subgraph of $G$ defined by 
    \begin{align}
        C(uv)=N_G(u)\cup N_G(v)\cup \pentagon(uv)
    \end{align}
    is a core neighbourhood, where
    \begin{align}
        \pentagon(uv)=\set{w\in V(G):\rho(u,w)=2\:\rm{and}\:\rho(v,w)=2}.
    \end{align}
    Roughly speaking, $C(uv)$ is the set of all vertices that lie on a triangle, square or pentagon supported by $uv$, or that neighbour $u$ or $v$ without lying on a short cycle supported by $u$ or $v$. This is a core neighbourhood because for any $(x,y)\in N_G(u)\times N_G(v)$ we have a $3$-path $xuvy$.
    
    The final key feature of the discrete setting is the \textit{discrete} nature of the Ollivier curvature. In particular, the Wasserstein distance may be found by optimising of integer-valued $1$-Lipschitz maps:
    \begin{align}
        W_1(\mu_u,\mu_v)=\sup_{f\in \mathbb{L}(X,\mathbb{Z})}\sum_{x\in X}(\mu_u(x)-\mu_v(x))f(x)
    \end{align}
    This can be shown directly or by using standard ideas of linear optimisation theory.
    
    \section{Gromov-Hausdorff Distance}\label{appendix: GromovHausdorffDistance}
    The Gromov-Hausdorff distance is a metric on the space of isometry classes of compact metric spaces. It is a generalisation of the \textit{Hausdorff distance} between subsets of a metric space. Significantly it defines a notion of convergence between metric spaces and gives a rigorous framework for thinking about emergent geometry in the thermodynamic limit of classes of graph. Much of the material in this appendix is covered in \cite{BuragoBuragoIvanov_MetricGeometry, Gromov_MetricStructures}; for elementary metric topology also see, for instance, \cite{KolmogorovFomin_RealAnalysis}.
    \subsection{Metric Topology}
    First let us recall some standard ideas from metric topology: a metric on a set $X$ is a positive definite, symmetric, subadditive function $\rho:X^2\rightarrow [-\infty,\infty]$, i.e. $\rho$ is a metric on $X$ iff it satisfies the following properties:
    \begin{itemize}
        \item \textit{Positivity:} $\rho(x,y)\geq 0$ for all $x,\:y\in X$.
        \item \textit{Definiteness:} $\rho(x,y)=0$ iff $x=y$ for all $x,\:y\in X$. Note that a mapping is \textit{semidefinite} iff $\rho(x,x)=0$ for all $x\in X$.
        \item \textit{Symmetry}: $\rho(x,y)=\rho(y,x)$ for all $x,\:y\in X$.
        \item \textit{Subadditivity:} $\rho(x,y)\leq \rho(x,z)+\rho(z,y)$ for all $x,\:y,\:z\in X$.
    \end{itemize}
    A metric may be thought of as defining the \textit{distance} between any two points of a space. If instead of definiteness, $\rho$ is just semidefinite then $\rho$ is said to be a \textit{pseudometric} on $X$. A (pseudo)metric is said to be \textit{finite} iff $\rho(x,y)< \infty$ for all $x,\:y\in X$. A \textit{(pseudo)metric space} is a pair $(X,\rho)$ where $X$ is a set and $\rho$ a (pseudo)metric on $X$.
    
    Clearly every (pseudo)metric on $X$ restricts to a (pseudo)metric on subsets of $X$. Given a pseudometric $\rho$ on $X$, there is an equivalence relation $\cong$ on $X$ given by $x\cong y$ iff $\rho(x,y)=0$. $\rho$ is naturally interpreted as a metric on the quotient $X/\cong$. The resulting metric space is said to be \textit{induced} by the pseudometric $\rho$ and is denoted $X/\rho$.
    
    Every pseudometric $\rho$ on a space $X$ gives rise to a topology in the following way: for each $x\in X$ and each $r>0$, define the \textit{open ball of radius $r$ and centred at $x$} as the set
    \begin{align}
        B_r(x)=\set{y\in X:\rho(x,y)<r}.
    \end{align}
    The open balls form a base for a topology in $X$ called the \textit{metric topology} of $X$. Two metrics $\rho_1$ and $\rho_2$ on $X$ give rise to the same topology iff for each $x\in X$ there are constants $\alpha,\:\beta>0$ such that
    \begin{align}
        \alpha \rho_1(x,y)\leq \rho_2(x,y)\leq \beta\rho_2(x,y)
    \end{align}
    for each $y\in X$. Given two metric spaces $(X,\rho_X)$ and $(Y,\rho_Y)$, an \textit{isometry} is a distance preserving mapping between $X$ and $Y$, i.e. a bijection $f:X\rightarrow Y$ such that $\rho_Y(f(x),f(y))=\rho(x,y)$ for all $x,\:y\in X$. Two metric spaces are \textit{isometric} iff they are related by an isometry. Clearly isometric spaces are homeomorphic though the converse does not necessarily hold. 
    
    It turns out that every metric space is first-countable, which, in particular, means that all questions of convergence in metric spaces can be settled by considering the convergence of \textit{sequences}. Recall that a sequence $\set{x_n}_{n\in \mathbb{N}}\subseteq X$ is said to \textit{converge} to a point $x\in X$ iff for each $\varepsilon>0$ there is an $N\in \mathbb{N}$ such that $\rho(x,x_m)<\varepsilon$ for all $m\geq N$. We then call $x$ a \textit{limit} of the sequence $\set{x_n}_{n\in \mathbb{N}}$. A sequence is said to be \textit{convergent} iff it has a limit and \textit{divergent} otherwise. Every metric space is Hausdorff which means that every convergent sequence has a unique limit, denoted:
    \begin{align}
        \lim_{n\rightarrow\infty}x_n.
    \end{align}
    Roughly speaking, a sequence $\set{x_n}_{n\in \mathbb{N}}$ has a limit $x$ iff most of the sequence is arbitrarily close to $x$, i.e. there are at most a finite number of points in the sequence more than some given distance away from $x$. 
    
    A sequence $\set{x_n}_{n\in \mathbb{N}}$ is said to be \textit{Cauchy} iff for every $\varepsilon>0$ there exists an integer $N\in \mathbb{N}$ such that $\rho(x_m,x_n)<\varepsilon$ for every $m,\:n\geq N$. Heuristically, a sequence is Cauchy iff most of the points of the sequence lie arbitrarily close to one another. It is clear that every convergent sequence is Cauchy; the converse need not hold. For instance consider the sequence $\set{10^{-n}}_{n\in \mathbb{N}}$ in the space $(0,1]$; this is Cauchy but does not converge in the space $(0,1]$ since its limit (in $\mathbb{R}$) is $0$. From this example it appears that a Cauchy sequence is one which \textit{should} converge, but fails to do so because the space fails to contain all the relevant points. This intuition is captured in the notion of \textit{completeness:} a metric space is said to be \textit{(Cauchy) complete} iff every Cauchy sequence of the space is convergent. A subset of a metric space is said to be complete iff it is complete as a metric subspace. One example of a complete metric space is the real line $\mathbb{R}$ equipped with its standard metric
    \begin{align}
        \rho(x,y)=|x-y|.
    \end{align}
    
    Every metric $(X,\rho)$ space can be isometrically imbedded in a unique least complete metric space known as the \textit{completion} of $(X,\rho)$. To construct the completion we define a pseudometric $\tilde{\rho}$ on the space of Cauchy sequences of $X$ via
    \begin{align}
        \tilde{\rho}(\set{x_n}_{n\in \mathbb{N}},\set{y_n}_{n\in \mathbb{N}})=\lim_{n\rightarrow \infty}\rho(x_n,y_n)
    \end{align}
    where the right-hand-side exists since the real numbers are complete. $(X,\rho)$ isometrically imbeds in the metric space induced by $\tilde{\rho}$ which is also the least complete metric space to contain $(X,\rho)$ in this manner.
    
    The \textit{closure} $\overline{A}$ of a set $A\subseteq X$ is the set of limits of all sequences $\set{x_n}_{n\in \mathbb{N}}\subseteq A$ while a set is \textit{closed} iff it is equal to its closure. It is clear that every complete subset of a metric space is closed since every convergent sequence is Cauchy; as a partial converse, every closed subset of a \textit{complete} metric space is complete as a metric subspace. 
    
    For any set $A\subseteq X$ and any $\varepsilon>0$, the \textit{$\varepsilon$-thickening} of $A$ is defined as the set
    \begin{align}
        A_\varepsilon=\bigcup_{x\in X}B_\varepsilon(x).
    \end{align}
    A set $A\subseteq X$ is said to be a \textit{$\varepsilon$-net} in $X$ iff $X=A_\varepsilon$. $\varepsilon$-thickenings and $\varepsilon$-nets will feature prominently in the subsequent.
    
    A set $A\subseteq X$ is said to be \textit{bounded} iff it is contained in some open ball of the space $X$. $X$ is \textit{totally bounded} iff for each $\varepsilon>0$, $X$ admits a finite $\varepsilon$-net. Every totally bounded set is bounded. $X$ is said to be \textit{compact} iff it is complete and totally bounded, while a subset $A\subseteq X$ is said to be compact iff it is compact as a metric subspace. Clearly then every compact subset is closed and bounded. The converse does not hold for arbitrary metric spaces, though it does for Euclidean spaces by the Heine-Borel theorem. Note that every compact metric space admits a finite $\varepsilon$-net for each $\varepsilon>0$.
    
    \subsection{Defining the Gromov-Hausdorff Distance}
    Let $(X,\rho)$ be a (pseudo)metric space, i.e. let $\rho$ be a metric on $X$. For any $x\in X$ and any $A\subseteq X$ we define
    \begin{align}
        \rho(x,A)=\inf_{y\in A} \rho(x,y)
    \end{align}
    This is essentially the smallest distance between $x$ and a point of $A$. Clearly if $x\in A$ then $\rho(X,A)=0$. The \textit{Hausdorff distance} between two subsets $A,\:B\subseteq X$ is then defined
    \begin{align}
        \rho_H(A,B)=\max\set{\sup_{x\in A}\rho(x,B),\sup_{y\in B}\rho_X(y,A)}.
    \end{align}
    The Hausdorff distance has an alternative characterisation in terms of $\varepsilon$-thickenings. In particular, it is simple to show that:
    \begin{align}
        \rho_H(A,B)=\inf\set{\varepsilon>0:A\subseteq B_\varepsilon\:\rm{and}\:B\subseteq A_\varepsilon}.
    \end{align}
    To see this, it is sufficient to note that if $A\subseteq B_\varepsilon$ for some $\varepsilon>0$, then $\rho(x,B)\leq \varepsilon$ for each $x\in A$ and similarly $B\subseteq A_\varepsilon$ implies $\rho(y,A)\leq \varepsilon$ for each $y\in B$. The suprema/infima then force equality.
    
    The Hausdorff distance defines a pseudometric on the space $\mathfrak{p}(X)$ of all subsets of $X$: positivity, semidefiniteness and symmetry are all trivial. For subadditivity note that
    \begin{align}
        A_{\varepsilon+\epsilon}\subseteq (A_\varepsilon)_\epsilon
    \end{align}
    by the subadditivity of $\rho$. Thus we have an induced metric space $\mathfrak{p}(X)/\rho_H$. It turns out that we may identify $\mathfrak{p}(X)/\rho_H$ with the set $\mathfrak{C}(X)$ of closed subsets of $X$. To see this note that:
    \begin{enumerate}
        \item $\rho_H(A,\overline{A})=0$ for all $A\subseteq X$.
        \item $\rho_H(A,B)\neq 0$ for distinct $A,\:B\in \mathfrak{C}(X)$.
    \end{enumerate}
    That is to say, the closed sets of $X$ can be chosen as representatives of the equivalence classes in $\mathfrak{p}(X)/\rho_H$. For the first of these results note that $\rho(x,\overline{A})=0$ for all $x\in A$ since $A\subseteq \overline{A}$. Similarly, each $y\in \overline{A}$ is the limit of some sequence of elements in $A$ and thus $\rho(y,A)=0$, which ensures $\rho_H(A,\overline{A})=0$. For the second statement suppose that $\rho_H(A,B)=0$, i.e. $A\subseteq B_\varepsilon$ and $B\subseteq A_\varepsilon$ for all $\varepsilon>0$. Then $A\subseteq \overline{B}=B\subseteq \overline{A}=A$ and $A=B$ as required.
    
    We have the following properties that we state without proof:
    \begin{enumerate}
        \item $\rho_H(A,B)<\infty$ if $A$ and $B$ are bounded.
        \item $(\mathfrak{C}(X),\rho_H)$ is a complete (compact) metric space if $(X,\rho)$ is complete (compact).
    \end{enumerate}
    In particular the first of these properties ensures that $\rho_H$ restricts nicely to a finite metric on the compact subsets of $X$. We are now ready to define the Gromov-Hausdorff distance:
    \begin{definition}
    Let $X$ and $Y$ be compact metric spaces. Then we define the \textit{Gromov-Hausdorff distance} between $X$ and $Y$ as
    \begin{align}
        \rho_{GH}(X,Y)=\inf \rho_H^Z(\iota_1(X),\iota_2(Y))
    \end{align}
    where the infimum is taken over all triples $(Z,\iota_1,\iota_2)$ where $Z$ is a metric space and $\iota_1$ and $\iota_2$ are isometric imbeddings of $X$ and $Y$ into $Z$ respectively, and $\rho_H^Z$ is the Hausdorff metric in $Z$.
    \end{definition}
    Recall that the infimum preserves subadditivity and note that  symmetry and positivity  of the Gromov-Hausdorff distance are trivial. Also if $X$ and $Y$ are isometric, any isometry $f:X\rightarrow Y$ defines an isometric imbedding of $X$ into $Y$ such that $\rho_H^Y(f(X),Y)=0$ and $\rho_{GH}(X,Y)=0$, i.e. the Gromov-Hausdorff metric is immediately a finite pseudometric on the \textit{Gromov-Hausdorff space} of isometry classes of compact metric spaces. In fact it can be shown that if the Gromov-Hausdorff distance between two spaces vanishes then the spaces are isometric, and the Gromov-Hausdorff distance is a finite metric---not just pseudometric---on the Gromov-Hausdorff space.
    \subsection{Gromov-Hausdorff Limits and Emergent Geometry}
    The purpose of this section is threefold: one we introduce various ways to calculate/estimate the Gromov-Hausdorff distance between two spaces; two, we discuss various ways of approaching Gromov-Hausdorff convergence. These results are used in the next section to show the convergence of classical configurations to $S^1_r$. Finally we prove two results on the convergence of finite spaces that suggests that Gromov-Hausdorff convergence is a very promising formalism for investigating questions about emergent geometry in general. Note that the material in this section is widely known, but many of the results are simply quoted and not given explicitly in standard references \cite{BuragoBuragoIvanov_MetricGeometry, Gromov_MetricStructures}. As such we have chosen to present some proofs rather explicitly.
    
    Since the Gromov-Hausdorff distance between two metric spaces $X$ and $Y$ is defined by minimising over all possible imbeddings of $X$ and $Y$ into an ambient metric space $Z$, the most obvious strategy to obtain an estimate of an upper-bound of the Gromov-Hausdorff distance is of course to construct an explicit isometric imbedding into some given metric space $Z$. In fact we immediately have the following lemma:
    \begin{lemma}\label{lemma: IsometricImbedding}
    Let $\set{X_k}_{k\in \mathbb{N}}$ and $Y$ be compact metric spaces. Given a sequence of metric spaces $\set{Z_k}_{k\in \mathbb{N}}$ and isometric imbeddings $\set{\iota_k^X:X_K\rightarrow Z_K}_{k\in \mathbb{N}}$, $\set{\iota_k^Y:Y\rightarrow Z_K}_{k\in \mathbb{N}}$, then $X_k\rightarrow Y$ in the sense of Gromov-Hausdorff if for each $\varepsilon>0$ there is an $N\in \mathbb{N}$ such that $\rho_H^{Z_n}(\iota_n^X(X),\iota_n^Y(Y))<\varepsilon$ for all $n>N$.
    \end{lemma}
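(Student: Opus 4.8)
The plan is to read the conclusion straight off the definition of the Gromov-Hausdorff distance as an infimum over ambient spaces. Recall that $\rho_{GH}(X,Y)$ is the infimum of $\rho_H^Z(\iota_1(X),\iota_2(Y))$ taken over all triples $(Z,\iota_1,\iota_2)$ with $Z$ a metric space and $\iota_1\colon X\to Z$, $\iota_2\colon Y\to Z$ isometric imbeddings. First I would note that, for each fixed $n$, the hypotheses provide exactly such a triple, namely $(Z_n,\iota_n^X,\iota_n^Y)$, so it is admissible in the infimum defining $\rho_{GH}(X_n,Y)$. Hence
\[
\rho_{GH}(X_n,Y)\leq \rho_H^{Z_n}\bigl(\iota_n^X(X_n),\iota_n^Y(Y)\bigr)
\]
for every $n$. (Compactness of $X_n$ and $Y$ ensures their images are bounded, so the right-hand side is finite and no degeneracy occurs; this is also what makes $\rho_{GH}(X_n,Y)$ a genuine finite quantity.)

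Next I would feed the standing assumption into this bound. Fix $\varepsilon>0$ and choose $N\in\mathbb{N}$ with $\rho_H^{Z_n}(\iota_n^X(X_n),\iota_n^Y(Y))<\varepsilon$ for all $n>N$; the displayed inequality then yields $\rho_{GH}(X_n,Y)<\varepsilon$ for all $n>N$. As $\varepsilon>0$ was arbitrary, this says precisely that $\rho_{GH}(X_n,Y)\to 0$ as $n\to\infty$, i.e. $X_n\to Y$ in the sense of Gromov-Hausdorff, which is the assertion.

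There is essentially no obstacle: the lemma is pure bookkeeping with the definition. The only point worth a sentence is that the ambient spaces $Z_n$ are permitted to vary with $n$; this causes no trouble because the Gromov-Hausdorff distance is itself defined by a fresh minimisation over ambient spaces for each pair, so nothing need be held fixed along the sequence. I would also remark in passing that a near-converse holds --- if $X_n\to Y$ in Gromov-Hausdorff one can always produce $Z_n$ and imbeddings realising the distances up to arbitrarily small error --- so the statement is effectively a reformulation of Gromov-Hausdorff convergence; but only the stated sufficiency is needed for the application to $S^1_r$ in the next section.
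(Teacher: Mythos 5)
Your argument is correct and is precisely the reasoning the paper has in mind: it states the lemma as ``immediate'' from the definition of $\rho_{GH}$ as an infimum over ambient imbeddings, and your observation that each triple $(Z_n,\iota_n^X,\iota_n^Y)$ is admissible in that infimum, so $\rho_{GH}(X_n,Y)\leq \rho_H^{Z_n}(\iota_n^X(X_n),\iota_n^Y(Y))$, is exactly that immediate argument spelled out. Nothing is missing.
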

    We use this lemma to show the convergence of cylinders/M\"{o}bius strips to the circle in the next section.
    
    Though this approach to Gromov-Hausdorff convergence is immediate from the definition, it is not always practical because actually constructing the required isometric imbeddings can be a little  difficult. Some reformulations of the Gromov-Hausdorff distance give better methods for estimating the Gromov-Hausdorff distance. We introduce these methods here.
    
    The first reformulation of the Gromov-Hausdorff distance is in terms of pseudometrics: 
    \begin{lemma}\label{lemma: PseudoMetricReduction}
    Let $(X,\rho_X)$ and $(Y,\rho_Y)$ be compact metric spaces.
    \begin{align}
        \rho_{GH}(X,Y)=\inf \rho_H^{X\sqcup Y}(X,Y)
    \end{align}
    where the infimum is taken over pseudometrics on $\rho:X\sqcup Y\rightarrow \mathbb{R}$ such that $\rho|X=\rho_X$ and $\rho|Y=\rho_Y$ and $X\sqcup Y$ denotes the disjoint union of $X$ and $Y$.
    \end{lemma}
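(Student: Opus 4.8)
The plan is to establish the claimed identity by proving the two inequalities $\rho_{GH}(X,Y)\le \inf_\rho \rho_H^{X\sqcup Y}(X,Y)$ and $\rho_{GH}(X,Y)\ge \inf_\rho \rho_H^{X\sqcup Y}(X,Y)$ separately, in each case manufacturing the data required by one formulation out of the data supplied by the other. Throughout, the infimum on the right runs over pseudometrics $\rho$ on $X\sqcup Y$ with $\rho|X=\rho_X$ and $\rho|Y=\rho_Y$, and compactness of $X$ and $Y$ (hence boundedness) is what guarantees that all the Hausdorff distances in play are finite.

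For the first inequality I would fix such an admissible $\rho$ and pass to the induced metric space $Z\coloneqq (X\sqcup Y)/\rho$ with canonical quotient map $q\colon X\sqcup Y\to Z$. Because $\rho_X$ and $\rho_Y$ are genuine (definite) metrics, $q$ identifies no two distinct points of $X$, and none of $Y$, so the restrictions $\iota_1\coloneqq q|_X$ and $\iota_2\coloneqq q|_Y$ are isometric imbeddings of $X$ and $Y$ into $Z$. Since the distance in $Z$ between $q(a)$ and $q(b)$ is exactly $\rho(a,b)$, the Hausdorff distance $\rho_H^Z(\iota_1(X),\iota_2(Y))$ equals $\rho_H^{X\sqcup Y}(X,Y)$ computed with $\rho$; hence the triple $(Z,\iota_1,\iota_2)$ witnesses $\rho_{GH}(X,Y)\le \rho_H^{X\sqcup Y}(X,Y)$, and taking the infimum over $\rho$ yields the inequality.

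For the reverse inequality I would start from an arbitrary triple $(Z,\iota_1,\iota_2)$ with $\iota_1\colon X\to Z$ and $\iota_2\colon Y\to Z$ isometric imbeddings, and define $\rho$ on $X\sqcup Y$ by $\rho|X=\rho_X$, $\rho|Y=\rho_Y$, and $\rho(x,y)=\rho(y,x)=\rho_Z(\iota_1(x),\iota_2(y))$ for $x\in X$, $y\in Y$. Positivity and symmetry are immediate, and the mixed triangle inequalities — those with two points in $X$ and one in $Y$, and with two in $Y$ and one in $X$ — all reduce to the triangle inequality in $Z$ once one uses that $\iota_1$ and $\iota_2$ preserve distances; thus $\rho$ is an admissible pseudometric. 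Since the $X$–$Y$ distances recorded by $\rho$ coincide with those between $\iota_1(X)$ and $\iota_2(Y)$ in $Z$, we get $\rho_H^{X\sqcup Y}(X,Y)=\rho_H^Z(\iota_1(X),\iota_2(Y))$, whence $\inf_\rho \rho_H^{X\sqcup Y}(X,Y)\le \rho_{GH}(X,Y)$; combining the two inequalities gives the lemma.

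I do not expect a genuine obstacle. The only points needing care are the injectivity of $q$ on each of $X$ and $Y$ — this is precisely where definiteness of $\rho_X,\rho_Y$ enters, and is why the identification is metric rather than merely pseudometric — and the verification of the mixed triangle inequalities for the pseudometric built from $Z$; both are short. A minor bookkeeping point worth stating explicitly is that the Hausdorff distance between the images of $X$ and $Y$ depends only on the pairwise distances among points of those images, so it is genuinely insensitive to whether one computes it in $Z$, in the subspace of $Z$ spanned by the two images, or in the quotient $(X\sqcup Y)/\rho$.
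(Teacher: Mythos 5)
Your proposal is correct and follows essentially the same route as the paper: one direction builds a pseudometric on $X\sqcup Y$ from a pair of isometric imbeddings into an ambient $Z$, and the other recovers isometric imbeddings from an admissible pseudometric by passing to the induced quotient metric space. Your write-up is somewhat more explicit than the paper's (spelling out the mixed triangle inequalities and the injectivity of the quotient map on each factor), but the underlying argument is identical.
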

    \begin{proof}
        Let $(Z,\rho_Z)$ be a metric space and suppose that $f:X\rightarrow Z$ and $g:Y\rightarrow Z$ are isometric imbeddings. We may define a pseudometric $\rho$ on $X\sqcup Y$ by $\rho|X\times X=\rho_X$, $\rho|Y\times Y=\rho_Y$ and $\rho(x,y)=\rho_Z(f(x),g(y))$ for all $(x,y)\in X\times Y$; $\rho$ obviously inherits positivity, semi-definiteness, symmetry and subadditivity from $\rho_X$, $\rho_Y$ and $\rho_Z$ but need not be definite since it is possible that $f(x)=g(y)$ for some pair $(x,y)\in X\times Y$. Thus $\rho$ is a pseudometric on $X\sqcup Y$. Thus $ \inf \rho_H^{X\sqcup Y}(X,Y)\leq \rho_{GH}(X,Y) $. On the other hand any pseudometric $\rho$ on $X\sqcup Y$ gives rise to a pair of isometric imbeddings $f=\pi\circ \iota_X$ and $g=\pi\circ \iota_Y$ of $X$ and $Y$ respectively into the induced metric space $X\sqcup Y/\rho$ where $\pi:X\sqcup Y\rightarrow X\sqcup Y/\rho$ is the quotient map and $\iota_X$ and $\iota_Y$ are the natural imbeddings of $X$ and $Y$ into $X\sqcup Y$ respectively.
    \end{proof}
    We now discuss a reformulation in terms of the \textit{distortion} of \textit{correspondences} between metric spaces. We begin by introducing the latter notion:
    \begin{definition}
    Let $X$ and $Y$ be sets. A \textit{correspondence} between $X$ and $Y$ is a relation $R\subseteq X\times Y$ such that the domain and codomain satisfy $X=\text{dom}(R)$ and $Y=\text{cod}(R)$ respectively, i.e. for each $x\in X$ there is a $y\in Y$ such that $(x,y)\in R$ and vice versa.
    \end{definition}
    \begin{proposition}\label{proposition: SurjectionCorrespondence}
    A relation $R\subseteq X\times Y$ is a correspondence between $X$ and $Y$ iff there is a set $Z$ and a pair of surjective maps $f:Z\rightarrow X$ and $g:Z\rightarrow Y$ such that $R=\set{(f(z),g(z)):z\in Z}$.
    \end{proposition}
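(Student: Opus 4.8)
The plan is to prove the two implications separately; the statement is essentially a tautological unpacking of the definitions of \emph{correspondence}, \emph{domain} and \emph{codomain}, so no serious machinery is required.

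For the forward implication, suppose $R\subseteq X\times Y$ is a correspondence. I would simply take $Z=R$ itself and let $f\colon R\to X$ and $g\colon R\to Y$ be the restrictions to $R$ of the two coordinate projections $X\times Y\to X$ and $X\times Y\to Y$; that is, $f(x,y)=x$ and $g(x,y)=y$ for $(x,y)\in R$. Then the identity $R=\set{(f(z),g(z)):z\in Z}$ holds trivially. Surjectivity of $f$ is precisely the condition $\text{dom}(R)=X$: given $x\in X$, the correspondence property furnishes some $y\in Y$ with $(x,y)\in R$, and then $f(x,y)=x$. Surjectivity of $g$ follows symmetrically from $\text{cod}(R)=Y$.

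For the converse, suppose $Z$, $f\colon Z\to X$ and $g\colon Z\to Y$ are given with $f$ and $g$ surjective and $R=\set{(f(z),g(z)):z\in Z}$. To see that $R$ is a correspondence I would verify the two defining conditions directly: for any $x\in X$, surjectivity of $f$ gives $z\in Z$ with $f(z)=x$, whence $(x,g(z))=(f(z),g(z))\in R$, so $x\in\text{dom}(R)$; this shows $\text{dom}(R)=X$. The identity $\text{cod}(R)=Y$ follows in the same way using surjectivity of $g$. Hence $R$ satisfies the definition of a correspondence.

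There is no genuine obstacle here; the only point requiring any care is to keep straight which projection (resp.\ surjection) is matched with which of the two defining conditions of a correspondence, and to observe that the canonical choice $Z=R$ already suffices for the forward direction, so that no more elaborate index set need be constructed.
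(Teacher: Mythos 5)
Your proposal is correct and follows exactly the same route as the paper's proof: taking $Z=R$ with the two coordinate projections for the forward direction, and reading off $\text{dom}(R)=X$, $\text{cod}(R)=Y$ from surjectivity for the converse. The only difference is that you spell out the (genuinely trivial) verifications that the paper dismisses in one line.
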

    \begin{proof}
    For necessity suppose that $R$ is a correspondence, let $Z=R$ and let $f$ and $g$ be the projections $(x,y)\mapsto x$ and $(x,y)\mapsto y$ respectively. Then $f$ and $g$ are surjections since $R$ is a correspondence as required. Sufficiency is obvious by the surjectivity of the mappings $f$ and $g$.
    \end{proof}
    \begin{definition}
    Let $(X,\rho_X)$ and $(Y,\rho_Y)$ be metric spaces. The \textit{distortion} of a correspondence $R\subseteq X\times Y$ is defined
    \begin{align}
        \text{dis}(R)\coloneqq \sup\set{|\rho_X(x_1,x_2)-\rho_Y(y_1,y_2)|:(x_1,y_1),\:(x_2,y_2)\in R}.
    \end{align}
    \end{definition}
    \begin{corollary}
    Let $(X,\rho_X)$ and $(Y,\rho_Y)$ be metric spaces. 
    \begin{enumerate}
        \item Let $Z$ a set and let $R$ be a correspondence between $X$ and $Y$ such that $R=\set{(f(z),g(z)):z\in Z}$ for some surjections $f:Z\rightarrow Z$, $g:Z\rightarrow Y$ as per proposition \ref{proposition: SurjectionCorrespondence}. Then we have
        \begin{align}
        \text{dis}(R)= \sup_{z_1,\:z_2\in Z}|\rho_X(f(z_1),f(z_2))-\rho_Y(g(z_1),g(z_2))|.
        \end{align}
        \item $\text{dis}(R)=0$ iff there is an isometry $f:X\rightarrow Y$ such that $R=\set{(x,f(x)):x\in X}$.
    \end{enumerate}
    \end{corollary}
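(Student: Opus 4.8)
The plan is to handle the two parts in turn; each is an immediate consequence of unwinding the definition of distortion, so the write-up is bookkeeping rather than anything substantive. (Note that the hypothesis in part (i) should read $f:Z\rightarrow X$; the surjectivity of $f$ and $g$ is exactly what guarantees that $R$ is a correspondence, in accordance with Proposition \ref{proposition: SurjectionCorrespondence}.)

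For part (i) I would argue purely set-theoretically. By definition $\text{dis}(R)$ is the supremum of $|\rho_X(x_1,x_2)-\rho_Y(y_1,y_2)|$ over all pairs $(x_1,y_1),(x_2,y_2)\in R$. Since $R=\set{(f(z),g(z)):z\in Z}$, every such pair has the form $(f(z_i),g(z_i))$ for some $z_i\in Z$, and conversely every choice of $z_1,z_2\in Z$ produces a pair of elements of $R$. Hence the two families of real numbers over which the supremum is taken coincide, and the suprema agree. No metric structure beyond the bare maps $f$ and $g$ is used here.

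For part (ii), the backward implication follows from part (i) with $Z=X$, $f=\mathrm{id}_X$ and the given isometry as the second surjection: the quantity inside the supremum is identically zero because an isometry preserves distances. For the forward implication, assume $\text{dis}(R)=0$, so $\rho_X(x_1,x_2)=\rho_Y(y_1,y_2)$ whenever $(x_1,y_1),(x_2,y_2)\in R$. I would then establish, in order: (a) $R$ is the graph of a total map $f:X\rightarrow Y$ --- if $(x,y_1),(x,y_2)\in R$, comparing these two pairs (which share their first coordinate) gives $\rho_Y(y_1,y_2)=\rho_X(x,x)=0$, hence $y_1=y_2$ by definiteness of $\rho_Y$, and $f$ is defined on all of $X$ because $\text{dom}(R)=X$; (b) $f$ is surjective, because $\text{cod}(R)=Y$; (c) $f$ is injective, because $f(x_1)=f(x_2)$ forces $\rho_X(x_1,x_2)=\rho_Y(f(x_1),f(x_2))=0$; (d) $f$ preserves distances, because $(x_1,f(x_1)),(x_2,f(x_2))\in R$ gives $|\rho_X(x_1,x_2)-\rho_Y(f(x_1),f(x_2))|\leq\text{dis}(R)=0$. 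Together (a)--(d) say exactly that $f$ is an isometry and $R=\set{(x,f(x)):x\in X}$.

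I do not expect any genuine obstacle; the two points that merit attention are that totality and surjectivity of $f$ in part (ii) come from the correspondence hypothesis $\text{dom}(R)=X$, $\text{cod}(R)=Y$ rather than from $\text{dis}(R)=0$ alone, and that steps (a) and (c) rely on definiteness of $\rho_X$ and $\rho_Y$ --- i.e. on these being genuine metrics and not merely pseudometrics --- in order to pass from vanishing distance to equality of points.
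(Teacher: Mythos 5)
Your proposal is correct and follows essentially the same route as the paper: part (i) is a direct unwinding of the definition (the paper simply declares it trivial), and for part (ii) the paper likewise shows that $\text{dis}(R)=0$ forces $R$ to be the graph of a bijection preserving distances, using definiteness of the metrics exactly as in your steps (a)--(d). Your observation that the codomain of $f$ in the statement should be $X$ rather than $Z$ is also correct.
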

    \begin{proof}
    (i) is trivial. For (ii) note that $\text{dis}(R)$ is the supremum of a positive quantity so it vanishes iff $|\rho_X(x_1,x_2)-\rho_Y(y_1,y_2)|=0$ for all $(x_1,y_1),\:(x_2,y_2)\in R$. Thus for any pair of points $(x_1,x_2)\in X$ we must have $\rho_Y(y_1,y_2)=\rho_X(x_1,x_2)$ for any $y_k\in Y_{x_k}\coloneqq \set{y\in \text{cod}(R):(x_k,y)\in R}$, $k\in \set{1,2}$. We show that $R$ is the graph of a unique bijection $f:x\mapsto y$, i.e. $f(x)=y$ iff $(x,y)\in R$: in particular $\rho_X(x,x)=0$ trivially for all $x\in X$ so $\rho_Y(y_1,y_2)=0$ for all $y_1,\:y_2\in Y_x$ by the preceding claim. But then $y_1=y_2$ by definiteness and $f$ is a well-defined bijection such that $\rho_Y(f(x_1),f(x_2))=\rho_X(x_1,x_2)$ for all $x_1,\:x_2\in X$ and $f$ is an isometry as required. The converse is trivial.
    \end{proof}
    We now show that the Gromov-Hausdorff distance is (up to a constant) the infimum of the distortion of correspondences between the relevant metric spaces:
    \begin{theorem}\label{theorem: CorrespDist}
    Let $(X,\rho_X)$ and $(Y,\rho_Y)$ be compact metric spaces. Then
    \begin{align}
        \rho_{GH}(X,Y)=\frac{1}{2}\inf_{R}\text{dis}(R)
    \end{align}
    where the infimum is taken over all correspondences between $X$ and $Y$.
    \end{theorem}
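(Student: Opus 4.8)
The plan is to prove the two inequalities $\rho_{GH}(X,Y)\le \tfrac12\inf_R\text{dis}(R)$ and $\tfrac12\inf_R\text{dis}(R)\le \rho_{GH}(X,Y)$ separately, in each case routing through the pseudometric characterisation of Lemma~\ref{lemma: PseudoMetricReduction} so that we never have to manipulate an abstract ambient space $Z$ directly. Everything in sight is finite because $X$ and $Y$ are compact, hence bounded.

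For the inequality $\tfrac12\inf_R\text{dis}(R)\le \rho_{GH}(X,Y)$ I would argue as follows. Fix $\varepsilon>0$ and, using the definition of $\rho_{GH}$, choose a metric space $Z$ together with isometric imbeddings of $X$ and $Y$ into $Z$ satisfying $\rho_H^Z(X,Y)<\rho_{GH}(X,Y)+\varepsilon=:r$; identify $X$ and $Y$ with their images. Set
\begin{align}
R=\set{(x,y)\in X\times Y:\rho_Z(x,y)\le r}.
\end{align}
Since $\rho_H^Z(X,Y)<r$, every $x\in X$ lies within distance $r$ of some point of $Y$ and vice versa, so $R$ is a correspondence. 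For any $(x_1,y_1),(x_2,y_2)\in R$ the triangle inequality in $Z$ gives
\begin{align}
|\rho_X(x_1,x_2)-\rho_Y(y_1,y_2)|=|\rho_Z(x_1,x_2)-\rho_Z(y_1,y_2)|\le \rho_Z(x_1,y_1)+\rho_Z(x_2,y_2)\le 2r,
\end{align}
whence $\text{dis}(R)\le 2r$, so $\tfrac12\inf_R\text{dis}(R)\le \rho_{GH}(X,Y)+\varepsilon$; letting $\varepsilon\to 0$ finishes this direction.

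For the reverse inequality, given an arbitrary correspondence $R$ I would put $r=\tfrac12\text{dis}(R)$ and build a pseudometric $\rho$ on $X\sqcup Y$ by keeping $\rho_X$ on $X\times X$, keeping $\rho_Y$ on $Y\times Y$, and defining, for $x\in X$ and $y\in Y$,
\begin{align}
\rho(x,y)=\inf\set{\rho_X(x,x')+r+\rho_Y(y',y):(x',y')\in R}.
\end{align}
Positivity, semidefiniteness and symmetry are immediate. Evaluating the infimum at any $(x,y)\in R$ shows $\rho(x,y)\le r$, so each point of $X$ lies within $r$ of $Y$ in this pseudometric and conversely; thus $\rho_H^{X\sqcup Y}(X,Y)\le r$, and Lemma~\ref{lemma: PseudoMetricReduction} gives $\rho_{GH}(X,Y)\le r=\tfrac12\text{dis}(R)$. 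Taking the infimum over correspondences then yields the claim.

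The step I expect to demand the most care is checking that $\rho$ as defined above is genuinely subadditive --- equivalently, that it really does restrict to $\rho_X$ and $\rho_Y$ and that routing through the opposite factor never produces a shortcut. The delicate case is the mixed inequality $\rho_X(x_1,x_2)\le \rho(x_1,y)+\rho(y,x_2)$ for $x_1,x_2\in X$ and $y\in Y$: here I would pick near-optimal witnesses $(x_1',y_1'),(x_2',y_2')\in R$ in the two infima and combine the triangle inequalities in $X$ and in $Y$ with the defining estimate $\rho_X(x_1',x_2')-\rho_Y(y_1',y_2')\le \text{dis}(R)=2r$ to conclude. The remaining triangle-inequality cases (e.g.\ $x\in X$ with $y_1,y_2\in Y$, or all four points mixed) are handled similarly and more easily, and the final assembly is routine.
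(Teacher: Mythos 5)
Your proposal is correct and follows essentially the same route as the paper: the correspondence $R=\set{(x,y):\rho_Z(x,y)\le r}$ with the quadrilateral inequality in one direction, and the pseudometric $\rho(x,y)=\inf\set{\rho_X(x,x')+r+\rho_Y(y',y):(x',y')\in R}$ fed into Lemma~\ref{lemma: PseudoMetricReduction} in the other. The one point where you go beyond the paper is in flagging and sketching the mixed triangle-inequality case $\rho_X(x_1,x_2)\le\rho(x_1,y)+\rho(y,x_2)$ --- the case that genuinely uses the bound $\rho_X(x_1',x_2')-\rho_Y(y_1',y_2')\le\text{dis}(R)=2r$ --- which the paper's own verification of subadditivity silently omits; your sketch of it with near-optimal witnesses is correct.
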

    \begin{proof}
    We show (i) that if $\rho_{GH}(X,Y)<r$ then there is a correspondence between $X$ and $Y$ such that $\text{dis}(R)<2r$ and (ii) that $2\rho_{GH}(X,Y)\leq \text{dis}(R)$ for any correspondence $R$ between $X$ and $Y$.
    \begin{enumerate}
        \item Given $\rho_{GH}(X,Y)<r$ we may assume without loss of generality that $X$ and $Y$ are subspaces of $(Z,\rho_Z)$ with $\rho_Z^H(X,Y)<r$; then defining $R=\set{(x,y)\in X\times Y:\rho_Z(x,y)<r}$ gives $R$ a correspondence. Then $\text{dis}(R)<2r$ by the triangle inequality:
        \begin{align}
            \text{dis}(R)\leq |\rho_Z(x_1,y_1)-\rho_Z(x_2,y_2)|\leq \rho_Z(x_1,y_1)+\rho_Z(x_2,y_2)<2r\nonumber
        \end{align}
        for any $x_1,\:x_2\in X$ and $y_1,\:y_2\in Y$.
        \item Choose some correspondence $R$ and define $r\coloneqq \frac{1}{2}\text{dis}(R)$. By lemma \ref{lemma: PseudoMetricReduction} it is sufficient to provide a pseudometric $\rho$ on $X\sqcup Y$ such that $\rho_H^{X\sqcup Y}(X,Y)\leq r$ since then $\rho_{GH}(X,Y)\leq \rho_H^{X\sqcup Y}(X,Y)\leq \frac{1}{2}\text{dis}(R)$. Let us define $\rho$ via $\rho|X\times X=\rho_X$, $\rho|Y\times Y=\rho_Y$ and
        \begin{align}
            \rho(x,y)=\inf \set{r+\rho_X(x,x_1)+\rho_Y(y_1,y):(x_1,y_1)\in R}.\nonumber
        \end{align}
        Symmetry and positive semi-definiteness are trivial---note that $\rho(x,y)\geq r$ but $x\neq y$ for $x\in X$ and $y\in Y$ since $X\sqcup Y$ is the \textit{disjoint} union of $X$ and $Y$. For subadditivity we note that we need to check that $\rho(x,y)\leq \rho(x,z)+\rho(z,y)$ for $x\in X$, $y\in Y$ and $z\in X\cup Y$; subadditivity is a trivial consequence of the subadditivity of $\rho_X$ and $\rho_Y$ in the cases where $x,\:y,:z\in X$ or $x,\:y,\:z\in Y$ respectively. Suppose that $z\in X$. Then
        \begin{align}
            \rho(x,y)\leq r+\rho_X(x,x_1)+\rho_Y(y_1,y)\leq \rho_X(x,z)+(r+\rho_X(z,x_1)+\rho_Y(y_1,y)\nonumber
        \end{align}
        for all $(x_1,y_1)\in R$, where we have used the subadditivity of $\rho_X$ to write $\rho_X(x,x_1)\leq \rho_X(x,z)+\rho_X(z,x_1)$. Taking the infimum over all pairs $(x_1,y_1)\in R$ on both sides gives the desired result. A similar argument holds if $z\in Y$. Thus $\rho$ is a pseudometric on $X\sqcup Y$. 
        
        It remains to show that the Hausdorff distance associated to $\rho$ is bounded above by $r$: to see this note that every point of $X$ lies within a distance $r$ of some point of $Y$ and vice versa. In particular, given some $x\in X$ we have some $y\in Y$ such that $(x,y)\in R$ since $R$ is a correspondence so $r\leq \rho(x,y)\leq r+\rho_X(x,x)+\rho_Y(y,y)=r$ and $X\subseteq Y_r$, the $r$-thickening of $Y$. A similar argument shows $Y\subseteq X_r$ so $\rho_H(X,Y)\leq r$ as required.
    \end{enumerate}
    \end{proof}
    We consider one final method to estimate the Gromov-Hausdorff distance. This is in terms of \textit{nearly} isometric imbeddings:
    \begin{definition}
        Let $(X,\rho_X)$ and $(Y,\rho_Y)$ be metric spaces. 
        \begin{enumerate}
            \item The \textit{distortion} of a mapping $f:X\rightarrow Y$ is defined
            \begin{align}
                \text{dis}(f)\coloneqq \sup\set{|\rho_Y(f(x),f(y))-\rho_X(x,y))|:(x,y)\in X^2}.
            \end{align}
            \item A mapping $f:X\rightarrow Y$ is said to be a \textit{$\varepsilon$-isometry} iff $f(X)$ is a $\varepsilon$-net in $Y$ and $\text{dis}(f)<\varepsilon$.
        \end{enumerate}
    \end{definition}
    \begin{lemma}\label{lemma: NearIsometry}
        Let $X$ and $Y$ be compact metric spaces. If $\rho_{GH}(X,Y)<\varepsilon$ then there is a $2\varepsilon$-isometry $f:X\rightarrow Y$. Similarly if there is a $\varepsilon$-isometry $f:X\rightarrow Y$ then $\rho_{GH}(X,Y)<\frac{3}{2}\varepsilon$.
    \end{lemma}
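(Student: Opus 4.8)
The plan is to deduce both halves of the lemma from the correspondence characterisation of the Gromov--Hausdorff distance established in Theorem \ref{theorem: CorrespDist}, simply translating back and forth between correspondences and (near-)isometric maps.

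For the first claim, suppose $\rho_{GH}(X,Y)<\varepsilon$. By Theorem \ref{theorem: CorrespDist} there is a correspondence $R\subseteq X\times Y$ with $\text{dis}(R)<2\varepsilon$. First I would use the correspondence property and choice to pick, for each $x\in X$, some $f(x)\in Y$ with $(x,f(x))\in R$, and then verify the two requirements for $f$ to be a $2\varepsilon$-isometry. The bound $\text{dis}(f)\leq\text{dis}(R)<2\varepsilon$ is immediate because the graph of $f$ is contained in $R$. For the net property: given $y\in Y$, the correspondence property supplies $x\in X$ with $(x,y)\in R$, and applying the definition of $\text{dis}(R)$ to the pairs $(x,f(x)),\,(x,y)\in R$ together with $\rho_X(x,x)=0$ gives $\rho_Y(f(x),y)<2\varepsilon$; hence $f(X)$ is a $2\varepsilon$-net in $Y$.

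For the converse, given an $\varepsilon$-isometry $f:X\to Y$ I would build the correspondence $R=\set{(x,y)\in X\times Y:\rho_Y(y,f(x))\leq\varepsilon}$. This is genuinely a correspondence: $(x,f(x))\in R$ for every $x\in X$, and for every $y\in Y$ the fact that $f(X)$ is an $\varepsilon$-net yields an $x$ with $\rho_Y(y,f(x))<\varepsilon$. The key estimate is $\text{dis}(R)\leq 2\varepsilon+\text{dis}(f)$, obtained by a triangle-inequality sandwich: for $(x_1,y_1),(x_2,y_2)\in R$ one bounds $\rho_Y(y_1,y_2)$ above by $\rho_Y(f(x_1),f(x_2))+2\varepsilon\leq\rho_X(x_1,x_2)+\text{dis}(f)+2\varepsilon$, and symmetrically bounds $\rho_X(x_1,x_2)$ above by $\rho_Y(y_1,y_2)+\text{dis}(f)+2\varepsilon$. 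Since $f$ is an $\varepsilon$-isometry, $\text{dis}(f)<\varepsilon$, so $\text{dis}(R)<3\varepsilon$, and Theorem \ref{theorem: CorrespDist} gives $\rho_{GH}(X,Y)\leq\tfrac{1}{2}\text{dis}(R)<\tfrac{3}{2}\varepsilon$.

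There is no real obstacle here. The only points needing a little care are the bookkeeping of strict versus non-strict inequalities --- which is why it is convenient to keep $\text{dis}(f)$ explicit in the estimate $\text{dis}(R)\leq 2\varepsilon+\text{dis}(f)$ rather than immediately replacing it by $\varepsilon$, so that the conclusions $\text{dis}(f)<2\varepsilon$ and $\rho_{GH}<\tfrac{3}{2}\varepsilon$ come out strict --- the use of choice to extract $f$ from $R$ in the first part, and the verification that the relation $R$ built from $f$ has full domain and codomain. Compactness plays no role beyond what Theorem \ref{theorem: CorrespDist} already presupposes.
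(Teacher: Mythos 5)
Your proposal is correct and follows essentially the same route as the paper: both directions are deduced from Theorem \ref{theorem: CorrespDist}, with a choice function extracting the $2\varepsilon$-isometry from a low-distortion correspondence in one direction, and the relation $R=\set{(x,y):\rho_Y(y,f(x))\leq\varepsilon}$ with the same triangle-inequality estimate $\text{dis}(R)\leq 2\varepsilon+\text{dis}(f)<3\varepsilon$ in the other. The only (immaterial) difference is your non-strict inequality in the definition of $R$, and your explicit bookkeeping of $\text{dis}(f)$ is a slight tidying of the paper's argument.
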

    \begin{proof}
        Suppose that $\rho_{GH}(X,Y)<\varepsilon$; then by theorem \ref{theorem: CorrespDist} there is a correspondence $R\subseteq X\times Y$ such that $\text{dis}(R)<2\varepsilon$. We may construct a mapping $f:X\rightarrow Y$ as a choice function $f:X\rightarrow \bigsqcup_{x\in X}\set{y\in Y:(x,y)\in R}$. Clearly $\text{dis}(f)\leq \text{dis}(R)<2\varepsilon$. To see that $f(X)$ is a $2\varepsilon$-net in $Y$, for each $y\in Y$ choose an $x\in X$ such that $(x,y)\in R$. Then
        \begin{align}
            \rho_Y(f(x),y)&=\rho_Y(f(x),y)- \rho_X(x,x)+\rho_X(x,x)\nonumber\\
            &\leq |\rho_Y(f(x),y)- \rho_X(x,x)|+\rho_X(x,x)\nonumber\\
            &\leq \rho_X(x,x)+\text{dis}(R)\nonumber\\
            &<2\varepsilon\nonumber
        \end{align}
        as required. Now suppose that $f:X\rightarrow Y$ is a $\varepsilon$-isometry and construct $R\subseteq X\times Y$ by
        \begin{align}
            R=\set{(x,y)\in X\times Y:\rho_Y(f(x),y)<\varepsilon}.\nonumber
        \end{align}
        Clearly $(x,f(x))\in R$ and $\text{dom}(R)=X$; also since $f(X)$ is a $\varepsilon$-net in $Y$, each $y\in Y$ lies within $\varepsilon$ of $f(x)$ for some $x\in X$ and $\text{cod}(R)=Y$, i.e. $R$ is a correspondence as required. Now for any pairs $(x_1,y_1)$ and $(x_2,y_2)\in R$ we have
        \begin{align}
            |\rho_Y(y_1,y_2)-\rho_X(x_1,x_2)|\leq \rho_Y(y_1,f(x_1))+\rho_Y(f(x_2),y_2)+|\rho_Y(f(x_1),f(x_2))-\rho_X(x_1,x_2)|\leq 2\varepsilon +\text{dis}(f)<3\varepsilon
        \end{align}
        where we have used subadditivity in the first step and the definition of $R$ and $\text{dis}(f)$ in the second, while the final step follows since $f$ is an $\varepsilon$-isometry and $\text{dis}(f)<\varepsilon$. Taking the supremum of both sides over all pairs $(x_1,y_2),\:(x_2,y_2)\in R$ gives $\text{dis}(R)<3\varepsilon$ so the required result follows immediately from theorem \ref{theorem: CorrespDist}
    \end{proof}
    This gives an essentially complete set of methods for estimating the Gromov-Hausdorff distance. We now turn back to more explicit convergence properties.
    
    We shall show one basic lemma that demonstrates that Gromov-Hausdorff convergence reduces to convergence of finite subsets:
    \begin{definition}
        Two compact metric spaces $X$ and $Y$ are said to be \textit{$(\varepsilon,\delta)$-approximations} of one another iff we have finite $\varepsilon$-nets $A\subseteq X$ and $B\subseteq Y$ of cardinality $N$ such that $|\rho_X(x_k,x_{\ell})-\rho_Y(y_k,y_{\ell})|<\delta$ for all $k,\:\ell\in \set{0,...,N-1}$.
    \end{definition}
    \begin{lemma}
        Let $X$ and $Y$ be compact metric spaces. If $X$ is an $(\varepsilon,\delta)$-approximation of $Y$ then $\rho_{GH}(X,Y)<2\varepsilon+\frac{1}{2}\delta$. Similarly if $\rho_{GH}(X,Y)<\varepsilon$ then $X$ is a $5\varepsilon$-approximation of $Y$.
    \end{lemma}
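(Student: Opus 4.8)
The plan is to prove the two implications separately, reducing each to a result already in hand: Theorem~\ref{theorem: CorrespDist}, which identifies $\rho_{GH}$ with half the infimal distortion of a correspondence, for the first implication, and Lemma~\ref{lemma: NearIsometry}, which extracts a near-isometry from a small Gromov--Hausdorff distance, for the second. Throughout, ``$5\varepsilon$-approximation'' is read as ``$(5\varepsilon,5\varepsilon)$-approximation''.

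For the first implication, suppose $X$ is an $(\varepsilon,\delta)$-approximation of $Y$, witnessed by finite $\varepsilon$-nets $A=\{x_0,\dots,x_{N-1}\}\subseteq X$ and $B=\{y_0,\dots,y_{N-1}\}\subseteq Y$ with $|\rho_X(x_k,x_\ell)-\rho_Y(y_k,y_\ell)|<\delta$ for all $k,\ell$. I would take the correspondence
\[ R=\{(x,y)\in X\times Y:\ \exists k,\ \rho_X(x,x_k)\le\varepsilon\ \text{and}\ \rho_Y(y,y_k)\le\varepsilon\}, \]
which has full domain and codomain because $A$ and $B$ are $\varepsilon$-nets. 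The key estimate is that, given $(x,y),(x',y')\in R$ with associated indices $k,\ell$, two uses of the triangle inequality give $|\rho_X(x,x')-\rho_X(x_k,x_\ell)|\le 2\varepsilon$ and $|\rho_Y(y,y')-\rho_Y(y_k,y_\ell)|\le 2\varepsilon$; combining with the defining inequality of the approximation yields $\text{dis}(R)\le 4\varepsilon+\delta$, whence Theorem~\ref{theorem: CorrespDist} gives $\rho_{GH}(X,Y)\le 2\varepsilon+\tfrac12\delta$. To obtain the strict inequality in the statement I would replace $\varepsilon$ in the definition of $R$ by the covering radius $\varepsilon'\coloneqq\max_{x\in X}\rho(x,A)$ of $A$, which by compactness of $X$ is attained and hence satisfies $\varepsilon'<\varepsilon$; the same computation then gives $\rho_{GH}(X,Y)\le 2\varepsilon'+\tfrac12\delta<2\varepsilon+\tfrac12\delta$.

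For the second implication, suppose $\rho_{GH}(X,Y)<\varepsilon$. By Lemma~\ref{lemma: NearIsometry} there is a $2\varepsilon$-isometry $f:X\to Y$, so $f(X)$ is a $2\varepsilon$-net in $Y$ and $\text{dis}(f)<2\varepsilon$. Using compactness of $X$, fix a finite $\varepsilon$-net $A=\{x_0,\dots,x_{N-1}\}$ in $X$ and put $y_k=f(x_k)$, $B=\{y_0,\dots,y_{N-1}\}$. Then $A$ is an $\varepsilon$-net, a fortiori a $5\varepsilon$-net, in $X$; $B$ is a $5\varepsilon$-net in $Y$, since for $y\in Y$ one takes $x\in X$ with $\rho_Y(f(x),y)<2\varepsilon$ and then $x_k\in A$ with $\rho_X(x,x_k)<\varepsilon$, giving $\rho_Y(y_k,y)\le\rho_Y(f(x_k),f(x))+\rho_Y(f(x),y)<\varepsilon+\text{dis}(f)+2\varepsilon<5\varepsilon$; and $|\rho_X(x_k,x_\ell)-\rho_Y(y_k,y_\ell)|\le\text{dis}(f)<2\varepsilon<5\varepsilon$. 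Hence $X$ is a $5\varepsilon$-approximation of $Y$.

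No step here is a genuine obstacle; the argument is essentially careful bookkeeping with the triangle inequality. The one point requiring attention is the passage between the strict and non-strict bounds demanded by the statement, handled above via the fact that a finite net in a compact space has covering radius strictly below its defining parameter. A minor technicality in the second implication is that $f$ need not be injective on $A$, so $B$ is best read as the indexed family $(f(x_k))_{k}$; since repetitions affect neither the net property nor the distortion estimate, this is harmless.
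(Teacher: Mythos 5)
Your second implication is essentially identical to the paper's: extract a $2\varepsilon$-isometry $f$ via Lemma~\ref{lemma: NearIsometry}, push a finite $\varepsilon$-net of $X$ forward to get the net in $Y$, and bound the net property and the pairwise distance discrepancies by $\text{dis}(f)$. Your first implication, however, takes a genuinely different route. The paper treats the nets $A\subseteq X$ and $B\subseteq Y$ as compact metric spaces in their own right, gets $\rho_{GH}(A,B)<\tfrac12\delta$ from the bijective correspondence $x_k\leftrightarrow y_k$, and then chains $\rho_{GH}(X,A)<\varepsilon$, $\rho_{GH}(A,B)<\tfrac12\delta$, $\rho_{GH}(B,Y)<\varepsilon$ through the triangle inequality for $\rho_{GH}$. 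You instead build a single correspondence on all of $X\times Y$ by fattening the index matching out to the ambient spaces and estimate its distortion as $4\varepsilon+\delta$ directly. Both are correct and of comparable length; the paper's version reuses the subadditivity of $\rho_{GH}$ and the observation $\rho_{GH}(X,A)\leq\rho_H(X,A)<\varepsilon$ for a net regarded as a subspace, while yours stays entirely within Theorem~\ref{theorem: CorrespDist} and makes the correspondence explicit, which is arguably more self-contained. One small repair is needed in your strictness argument: replacing $\varepsilon$ by the covering radius $\varepsilon'=\max_{x\in X}\rho(x,A)$ of $A$ alone breaks the codomain condition for $R$, since a point of $Y$ need only lie within the covering radius of $B$ in $Y$ of some $y_k$, and that radius may exceed $\varepsilon'$. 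Take $\varepsilon'$ to be the larger of the two covering radii (both are attained and strictly below $\varepsilon$ by compactness and the open-ball definition of a net), and the argument goes through; alternatively, note that the pairwise discrepancies, being finitely many and each strictly below $\delta$, already have maximum $\delta'<\delta$, which also yields the strict inequality.
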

    \begin{proof}
        Suppose that $X$ is an $(\varepsilon,\delta)$-approximation of $Y$; then we have $\varepsilon$-nets $A\subseteq X$ and $B\subseteq Y$ of cardinality $N$ such that such that $|\rho_X(x_k,x_{\ell})-\rho_Y(y_k,y_{\ell})|<\delta$ for all $k,\:\ell\in \set{0,...,N-1}$. by definition. By the latter property, the correspondence $R=\set{(x_k,y_k)\in A\times B:k<N}$ has distortion $\text{dis}(R)<\delta$, i.e. $\rho_{GH}(A,B)<\frac{1}{2}\delta$. Then by subadditivity
        \begin{align}
            \rho_{GH}(X,Y)\leq \rho_{GH}(X,A)+\rho_{GH}(A,B)+\rho_{GH}(B,Y)<2\varepsilon+\frac{1}{2}\delta.\nonumber
        \end{align}
        Now suppose that $\rho_{GH}(X,Y)<\varepsilon$ and let $A=\set{x_n}_{n<N}$ be a finite $\varepsilon$-net in $X$. By lemma \ref{lemma: NearIsometry} we have a $2\varepsilon$-isometry $f:X\rightarrow Y$; define $B=f(A)=\set{f(x_k)}_{k<N}$. Now
        \begin{align}
            |\rho_Y(f(x_k),f(x_{\ell}))-\rho_X(x_k,x_{\ell})|\leq \text{dis}(f)<2\varepsilon \nonumber
        \end{align}
        since $f$ is a $2\varepsilon$-isometry so it is sufficient to show that $B$ is a $5\varepsilon$-net in $Y$. In particular, since $f(X)$ is a $2\varepsilon$-net in $Y$, for each $y\in Y$, there is an $x\in X$ such that $\rho_Y(y,f(x))<2\varepsilon$, while there is an $x_k\in A$ such that $\rho_X(x,x_k)<\varepsilon$ since $A$ is a $\varepsilon$-net in $X$. Thus for some $k\in \set{0,...,N-1}$ we have
        \begin{align}
            \rho_Y(y,f(x_k))\leq \rho_Y(y,f(x))+\rho_y(f(x),f(x_k))\leq \rho_Y(y,f(x))+\text{dis}(f)+\rho_X(x,x_k)<5\varepsilon\nonumber.
        \end{align}
        Since $f(x_k)\in B$ this proves the statement.
    \end{proof}
    We may now prove two central results from the perspective of emergent geometry. First note that if $A$ is a $\varepsilon$-net in $X$ then $\rho_H(A,X)<\varepsilon$.
    \begin{theorem}
        \leavevmode
        \begin{enumerate}
            \item Every compact metric space is the Gromov-Hausdorff limit of sequence of finite metric spaces.
            \item Every compact length space is the Gromov-Hausdoff limit of a finite graph.
        \end{enumerate}
    \end{theorem}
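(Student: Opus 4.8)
The plan is to treat the two parts separately: part (i) is a one-line consequence of the net constructions already set up, whereas part (ii) carries all the work and is where the length-space hypothesis actually enters.

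For part (i), I would fix $n\in\mathbb{N}$ and use compactness of $X$ to choose a finite $\tfrac1n$-net $A_n\subseteq X$, viewed as a metric subspace. The inclusion $A_n\hookrightarrow X$ and the identity of $X$ are isometric imbeddings into the common ambient space $X$, so $\rho_{GH}(A_n,X)\le\rho_H^X(A_n,X)<\tfrac1n$ by the remark preceding the theorem. Hence the finite metric spaces $A_n$ converge to $X$ in the Gromov--Hausdorff sense.

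For part (ii), let $X$ be a compact length space and put $d=\operatorname{diam}(X)<\infty$. The idea is to build, for each $n$, a weighted graph $G_n$ on a \emph{fine} net but with a \emph{coarser} adjacency threshold: take a finite $n^{-2}$-net $V_n\subseteq X$ as the vertex set of $G_n$, join $p\sim q$ iff $\rho_X(p,q)\le\tfrac1n$, and weight the edge $pq$ by $\rho_X(p,q)$. I would then establish two estimates. The lower bound is immediate: any edge-path in $G_n$ from $p$ to $q$ has total weight at least $\rho_X(p,q)$ by subadditivity of $\rho_X$, so $\rho_{G_n}(p,q)\ge\rho_X(p,q)$. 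For the upper bound I would invoke the length-space property: there is a path $\gamma$ from $p$ to $q$ of length $L<\rho_X(p,q)+n^{-2}$; parametrising $\gamma$ by arclength and sampling it at steps slightly below $\tfrac1n$ produces points $q_0=p,q_1,\dots,q_m=q$ with $m=O(Ln)$ and consecutive spacing below $\tfrac1n$, and snapping each interior $q_i$ to a net point $r_i\in V_n$ within $n^{-2}$ keeps consecutive distances $\le\tfrac1n$; thus $r_0\cdots r_m$ is an edge-path of $G_n$ of total weight $\sum_i\rho_X(r_{i-1},r_i)\le L+2mn^{-2}=L+O(n^{-1})\le\rho_X(p,q)+O(n^{-1})$, with the implied constant depending only on $d$. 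Together these give $|\rho_{G_n}(p,q)-\rho_X(p,q)|=O(n^{-1})$ uniformly over $p,q\in V_n$. Since $V_n$ is in particular a $\tfrac1n$-net of $X$, and every point of (the geometric realisation of) $G_n$ lies within $\tfrac1n$ of a vertex, $G_n$ and $X$ are $(\tfrac1n,O(\tfrac1n))$-approximations of one another, so the $(\varepsilon,\delta)$-approximation lemma yields $\rho_{GH}(G_n,X)=O(\tfrac1n)\to0$. If one prefers ``finite graph'' to mean an unweighted graph, I would instead leave $G_n$ unweighted and rescale $\rho_{G_n}$ by $\tfrac1n$; the same two estimates give $|\tfrac1n\rho_{G_n}(p,q)-\rho_X(p,q)|=O(\tfrac1n)$ and the conclusion is unchanged. (Note that the \emph{complete} weighted graph on a finite net is isometric to that net, so reading ``finite graph'' as an arbitrary weighted graph would make (ii) a vacuous corollary of (i) with the length-space hypothesis unused; this is exactly why the substantive reading is the unweighted or metric-realisation one.)

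The hard part is the upper estimate in (ii), and specifically the need to separate the net scale ($n^{-2}$) from the adjacency scale ($n^{-1}$). If one uses a single scale --- an $\varepsilon$-net with $p\sim q$ iff $\rho_X(p,q)\le C\varepsilon$ --- and traces a near-geodesic, one snaps $\sim L/\varepsilon$ subdivision points to the net at a cost $\sim\varepsilon$ each, and the errors accumulate \emph{multiplicatively}, yielding only $\rho_{G_n}\lesssim C'\rho_X$, i.e. a quasi-isometry rather than Gromov--Hausdorff convergence. Making the net much finer than the edge length renders the per-step overhead $O(n^{-2})$ negligible against the step length $\tfrac1n$, turning the distortion into an additive $O(n^{-1})$ error; getting this balance right is essentially the whole content of (ii), the remainder being $\varepsilon$-net bookkeeping already in place together with one use of the length-space property to produce $\gamma$.
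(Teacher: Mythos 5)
Your proof of part (i) is essentially identical to the paper's: choose finite $\varepsilon_n$-nets $A_n$ with $\varepsilon_n\to 0$ and use $\rho_{GH}(A_n,X)\le\rho_H(A_n,X)<\varepsilon_n$. For part (ii) the paper gives no argument at all --- it simply cites Proposition 7.5.5 of Burago--Burago--Ivanov --- so here you have supplied the content the paper outsources. Your construction (a fine $n^{-2}$-net for the vertices, a coarser $n^{-1}$ adjacency threshold, edges weighted by $\rho_X$, the trivial lower bound from subadditivity, and the upper bound obtained by sampling a near-geodesic and snapping to the net) is the standard proof and is correct; your diagnosis that the separation of the net scale from the adjacency scale is what upgrades a multiplicative (quasi-isometric) bound to the additive distortion needed for Gromov--Hausdorff convergence is exactly the right point, and your remark that the length-space hypothesis enters only through the existence of the near-geodesic $\gamma$, while ``finite graph'' must be read as unweighted or as a metric realisation for (ii) to have content beyond (i), is also accurate. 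The one bookkeeping item worth making explicit is that the sampling step must be at most $n^{-1}-2n^{-2}$ so that after snapping each interior sample point to the net the consecutive distances still clear the adjacency threshold $n^{-1}$; you gesture at this with ``slightly below $\tfrac1n$'' and it does not affect the $O(n^{-1})$ conclusion.
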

    \begin{proof}
        Let $X$ be a compact metric space.
        \begin{enumerate}
            \item For each $n\in \mathbb{N}^+$ take a sequence $\varepsilon_n\rightarrow 0$ and choose a finite $\varepsilon_n$-net $A_n\subseteq X$; note that it is possible to choose a finite $\varepsilon_n$-net for each $n$ because $X$ is compact. Then $\rho_{GH}(A_n,X)\leq \rho_{H}(A_n,X)<\varepsilon_n$ for each $n$ and we have the desired result i.e. $X$ is the Gromov-Hausdorff limit of the spaces $A_n$.
            \item See the proof of proposition 7.5.5 in \cite{BuragoBuragoIvanov_MetricGeometry}.
        \end{enumerate}
    \end{proof}
    \subsection{Gromov-Hausdorff Convergence of Classical Configurations}
    The aim of this section is to apply the general convergence results in the preceding section in order to show that $S^1_r$ is the Gromov-Hausdorff limit of classical configurations of some statistical model. Let us warm-up by showing that---appropriately scaled---cylinders and M\"{o}bius strips converge to the circle:
    \begin{definition}
    For each $n\in \mathbb{N}^+$, let $\text{Cyl}_n=S^1_{r}\times[-\ell(n)/2,\ell(n)/2]$ where $S^1_r$ is the circle of radius $r>0$ where $r$ is some fixed radius and $\ell(n)$ is defined as:
    \begin{align}\label{equation: Length}
        \ell(n)=\frac{2\pi r}{n}.
    \end{align}
    Also let $\text{Mob}_n$ denote the M\"{o}bius strip obtained by gluing the strip $[0,\ell(n)n]\times [0,\ell(n)]$ along the boundary lines $(0,t)\cong (\ell(n)n,\ell(n)-t)$, $t\in [0,\ell(n)]$. The space $\text{Cyl}_n$ may be regarded as Riemannian manifolds (with boundary) and thus as metric spaces when equipped with a metric defined via the line element
    \begin{align}
        \text{d} s^2=r^2\text{d}\theta^2+\text{d} z^2,
    \end{align}
    where $(\theta,z)\in (-\pi,\pi]\times [-\ell(n)/2,\ell(n)/2]$. Since M\"{o}bius strips and cylinders are locally identical, \emph{mutatis mutandis} the same holds for the spaces $\text{Mob}_n$.
    \end{definition}
    \begin{proposition}\label{proposition: ConvergenceCyl}
        Let $\set{\mathcal{M}_n}_{n\in \mathbb{N}^+}$ denote a sequence of metric spaces such that $\mathcal{M}_n\in \set{\text{Cyl}_n,\text{Mob}_n}$ for each $n\in \mathbb{N}^+$.
        Then
        \begin{align}
            \lim_{n\rightarrow \infty}\mathcal{M}_n=S^1_r
        \end{align}
        where convergence is in the sense of Gromov-Hausdorff.
    \end{proposition}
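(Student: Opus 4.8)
The plan is to invoke Lemma~\ref{lemma: IsometricImbedding} with the ambient spaces taken to be the $\mathcal{M}_n$ themselves. For each $n$ I would let $\iota_n^X=\mathrm{id}_{\mathcal{M}_n}$ and let $\iota_n^Y\colon S^1_r\hookrightarrow\mathcal{M}_n$ have image the \emph{core circle} $C_n$ of $\mathcal{M}_n$: the curve $\theta\mapsto(\theta,0)$ when $\mathcal{M}_n=\text{Cyl}_n$, and the curve $s\mapsto[(s,\ell(n)/2)]$ when $\mathcal{M}_n=\text{Mob}_n$. The point of choosing the half-height $\ell(n)/2$ in the M\"obius case is that the identification $(0,t)\cong(\ell(n)n,\ell(n)-t)$ fixes the value $t=\ell(n)/2$, so this curve really does close up into an embedded loop and the orientation-reversal of the gluing does not disrupt it. (Both $\text{Cyl}_n$, $\text{Mob}_n$ and $S^1_r$ are compact metric spaces in their Riemannian length metrics, so Lemma~\ref{lemma: IsometricImbedding} applies.)

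The first step is to check that $\iota_n^Y$ is an \emph{isometric} imbedding, i.e.\ that the intrinsic length metric of $\mathcal{M}_n$ restricted to $C_n$ agrees with the arc-length metric of $S^1_r$. One inequality is free: the coordinate projection $\pi_n\colon\mathcal{M}_n\to S^1_r$ forgetting the transversal coordinate is $1$-Lipschitz, since any path $\gamma$ in $\mathcal{M}_n$ has length $\int\sqrt{r^2\dot\theta^2+\dot z^2}\ge\int r\,|\dot\theta|\ge\mathrm{length}(\pi_n\circ\gamma)$; hence $\rho_{\mathcal{M}_n}(P,Q)\ge\rho_{S^1_r}(\pi_nP,\pi_nQ)$ for all $P,Q$, and in particular on $C_n$. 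For the reverse inequality on $C_n$ I would simply produce a path along $C_n$ of the right length: in the cylinder this is immediate, and in the M\"obius strip the shorter arc between two core-circle points either avoids the gluing line (its length equals the coordinate difference) or crosses it exactly once, and since the gluing fixes the height $\ell(n)/2$ the wrapped arc remains on $C_n$; in both cases the arc realises the $S^1_r$-distance. Thus $\iota_n^Y$ is isometric onto a copy of $S^1_r$ inside $\mathcal{M}_n$.

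The second step is to bound the Hausdorff distance inside $\mathcal{M}_n$ between $\iota_n^X(\mathcal{M}_n)=\mathcal{M}_n$ and $\iota_n^Y(S^1_r)=C_n$. Every point of $\mathcal{M}_n$ has the form $(\theta,z)$ with $|z|\le\ell(n)/2$, respectively $[(s,t)]$ with $t\in[0,\ell(n)]$, and is joined to a point of $C_n$ by a transversal segment of length at most $\ell(n)/2$; hence $\rho_H^{\mathcal{M}_n}(\mathcal{M}_n,C_n)\le\ell(n)/2$. Since $\ell(n)=2\pi r/n\to0$ by equation~\ref{equation: Length}, given $\varepsilon>0$ we may pick $N$ with $\ell(n)/2<\varepsilon$ for all $n>N$, so that $\rho_H^{\mathcal{M}_n}(\iota_n^X(\mathcal{M}_n),\iota_n^Y(S^1_r))<\varepsilon$ for $n>N$. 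Lemma~\ref{lemma: IsometricImbedding} then yields $\mathcal{M}_n\to S^1_r$ in the Gromov-Hausdorff sense, which is the assertion of the proposition.

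The only genuinely delicate point is the verification in the first step that the core circle of the M\"obius strip is isometric to $S^1_r$: one must rule out a shortcut created by the orientation-reversing identification or by leaving $C_n$ through the interior of the strip, and this is precisely where the choice of the fixed height $\ell(n)/2$, together with the $1$-Lipschitz property of $\pi_n$, does the work. For the cylinder everything reduces to the elementary Pythagorean identity $\rho_{\text{Cyl}_n}((\theta_1,z_1),(\theta_2,z_2))=\sqrt{\rho_{S^1_r}(\theta_1,\theta_2)^2+(z_1-z_2)^2}$. An essentially equivalent alternative would bypass intrinsic isometries altogether: take the correspondence $R_n=\{(P,\pi_n(P)):P\in\mathcal{M}_n\}$, bound $\text{dis}(R_n)\le\ell(n)$ in the cylinder case and $\text{dis}(R_n)\le 2\ell(n)$ in the M\"obius case by the same short-path argument, and conclude $\rho_{GH}(\mathcal{M}_n,S^1_r)\le\tfrac12\text{dis}(R_n)\to0$ via Theorem~\ref{theorem: CorrespDist}.
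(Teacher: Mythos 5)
Your proposal is correct and follows essentially the same route as the paper: imbed $S^1_r$ as the central circle of $\mathcal{M}_n$, bound the Hausdorff distance in $\mathcal{M}_n$ by the half-width $\tfrac{1}{2}\ell(n)=\pi r/n\rightarrow 0$, and invoke Lemma \ref{lemma: IsometricImbedding}. The only difference is that you explicitly verify, via the $1$-Lipschitz projection and the fixed height $\ell(n)/2$ in the M\"obius case, that the central circle is isometrically imbedded---a point the paper's proof takes for granted.
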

    \begin{proof}
    We use lemma \ref{lemma: IsometricImbedding}; in particular, if we imbed $\iota:S^1_r\hookrightarrow\mathcal{M}_n$ as the central circle then we have
    \begin{align}\label{equation: GHDistCyl}
        \rho_{GH}(S^1_r,\mathcal{M}_n)\leq \rho_H(\iota(S^1_r),\mathcal{M}_n)=\frac{1}{2}\ell(n)=\frac{\pi r}{n}
    \end{align}
    since $\iota(S^1_r)\subseteq \mathcal{M}_n$ is an $R$-net in $\mathcal{M}_n$ for all $R>\ell(n)/2$ but the $(\ell(n)/2)$-thickening of $\iota(S^1_r)$ does not cover (the boundary of) $\mathcal{M}_n$. Thus if we choose
    \begin{align}
        N= \left\lceil\frac{\pi r}{\varepsilon}\right\rceil\nonumber
    \end{align}
    for any $\varepsilon>0$, this ensures that
    \begin{align}
        \frac{1}{2}\ell(n)=\frac{\pi r}{n}<\varepsilon\nonumber
    \end{align}
    for all $n>N$ as required.
    \end{proof}
    This of course formalises the naive intuition that as the width of a cylinder/M\"{o}bius strip vanishes we end up with a circle. 
    
    We have essentially the same result for prism graphs and M\"{o}bius ladders:
    \begin{lemma}\label{lemma: ConvergenceCubicGraphs}
    Let $\set{\omega_n}_{n>3}$ be a sequence of graphs such that $\omega_n\in \set{P_n^{\ell(n)},M_n^{\ell(n)}}$ for each value of $n$, where $\ell(n)$ is given as in equation \ref{equation: Length} and $P_n^{\ell(n)}$ and $M_n^{\ell(n)}$ are the graphs $P_n$ and $M_n$ respectively, with each edge weighted $\ell(n)$. The Gromov-Hausdorff limit of this sequence is $S^1_r$.
    \end{lemma}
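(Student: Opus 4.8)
The plan is to exhibit, for every $n$, an explicit \emph{angular projection} $\Phi_n$ from the vertex set of $\omega_n$ onto the $n$ equally-spaced points of $S^1_r$, and to show that $\Phi_n$ is an $O(\ell(n))$-isometry; Lemma \ref{lemma: NearIsometry} then yields $\rho_{GH}(\omega_n,S^1_r)<3\ell(n)=6\pi r/n\to 0$, with the estimate uniform over the two admissible choices of $\omega_n$, which is exactly what is needed since the sequence $\set{\omega_n}_{n>3}$ is arbitrary. This runs parallel to the proof of Proposition \ref{proposition: ConvergenceCyl}, the image of $\Phi_n$ playing the role that the central circle of $\text{Cyl}_n$/$\text{Mob}_n$ played there. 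Parametrise $S^1_r$ by arc length and let $\theta_i$ denote the point at arc length $i\,\ell(n)$, $i\in\mathbb{Z}/n\mathbb{Z}$, so that $\rho_{S^1_r}(\theta_i,\theta_j)=\ell(n)\,d_n(i,j)$, where $d_n(i,j)=\min(|i-j|,\,n-|i-j|)$ is the cycle distance in $\mathbb{Z}/n\mathbb{Z}$. Since consecutive $\theta_i$ are at distance $\ell(n)$, the image $\Phi_n(V(\omega_n))=\set{\theta_0,\dots,\theta_{n-1}}$ is automatically an $\ell(n)$-net in $S^1_r$, so the whole problem reduces to bounding the distortion $\text{dis}(\Phi_n)$, i.e.\ to comparing shortest-path distances in $\omega_n$ with $\ell(n)\,d_n(\cdot,\cdot)$.

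For the prism graph $P_n^{\ell(n)}$ this is immediate. Labelling the vertices of $P_n$ by pairs $(i,a)$ with $i\in\mathbb{Z}/n\mathbb{Z}$ and $a\in\set{0,1}$ (two $n$-cycles joined by the rungs $(i,0)$--$(i,1)$), the Cartesian-product distance formula gives $\rho_{P_n^{\ell(n)}}\big((i,a),(j,b)\big)=\ell(n)\big(d_n(i,j)+[a\neq b]\big)$; with $\Phi_n(i,a)=\theta_i$ the distortion is thus at most $\ell(n)<2\ell(n)$, so $\Phi_n$ is a $2\ell(n)$-isometry and Lemma \ref{lemma: NearIsometry} applies.

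The M\"{o}bius ladder $M_n^{\ell(n)}$ is the one genuinely delicate case, and I expect the bookkeeping here --- keeping track of the twist --- to be the main obstacle: one cannot identify a single ``ring'' of vertices with a scaled regular $n$-gon, because $M_n$ relates to $S^1_r$ only through a \emph{two-to-one} map. Realise $M_n$ as the $2n$-cycle $0$--$1$--$\cdots$--$(2n-1)$--$0$ together with the rungs $\set{i,i+n}$, $i=0,\dots,n-1$; assign vertex $i$ the angular coordinate $i\bmod n$ and set $\Phi_n(i)=\theta_{i\bmod n}$. For the lower bound $\rho_{M_n^{\ell(n)}}(i,j)\ge\rho_{S^1_r}\big(\Phi_n(i),\Phi_n(j)\big)$ I would use the natural surjective $1$-Lipschitz map from the metric-graph realisation of $M_n^{\ell(n)}$ onto $S^1_r$ that sends each cycle-edge isometrically onto the corresponding length-$\ell(n)$ arc and collapses each rung-edge to a point; restricting to vertices gives the bound. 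For the matching upper bound $\rho_{M_n^{\ell(n)}}(i,j)\le\rho_{S^1_r}\big(\Phi_n(i),\Phi_n(j)\big)+\ell(n)$ I would walk along the $2n$-cycle from $i$ in the direction that decreases the angular gap: after exactly $\delta\coloneqq d_n(i\bmod n,\,j\bmod n)$ steps one reaches a vertex congruent to $j$ modulo $n$ --- either $j$ itself, or $j\pm n$, in which case one further rung-edge reaches $j$ --- exhibiting a path of length at most $(\delta+1)\ell(n)$. Hence $0\le\rho_{M_n^{\ell(n)}}(i,j)-\ell(n)\,d_n(i\bmod n,j\bmod n)\le\ell(n)$, so $\text{dis}(\Phi_n)\le\ell(n)$, $\Phi_n$ is again a $2\ell(n)$-isometry, and Lemma \ref{lemma: NearIsometry} finishes the proof.

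Two closing remarks. First, a slightly less self-contained route bypasses the M\"{o}bius combinatorics entirely: show directly that $\rho_{GH}(\omega_n,\mathcal{M}_n)=O(\ell(n))$, where $\mathcal{M}_n$ is the corresponding member of $\set{\text{Cyl}_n,\text{Mob}_n}$ --- the vertices of $\omega_n$ sit inside $\mathcal{M}_n$ as an $\ell(n)$-fine grid whose induced shortest-path metric differs from the intrinsic metric of $\mathcal{M}_n$ by $O(\ell(n))$ --- and then combine with Proposition \ref{proposition: ConvergenceCyl} and subadditivity of $\rho_{GH}$. Second, if one prefers to treat $P_n^{\ell(n)}$ and $M_n^{\ell(n)}$ as metric graphs (geometric realisations) rather than as finite vertex sets, the same $\Phi_n$ extended affinely along each edge is an honest $\varepsilon$-isometry with identical estimates, so the conclusion is unchanged either way.
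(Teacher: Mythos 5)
Your proof is correct, and it takes a genuinely different route from the paper's. The paper factors the estimate through the intermediate manifolds: by subadditivity, $\rho_{GH}(\omega_n,S^1_r)\leq \rho_{GH}(\omega_n,\mathcal{M}_n)+\rho_{GH}(\mathcal{M}_n,S^1_r)$ with $\mathcal{M}_n\in\{\text{Cyl}_n,\text{Mob}_n\}$; the second term is the $\pi r/n$ already computed in Proposition \ref{proposition: ConvergenceCyl}, and the first is controlled by imbedding $\omega_n$ into the boundary of $\mathcal{M}_n$ and comparing the graph distance $\ell(n)+r\theta$ against the intrinsic distance $\sqrt{\ell(n)^2+r^2\theta^2}$, finishing with a binomial expansion --- essentially the alternative you relegate to your first closing remark. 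Your main line instead projects the vertex set directly onto the $n$ equally spaced points of $S^1_r$ and invokes Lemma \ref{lemma: NearIsometry}. What this buys: the distance comparisons become exact combinatorial identities (the Cartesian-product formula $\ell(n)\bigl(d_n(i,j)+[a\neq b]\bigr)$ for the prism; the sandwich $0\leq \rho_{M_n^{\ell(n)}}(i,j)-\ell(n)\,d_n(i\bmod n,j\bmod n)\leq \ell(n)$ for the ladder), no series expansion is needed, and the M\"{o}bius case is handled more carefully than in the paper. Indeed the paper's assertion that two vertices ``in the same circle'' are at graph distance equal to the shorter arc is only literally accurate for the prism: the single rail of $M_n$ is a $2n$-cycle double-covering $S^1_r$ and the rungs provide shortcuts, so your two-to-one angular projection, with the rung-collapsing $1$-Lipschitz map for the lower bound and the walk-plus-one-rung path for the upper bound, supplies exactly the correction the twisted case needs. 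What the paper's route buys is the physical picture --- the graphs converge because they are $\ell(n)$-fine lattice discretisations of manifolds of shrinking width --- at the price of some looseness in the M\"{o}bius bookkeeping. Both arguments yield an $O(1/n)$ bound uniform over the choice of $P_n^{\ell(n)}$ versus $M_n^{\ell(n)}$, which, as you note, is what an arbitrary sequence requires.
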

    \begin{proof}
    We have
    \begin{align}
        \rho_{GH}(\omega_n,S_r^1)\leq \rho_{GH}(\omega_n,\mathcal{M}_n)+\rho_{GH}(\mathcal{M}_n,S_r^1)=\frac{\pi r}{n}+\rho_{GH}(\omega_n,\mathcal{M}_n)\nonumber
    \end{align}
    where $\mathcal{M}_n$ is $\text{Cyl}_n$ if $\omega_n=P_n^{\ell(n)}$ and $\mathcal{M}_n=\text{Mob}_n$ if $\omega_n=M_n^{\ell(n)}$; to obtain the right-hand side we have applied subadditivity and used equation \ref{equation: GHDistCyl} calculated in the course of the proof of proposition \ref{proposition: ConvergenceCyl}. It thus remains to bound $\rho_{GH}(\omega_n,\mathcal{M}_n)$; consider the natural imbedding $\iota:\omega_n\hookrightarrow \mathcal{M}_n$  of $\omega_n$ into the boundary. For any two points $u,\:v\in \omega_n$ that lie in the same circle $S^1_r\times \set{-\frac{1}{2}\ell(n)}$ or $S^1_r\times \set{\frac{1}{2}\ell(n)}$, the distance in both graphs is simply given by the length of the shorter circle segment connecting the two points and $\rho_{\omega_n}(u,v)=\rho_{\mathcal{M}_n}(\iota(u),\iota(v))$. However if  $u$ and $v$ lie in distinct circles, i.e. $u\in S^1_r\times \set{-\frac{1}{2}\ell(n)}$ and $v\in S^1_r\times \set{\frac{1}{2}\ell(n)}$ we have a graph distance $\rho_{\omega_n}(u,v)=\ell(n)+r\theta$ where $\theta$ is the smaller angle between $u$ and $v$, while
    \begin{align}
        \rho_{\mathcal{M}_n}(\iota(u),\iota(v))=\sqrt{\ell(n)^2+r^2\theta^2}\nonumber.
    \end{align}
    Thus
    \begin{align}
        |\rho_{\mathcal{M}_n}(\iota(u),\iota(v))-\rho_{\omega_n}(u,v)|=(\ell(n)+r\theta)-\sqrt{\ell(n)^2+r^2\theta^2}\nonumber.
    \end{align}
    Maximising over pairs $u,\:v\in \omega_n$ gives
    \begin{align}
        \sup |\rho_{\mathcal{M}_n}(\iota(u),\iota(v))-\rho_{\omega_n}(u,v)|\leq \pi r\left(1-\sqrt{1+\frac{\ell(n)}{r\pi}} \right) +\ell(n)=\pi r\left(1-\sqrt{1+\frac{2}{n}} \right) +\frac{2\pi r}{n}\nonumber.
    \end{align}
    For $n>2$ we may use the binomial expansion to obtain
    \begin{align}
        \sup |\rho_{\mathcal{M}_n}(\iota(u),\iota(v))-\rho_{\omega_n}(u,v)|=\frac{\pi r}{n}+\sum_{k=2}^\infty \frac{1\cdot \left(1-2\right)\cdots \left(1-2(k-1)\right)}{k!}n^{-k}\nonumber.
    \end{align}
    Thus
    \begin{align}
        \rho_{GH}(\omega_n,S_r^1)\leq\frac{2\pi r}{n}+\sum_{k=2}^\infty \frac{1\cdot \left(1-2\right)\cdots \left(1-2(k-1)\right)}{k!}n^{-k}\nonumber
    \end{align}
    for all $n>2$. The right-hand side converges to $0$ as $n\rightarrow \infty$ which shows that $\rho_{GH}(\omega_n,S_r^1)\rightarrow 0$ as $n\rightarrow \infty$ which proves the statement. 
    \end{proof}
    We thus have the essential convergence result viz. any sequence of prism graphs/M\"{o}bius ladders converges to $S^1$. It remains to show that these graphs constitute the classical (action minimising) configurations. This is immediately achieved if we exclude triangles:
    \begin{lemma}\label{lemma: ClassicalConfigurations}
    Let $(\Omega^4(2N),\mathcal{A})$ be the statistical model where $\Omega^4(2N)$ is the class of $3$-regular graphs on $2N$ vertices of girth at least $4$ and $\mathcal{A}$ is the discrete Einstein-Hilbert action \ref{equation: Action}. Then the classical phase $\Omega_0^4(2N)=\Omega_0\cap \Omega^4(2N)=\set{P_N,M_N}$.
    \end{lemma}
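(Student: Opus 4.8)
The plan is to funnel everything through the exact curvature formula \ref{equation: OllivCurvCubic} together with the classification of positively curved cubic graphs of Cushing et al.\ \cite{GraphCurvatureCalculator} recalled above. The first step is to note that every $\omega\in\Omega^4(2N)$ is triangle-free, so $\triangle_{uv}=0$ for each edge $uv$; feeding this into \ref{equation: OllivCurvCubic} collapses the expression to a difference of two nonnegative $[\,\cdot\,]_+$ terms, each with prefactor $-\tfrac{1}{3}$, whence $\kappa_\omega(uv)\le 0$ for every edge and
\[ \mathcal{A}(\omega) = -\sum_{e\in E(\omega)}\kappa_\omega(e) \;\ge\; 0, \]
with equality iff $\kappa_\omega(e)=0$ for all $e$, i.e.\ iff $\omega$ is Ollivier--Ricci flat. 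So on the triangle-free sector the action is bounded below by $0$, and the classical configurations of the constrained model are exactly the Ollivier--Ricci flat cubic graphs on $2N$ vertices of girth at least $4$.

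Second, I would check that this lower bound is attained inside $\Omega^4(2N)$, so that it is genuinely the minimum rather than merely a bound. By the quoted classification, the prism graph $P_N$ and the M\"obius ladder $M_N$ are Ollivier--Ricci flat provided $N$ is large enough that they are triangle-free (all $N\ge 4$, with $M_3$ also flat and triangle-free), hence $\mathcal{A}(P_N)=\mathcal{A}(M_N)=0$ and $\{P_N,M_N\}\subseteq\Omega_0^4(2N)$. Third, for the reverse inclusion I would invoke the Cushing et al.\ classification in the other direction: an Ollivier--Ricci flat cubic graph is in particular positively curved, hence is some prism $P_m$ ($m\ge 3$) or M\"obius ladder $M_n$ ($n\ge 2$); of these, the triangular prism $P_3$ and $M_2=K_4$ are not flat (they contain triangles and carry a strictly positively curved edge) while all the rest are flat, so a Ricci-flat cubic graph on $2N$ vertices must be $P_N$ or $M_N$. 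This gives $\Omega_0^4(2N)=\{P_N,M_N\}$, with the harmless caveat that for $N=3$ one has $P_3\notin\Omega^4(6)$ so the right-hand side there reads $\{M_3\}$, and $\Omega^4(2N)$ is empty for $N\le 2$.

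I do not expect a real obstacle: the lemma is essentially a bookkeeping corollary of two inputs taken as given — the exact formula \ref{equation: OllivCurvCubic} and the classification \cite{GraphCurvatureCalculator}. The only points requiring a little care are the sign of the collapsed formula in the first step (both bracketed terms are $[\,\cdot\,]_+$ with coefficient $-\tfrac{1}{3}$, so the curvature is nonpositive and vanishes precisely when the edge supports a square) and tracking the degenerate small-$N$ cases. If one wanted a self-contained argument one would instead have to reprove the relevant half of the Cushing classification — that a triangle-free cubic graph in which every edge lies on a $4$-cycle is forced to be a prism or a M\"obius ladder — which is possible but considerably more laborious and unnecessary here.
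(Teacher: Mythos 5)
Your proposal is correct and follows essentially the same route as the paper's own proof: girth at least $4$ forces $\triangle_{uv}=0$ on every edge, so by expression \ref{equation: OllivCurvCubic} no edge has strictly positive curvature, the action is bounded below by $0$ and minimised exactly by the Ollivier--Ricci flat graphs, which the classification of Cushing et al.\ identifies as $P_N$ and $M_N$. Your additional bookkeeping on the degenerate small-$N$ cases ($P_3$ and $M_2=K_4$) is a harmless refinement of the paper's ``for $N>3$'' caveat.
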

    \begin{proof}
        Note that an edge has strictly positive curvature iff it supports a triangle by expression \ref{equation: OllivCurvCubic} (whence it has curvature $1/3$ or $2/3$). Thus if the girth $g(\omega)>3$ there are no triangles in $\omega$ and the maximum total curvature is thus $0$; this is achieved iff $\omega$ is Ricci flat for $N>3$. Then the result follows by the classification of Cushing et al.
    \end{proof}
    \begin{remark}
    Note that triangles must be effectively excluded in some manner due to the examples displayed in figure \ref{figure: TotalCurvatureTriangle}. Rather than insisting that the graphs have girth greater than $3$, a bipartite model would also suffice and is perhaps a little more natural. In both cases we have suppressed triangles kinematically for certain, whereas ideally this would arise dynamically with high probability. In such a scenario, however, any precise results are likely to require significantly more involved methods for analogous results to be shown.
    \end{remark}
    The two lemmas in conjunction immediately give the following result:
    \begin{theorem}\label{theorem: Convergence}
        Let $\Omega^4(2n;\ell(n))$ denote the set of all $3$-regular graphs on $2n$ vertices with girth at least $4$ and with uniform edge weight $\ell(n)$. Also let $\set{\omega_n}$ denote a sequence of classical configurations of the statistical models $\set{(\Omega^4(2n;\ell(n)),\mathcal{A})}$ respectively. Then we have the Gromov-Hausdorff limit $\omega_n\rightarrow S^1_r$ as $n\rightarrow \infty$. $\qed$
    \end{theorem}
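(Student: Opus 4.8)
The plan is simply to combine the two preceding lemmas, the only mildly non-trivial point being the passage from the unweighted to the weighted configuration space. First I would note that $\Omega^4(2n;\ell(n))$ is finite, so action-minimising configurations exist, and that the classical configurations of the weighted model $(\Omega^4(2n;\ell(n)),\mathcal{A})$ coincide, as abstract graphs, with those of the unweighted model $(\Omega^4(2n),\mathcal{A})$. This is because the Ollivier curvature $\kappa_\omega(uv)=1-W_1(\mu_u,\mu_v)/\rho(u,v)$ is invariant under a global rescaling of the edge weights: the lazy random walks $\mu_u$ depend only on the combinatorial structure and are unchanged, while a uniform rescaling of edge weights by $\ell(n)$ multiplies both $\rho(u,v)$ and the transport cost $W_1(\mu_u,\mu_v)$ (computed with cost function the shortest-path metric) by the same factor $\ell(n)$; the ratio, and hence $\kappa$, is untouched. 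Consequently the action $\mathcal{A}$ is unchanged by the reweighting, so Lemma \ref{lemma: ClassicalConfigurations} applies verbatim and the classical phase is $\set{P_n,M_n}$, each graph carrying uniform edge weight $\ell(n)$; that is, any classical configuration $\omega_n$ satisfies $\omega_n\in\set{P_n^{\ell(n)},M_n^{\ell(n)}}$.

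Second, I would invoke Lemma \ref{lemma: ConvergenceCubicGraphs} directly: it asserts precisely that any sequence $\set{\omega_n}_{n>3}$ with $\omega_n\in\set{P_n^{\ell(n)},M_n^{\ell(n)}}$ has Gromov-Hausdorff limit $S^1_r$. Concatenating the two observations yields $\omega_n\rightarrow S^1_r$ as $n\rightarrow\infty$, which is the assertion of the theorem.

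The only genuinely new ingredient beyond citing the two lemmas is the scale-invariance of the Ollivier curvature under uniform edge reweighting, which is what legitimises transferring Lemma \ref{lemma: ClassicalConfigurations}---stated for the unweighted configuration space---to the spaces $\Omega^4(2n;\ell(n))$ appearing in the theorem; concretely one checks that a transport plan $\xi\in\Pi(\mu_u,\mu_v)$ is optimal for $\omega$ iff it is optimal for its rescaling, giving $W_1^{\omega^{\ell}}(\mu_u,\mu_v)=\ell\,W_1^{\omega}(\mu_u,\mu_v)$ and $\rho_{\omega^{\ell}}(u,v)=\ell$, whence $\kappa_{\omega^{\ell}}(uv)=\kappa_{\omega}(uv)$. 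I do not expect any substantive obstacle here: the real content has already been discharged in proving the two lemmas---in particular the Cushing et al. classification of positively curved cubic graphs underlying Lemma \ref{lemma: ClassicalConfigurations}, and the explicit binomial-expansion distortion estimate underlying Lemma \ref{lemma: ConvergenceCubicGraphs}---so the remaining argument is purely a matter of assembly.
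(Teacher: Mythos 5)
Your proposal is correct and follows exactly the paper's route: the theorem is stated there as an immediate consequence of Lemma \ref{lemma: ClassicalConfigurations} and Lemma \ref{lemma: ConvergenceCubicGraphs}, with no further argument given. Your additional observation that the Ollivier curvature is invariant under uniform rescaling of the edge weights---so that the classical phase of the weighted model coincides with that of the unweighted one---is a correct and worthwhile detail that the paper leaves implicit.
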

    This theorem essentially shows that for large $N$ and large $\beta$ the Gibbs distribution for our model is concentrated on the circle $S^1_r$.
    \subsection{General Covariance and Gromov-Hausdorff Limits}
    We finish with some comments on the generally covariant nature of Gromov-Hausdorff limits. The main point has already been discussed in the introduction viz. Gromov-Hausdorff convergence characterises the limit invariantly essentially because one minimises over all possible metric backgrounds. However there are two technical points to bear in mind insofar as gravitational gauge transformations---typically and somewhat loosely referred to as \textit{diffeomorphisms} in the physics literature---are generally interpreted as (local) Riemannian isometries connected to the identity. Firstly, it is not immediately obvious mathematically speaking that diffeomorphisms so characterised are in fact isometries with respect to the topological metric (geodesic distance); physically speaking the statement is rather obvious insofar as the geodesic distance between two point particles is essentially the classical action of a point particle along its trajectory and thus should be invariant under gauge transformations. Nonetheless the mathematical result is both nontrivial: the best argument has two steps. First one notes that local Riemannian isometries are also local metric (topological) isometries in the sense that they restrict to metric isometries on suitably chosen open balls; this follows by the naturality of the exponential map. The next step is to show that a bijective local metric isometry is a global isometry for arbitrary length spaces as long \textit{as long as the local isometry admit a continuous inverse}. This is obviously the case for elements of the diffeomorphism group. If the local isometry does not have a continuous inverse then one can construct counterexamples showing that a local bijective isometry is not necessarily a global isometry. The second point to note is that the Gromov-Hausdorff limit does not simply factor out gauge transformations, it also factors out global symmetries, i.e. the group of connected components of the space of local Riemannian isometries. It is not clear to what extent these global symmetries apply to graphs in the limiting sequence.
    
    We shall briefly substantiate the claims relating to the first point. We shall use the following terminology:
    \begin{definition}
        \leavevmode
        \begin{enumerate}
            \item Let $(\mathcal{M},g)$ and $(\mathcal{N},h)$ be Riemannian manifolds. A smooth mapping $f:\mathcal{M}\rightarrow \mathcal{N}$ is said to be a \textit{Riemannian isometry} iff for each $p\in \mathcal{M}$ and all $u,\:v\in T_p\mathcal{M}$ we have $g_p(u,v)=h_{f(p)}(f_*u, f_*v)$. Let $\text{Isom}_g(\mathcal{M})$ denote the group of (Lie) group of local isometries equipped naturally with the compact-open topology. A Riemannian isometry $f:\mathcal{M}\rightarrow \mathcal{M}$ is a \textit{gauge transformation} iff it is in the connected component of the identity of $\text{Isom}_g(\mathcal{M})$. 
            \item Let $(X,\rho_X)$ and $(Y,\rho_Y)$ be metric spaces. A \textit{local isometry} is a continuous mapping $f:X\rightarrow Y$ such that for each $p\in X$ there is an $r>0$ such that $f|B_r(p):B_r(p)\rightarrow f(B_r(p))=B_r(f(p))$ is an isometry.
            \item A \textit{global isometry} or \textit{metric isometry} is simply an isometry of metric spaces.
        \end{enumerate}
    \end{definition}
    We recall the definition of the exponential map in a Riemannian manifold:
    \begin{definition}
        Let $(\mathcal{M},g)$ be a Riemannian manifold and fix a point $p\in \mathcal{M}$. We define the \textit{exponential map} $\exp_p:U\rightarrow \mathcal{M}$ for each $u$ in some subset $U\subseteq T_p\mathcal{M}$ by letting $\exp_p(u)=\gamma_u(1)$, where $\gamma_u$ is the unique geodesic such that $\gamma_u(0)=p$ and $\dot{\gamma}_u=u$.
    \end{definition}
    $\exp_p$ restricts to a smooth mapping on $B_r(0)$ into $B_r(p)$ for $r>0$ sufficiently small. Naturality of the exponential map then takes the following form for any local isometry $f:\mathcal{M}\rightarrow \mathcal{M}$: the diagram
    \begin{equation}
        \begin{tikzcd}
            U\subseteq T_p\mathcal{M}\arrow{r}{f_*} \arrow[swap]{d}{\exp_p} & f_*U\subseteq T_{f(p)}\mathcal{M}\arrow{d}{\exp_{f(p)}}\\
            \mathcal{M}\arrow[swap]{r}{f} & \mathcal{M}
        \end{tikzcd}
    \end{equation}
    commutes for each $p\in \mathcal{M}$. Since $f$ is a local isometry, the mapping $f_*|B_r(0_p):B_r(0_p)\rightarrow f_*(B_r(0))=B_r(0_{f(p)})$ is an isometry, so $f|B_r(p)=\exp_{f(p)}\circ f_*\circ \exp_p^{-1}|B_r(p)$. Noting that local isometries also preserve geodesics is enough to ensure that $f|B_r(p):B_r(p)\rightarrow B_r(f(p))$ is an isometry. The upshot is the following:
    \begin{proposition}
    Let $f:\mathcal{M}\rightarrow \mathcal{M}$ be a Riemannian isometry of the Riemannian manifold $(\mathcal{M},g)$. Then $f$ is a local isometry.
    \end{proposition}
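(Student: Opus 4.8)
The statement is essentially formalised by the discussion immediately preceding it; the plan is to fill in the standard Riemannian‑geometry details. First I would note that the hypothesis $g_p(u,v)=g_{f(p)}(f_*u,f_*v)$ for all $p$ and all $u,v\in T_p\mathcal{M}$ says precisely that $f^*g=g$, so by uniqueness of the torsion‑free metric connection $f$ intertwines the Levi‑Civita connection with itself and therefore maps geodesics to geodesics: $f\circ\gamma_u=\gamma_{f_*u}$, where $\gamma_u$ is the geodesic with $\gamma_u(0)=p$, $\dot\gamma_u(0)=u$. Evaluating at parameter $1$ yields the naturality identity $f\circ\exp_p=\exp_{f(p)}\circ f_*$ wherever both sides are defined, which is exactly the commuting square displayed above. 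Since $f_*\colon T_p\mathcal{M}\to T_{f(p)}\mathcal{M}$ is a linear isometry of inner‑product spaces, it carries the Euclidean ball $B_r(0_p)$ isometrically onto $B_r(0_{f(p)})$ for every $r>0$; note also that $f$ is smooth and hence continuous.

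Next, fixing $p\in\mathcal{M}$, I would choose $r>0$ small enough that the metric balls $B_r(p)$ and $B_r(f(p))$ are simultaneously \emph{totally normal and geodesically convex}: $\exp_p$ and $\exp_{f(p)}$ restrict to diffeomorphisms of the respective Euclidean $r$‑balls onto these metric balls, the metric balls coincide with the geodesic balls by the Gauss lemma, and any two of their points are joined by a unique minimising geodesic contained in the ball (Whitehead's theorem on convex neighbourhoods). On $B_r(p)$ we then have the factorisation $f|_{B_r(p)}=\exp_{f(p)}\circ f_*\circ\exp_p^{-1}$, which, since $f_*$ maps $B_r(0_p)$ onto $B_r(0_{f(p)})$, is a diffeomorphism of $B_r(p)$ onto $B_r(f(p))$, so in particular $f(B_r(p))=B_r(f(p))$.

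Finally I would check that this restriction is a metric isometry. Because $f$ preserves the metric tensor it preserves lengths of curves, $L(f\circ\sigma)=\int|f_*\dot\sigma|\,dt=\int|\dot\sigma|\,dt=L(\sigma)$. Given $x,y\in B_r(p)$, the minimising geodesic $\sigma$ from $x$ to $y$ lies in $B_r(p)$ by convexity, so $f\circ\sigma$ is a curve in $B_r(f(p))$ from $f(x)$ to $f(y)$ of the same length, whence $\rho(f(x),f(y))\le\rho(x,y)$, all distances being realised inside the convex balls. Applying the same argument to the inverse diffeomorphism $(f|_{B_r(p)})^{-1}$, which is again a Riemannian isometry onto its image, gives the reverse inequality, so $\rho(f(x),f(y))=\rho(x,y)$ and $f|_{B_r(p)}\colon B_r(p)\to B_r(f(p))$ is an isometry. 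Hence $f$ is a local isometry in the metric sense.

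The only nontrivial ingredient is the existence, for each $p$, of a single radius $r$ making the metric balls about $p$ and $f(p)$ both normal and geodesically convex, which is what guarantees that distances inside these balls are realised by geodesics that do not leave them and that each ball is the $\exp$‑image of a Euclidean ball. This is the classical theory of totally normal and convex neighbourhoods together with the Gauss lemma, which I would simply quote from a standard reference; the remaining steps—geodesic preservation, the naturality square, and the length computation—are formal.
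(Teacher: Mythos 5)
Your proposal is correct and follows essentially the same route as the paper: naturality of the exponential map, the fact that $f_*$ is a linear isometry of tangent spaces, and the factorisation $f|_{B_r(p)}=\exp_{f(p)}\circ f_*\circ\exp_p^{-1}$. In fact you supply details the paper only gestures at --- the choice of a common radius giving totally normal, geodesically convex balls and the two-sided length inequality --- so your write-up is, if anything, more complete than the paper's sketch.
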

    We now turn to local isometries in length spaces. Recall that a length space is a metric space in which the distance is equal to the infimum over lengths of a class of admissible curves. Both graphs and Riemannian manifolds are length spaces. More precisely, for any metric space $(X,\rho_X)$, one considers a family $\mathscr{C}$ of \textit{admissible curves} $\gamma:[a,b]\rightarrow X$, where $\gamma$ is at least piecewise continuous. The \textit{length} of any admissible curve $\gamma\in \mathscr{C}$ is then defined by:
    \begin{align}
        L(\gamma)=\sup \sum_{k=0}^{N-1}\rho_X(\gamma(t_k),\gamma(t_{k+1}))
    \end{align}
    where the supremum is taken over finite partitions $a=t_0< t_1< \cdots < t_N=b$ of the interval $[a,b]$. Together with the class $\mathscr{C}$ of admissible curves the function $L$ defines a \textit{length structure} on $X$; associated to any length structure is an induced metric $\rho_L:X\times X\rightarrow \mathbb{R}$ defined by
    \begin{align}
        \rho_L(x,y)=\inf_{\gamma\in \mathscr{C}(x,y)} L(\gamma) && \mathscr{C}(x,y)=\set{\gamma\in \mathscr{C}:\gamma(a)=x \text{ and }\gamma(b)=y}.
    \end{align}
    We call $(X,\rho_X)$ a \textit{length space} for some class of admissible curves $\mathscr{C}$ when $\rho_X=\rho_L$. 
    
    We show that between length spaces, every local isometry preserves the length of curves:
    \begin{proposition}
        Let $(X,\rho_X)$ and $(Y,\rho_Y)$ be length spaces and let $f:X\rightarrow Y$ be a local isometry such that each admissible curve in $X$ is mapped to an admissible curve in $Y$.\footnote{Assuming that $X$ is a topological space, $Y$ is a length space and that $f$ is a local homeomorphism there is a unique length structure on $X$ making $f$ into a local isomorphism mapping admissible curves to admissible curves. Thus our assumptions present no real restriction on possible length spaces.} Then $L_X(\gamma)=L_Y(f\circ \gamma)$ for any admissible curve $\gamma:[a,b]\rightarrow X$.
    \end{proposition}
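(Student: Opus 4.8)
The plan is to use compactness of the parameter interval to reduce the global statement to a local one, and then to observe that on any ball where $f$ restricts to an honest isometry the length functional is manifestly preserved, since it is built entirely out of distances between points.

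First I would recall two structural facts that are part of the definition of a length structure (see \cite{BuragoBuragoIvanov_MetricGeometry}): the family $\mathscr{C}$ of admissible curves is stable under restriction to subintervals, and the length functional is additive under concatenation, i.e.\ $L(\gamma)=L(\gamma|_{[a,c]})+L(\gamma|_{[c,b]})$ whenever $a\le c\le b$. Combined with the hypothesis that $f$ carries admissible curves to admissible curves, this guarantees that all the restrictions $\gamma|_{[t_k,t_{k+1}]}$ and $(f\circ\gamma)|_{[t_k,t_{k+1}]}$ appearing below are legitimate admissible curves, so that their lengths are defined.

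Next I would, for each $t\in[a,b]$, choose $r(t)>0$ such that $f|B_{r(t)}(\gamma(t))$ is an isometry onto $B_{r(t)}(f(\gamma(t)))$; in particular $\rho_Y(f(x),f(x'))=\rho_X(x,x')$ for all $x,x'\in B_{r(t)}(\gamma(t))$. Since $\gamma$ is continuous, the sets $\gamma^{-1}(B_{r(t)}(\gamma(t)))$ form an open cover of the compact interval $[a,b]$; choosing a Lebesgue number for this cover produces a partition $a=t_0<t_1<\dots<t_N=b$ such that each $\gamma([t_k,t_{k+1}])$ is contained in a single such ball $B_k$. On each subinterval $[t_k,t_{k+1}]$ every finite partition $t_k=s_0<\dots<s_M=t_{k+1}$ then satisfies
\[
\sum_{j=0}^{M-1}\rho_X(\gamma(s_j),\gamma(s_{j+1}))=\sum_{j=0}^{M-1}\rho_Y(f(\gamma(s_j)),f(\gamma(s_{j+1}))),
\]
since all the sampled points lie in $B_k$, where $f$ preserves distance. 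Taking suprema over partitions of $[t_k,t_{k+1}]$ gives $L_X(\gamma|_{[t_k,t_{k+1}]})=L_Y((f\circ\gamma)|_{[t_k,t_{k+1}]})$; summing over $k$ and invoking additivity of $L_X$ and $L_Y$ then yields $L_X(\gamma)=L_Y(f\circ\gamma)$.

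The only genuinely non-routine point is the middle step: converting the purely local datum ``$f$ is a local isometry'' into a \emph{finite} subdivision of the curve adapted to the isometry balls. This is exactly where compactness of $[a,b]$ (equivalently, of the image of $\gamma$) enters, via the Lebesgue number lemma; once the finite adapted partition is in hand, the remainder is routine bookkeeping with the definition of length and the triangle inequality.
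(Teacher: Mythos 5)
Your proof is correct and rests on the same key idea as the paper's: choose a partition of $[a,b]$ fine enough that consecutive sample points of $\gamma$ lie in a common ball on which $f$ is an honest isometry, so that the partition sums for $\gamma$ and $f\circ\gamma$ coincide term by term. The only difference is in the bookkeeping at the end --- you obtain exact equality of lengths on each subinterval (via the supremum over sub-partitions, which all stay inside the ball) and then invoke additivity of $L$ under concatenation, whereas the paper runs a single $\varepsilon$-sandwich on the whole interval; your explicit appeal to the Lebesgue number lemma also makes the existence of the adapted partition, which the paper merely asserts, fully rigorous.
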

    \begin{proof}
        Fix some $\varepsilon>0$ and choose a partition $\set{t_k}_{k\leq N}$ of the interval $[a,b]$ such that $f|(t_k,t_{k+1})$ is an isometry onto its image for each $k\in \set{0,...,N-1}$ and such that
        \begin{align}
            \sum_k \rho_X(\gamma(t_k),\gamma(t_{k+1}))&\leq L_X(\gamma)< \sum_k \rho_X(\gamma(t_k),\gamma(t_{k+1}))+\varepsilon \nonumber\\
            \sum_k \rho_Y(f(\gamma(t_k)),f(\gamma(t_{k+1})))&\leq L_Y(f\circ \gamma)< \sum_k \rho_Y(f(\gamma(t_k)),f(\gamma(t_{k+1})))+\varepsilon\nonumber.
        \end{align}
        This is possible because $f$ is a local isometry and the length of an admissible curve is given by the supremum over partitions of the interval. Then
        \begin{align}
            L_Y(f\circ \gamma)-\varepsilon&<\sum_k \rho_Y(f(\gamma(t_k)),f(\gamma(t_{k+1})))\nonumber\\
            &=\sum_k \rho_X(\gamma(t_k),\gamma(t_{k+1}))\nonumber\\
            &\leq L_X(\gamma)\nonumber\\
            &< \sum_k \rho_X(\gamma(t_k),\gamma(t_{k+1}))+\varepsilon\nonumber\\
            &=\sum_k \rho_Y(f(\gamma(t_k)),f(\gamma(t_{k+1})))+\varepsilon\nonumber\\
            &\leq L_Y(f\circ \gamma)+\varepsilon\nonumber.
        \end{align}
        Since this holds for all $\varepsilon>0$ we have $L_X(\gamma)=L_Y(f\circ \gamma)$ as required.\nonumber
    \end{proof}
    In general a local isometry is not a global isometry.
    \begin{example}
        The mapping $f:\mathbb{R}\rightarrow S^1$ given by
        \begin{align}
            f:t \mapsto (\cos (t),\sin (t))\nonumber
        \end{align}
        is a local but not global isometry.
    \end{example}
    \begin{proof}
        $f$ obviously restricts to an isometry on any interval $(a,b)$ such that $b-a<\pi$, since $\rho_{\mathbb{R}}(x,y)=|x-y|<\pi$ for any $x,\:y\in (a,b)$ while $\rho_{S^1}(x,y)$ is equal to $r\theta$, where $r=1$ is the radius of $S^1$ and $\theta$ the smaller angle between the points $f(x)$ and $f(y)$. Since $|x-y|<\pi$, $\theta=|x-y|$. $f$ is obviously not a global isometry since $\rho_{S^1}(f(0),f(2\pi))=0$.
    \end{proof}
    We can show the following, however:
    \begin{proposition}
        Let $(X,\rho_X)$ and $(Y,\rho_Y)$ be length spaces.
        \begin{enumerate}
            \item Let $f:X\rightarrow Y$ be a local isometry preserving admissible curves. The $f$ is \textit{nonexpanding}, i.e. $\rho_X(x,y)\leq \rho_Y(f(x),f(y))$ for all $x,\:y\in X$.
            \item A bijective local isometry preserving admissible curves $f:X\rightarrow Y$ is a global isometry iff $f^{-1}:Y\rightarrow X$ is a local isometry preserving admissible curves. 
        \end{enumerate}
    \end{proposition}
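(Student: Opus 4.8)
The plan is to obtain both statements as essentially immediate corollaries of the preceding proposition, which gives $L_X(\gamma)=L_Y(f\circ\gamma)$ for every admissible curve $\gamma$ in $X$ whenever $f$ is a local isometry preserving admissible curves.

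For part (i) I would argue as follows. Fix $x,y\in X$. Since $f$ carries admissible curves to admissible curves and fixes endpoints, the assignment $\gamma\mapsto f\circ\gamma$ sends the class $\mathscr{C}(x,y)$ into the admissible curves of $Y$ from $f(x)$ to $f(y)$, and by the preceding proposition it preserves length; hence $\rho_Y(f(x),f(y))\leq L_Y(f\circ\gamma)=L_X(\gamma)$ for each such $\gamma$, and taking the infimum over $\mathscr{C}(x,y)$ --- legitimate because $X$ is a length space --- yields the nonexpanding bound $\rho_Y(f(x),f(y))\leq\rho_X(x,y)$. The point I would stress is that only this one inequality comes out: $\gamma\mapsto f\circ\gamma$ need not be surjective onto the admissible curves of $Y$ joining $f(x)$ and $f(y)$, so there may be short curves downstairs with no admissible lift, exactly as in the example $\mathbb{R}\to S^1$ above, where the inequality is strict.

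For part (ii), the direction $(\Leftarrow)$ is then a two-line argument. Applying part (i) to $f$ gives $\rho_Y(f(x),f(y))\leq\rho_X(x,y)$ for all $x,y\in X$, while applying part (i) to $f^{-1}$ (which by hypothesis is a local isometry preserving admissible curves) and substituting $u=f(x)$, $v=f(y)$ gives $\rho_X(x,y)=\rho_X(f^{-1}(u),f^{-1}(v))\leq\rho_Y(u,v)=\rho_Y(f(x),f(y))$; the two bounds together say $f$ preserves distances, and it is bijective, so it is a global isometry. For $(\Rightarrow)$, if $f$ is a global isometry then $f^{-1}$ is a bijective distance-preserving map, hence a global and a fortiori local isometry, and it preserves admissible curves because a metric isometry leaves unchanged the consecutive distances along any partition of a curve, and so preserves rectifiability, length, and the length structure.

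I do not expect a serious obstacle here: the substance is all carried by the preceding proposition and the remainder is manipulation of infima. The one thing I would flag, more as the conceptual heart than as a technical difficulty, is the asymmetry exposed in part (i) --- a local isometry preserving admissible curves is non-expanding but need not be non-contracting --- which is precisely why the hypothesis on $f^{-1}$ in part (ii) genuinely cannot be removed.
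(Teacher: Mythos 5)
Your proposal is correct and follows essentially the same route as the paper: part (i) is the length-preservation proposition plus the infimum over admissible curves, and part (ii) is the trivial forward direction together with the two-sided sandwich obtained by applying (i) to both $f$ and $f^{-1}$. One point worth flagging: the inequality you derive, $\rho_Y(f(x),f(y))\leq\rho_X(x,y)$, is the reverse of the one literally written in the proposition's statement, but yours is the correct direction --- it is what the argument actually yields, what ``nonexpanding'' standardly means, and what the $\mathbb{R}\to S^1$ example exhibits --- so the discrepancy is a typo in the statement rather than a flaw in your proof, and the sandwich in part (ii) closes either way.
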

    \begin{proof}
        \leavevmode
        \begin{enumerate}
            \item Since every admissible curve in $X$ is mapped to an admissible curve in $Y$ by $f$ and since $\rho_X$ and $\rho_Y$ are given by infima over admissible curves of the lengths of the curves, the fact that $f$ preserves the length of curves immediately guarantees this statement.
            \item If $f$ is a global isometry, $\rho_X(x_1,x_2)=\rho_Y(y_1,y_2)$ where $y_1=f(x_1)$ and $y_2=f(x_2)$ for all $x_1,\:x_2\in X$ so $\rho_Y(y_1,y_2)=\rho_X(f^{-1}(y_1),f^{-1}(y_2))$ for all $y_1,\:y_2\in Y$. Then $f^{-1}$ is a global isometry and thus a local isometry. Conversely, suppose that both $f$ and $f^{-1}$ are local isometries preserving admissible curves. Then since $f$ and $f^{-1}$ are both nonexpanding we have $\rho_X(x,y)\leq \rho_Y(f(x),f(y))\leq \rho_X(f^{-1}(f(x)),f^{-1}(f(y)))=\rho_X(x,y)$ and $f$ is a global isometry as required.
        \end{enumerate}
    \end{proof}
    We end with an essential condition for $f^{-1}$ to be a local isometry:
    \begin{proposition}
        Let $(X,\rho_X)$ and $(Y,\rho_Y)$ be length spaces and let $f:X\rightarrow Y$ be a bijective local isometry. $f^{-1}:Y\rightarrow X$ is a local isometry iff it is continuous.
    \end{proposition}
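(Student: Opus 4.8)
The plan is to dispatch the two directions asymmetrically. The forward implication is immediate from our conventions: a local isometry is by definition a continuous map, so if $f^{-1}$ is a local isometry it is in particular continuous and there is nothing to prove. All the content lies in the converse, where we assume $f^{-1}$ is continuous and must verify the defining ball condition for $f^{-1}$, namely that every $q\in Y$ admits a radius $s>0$ for which $f^{-1}$ restricts to an isometry of $B_s(q)$ onto $B_s(f^{-1}(q))$.

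So fix $q\in Y$ and set $p=f^{-1}(q)$. First I would invoke the hypothesis that $f$ is a local isometry to produce $r>0$ with $f|_{B_r(p)}\colon B_r(p)\to B_r(q)$ an isometry, so that in particular $f(B_r(p))=B_r(q)$. The key step is then to use continuity of $f^{-1}$ at $q$: since $B_r(p)$ is an open neighbourhood of $f^{-1}(q)$, there is $s\in(0,r]$ with $f^{-1}(B_s(q))\subseteq B_r(p)$. Inside this ball distances can be transferred through $f$: for $y_1,y_2\in B_s(q)$ we have $f^{-1}(y_i)\in B_r(p)$ and $f(f^{-1}(y_i))=y_i$, so the isometry property of $f|_{B_r(p)}$ gives $\rho_X(f^{-1}(y_1),f^{-1}(y_2))=\rho_Y(y_1,y_2)$; hence $f^{-1}|_{B_s(q)}$ is distance-preserving.

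It remains to pin down the image, i.e. to show $f^{-1}(B_s(q))=B_s(p)$, so that the restriction is genuinely an isometry onto the ball $B_s(f^{-1}(q))$ required by the definition. The inclusion $\subseteq$ follows from the distance identity just established applied to $y$ and $q$, giving $\rho_X(f^{-1}(y),p)=\rho_Y(y,q)<s$. For $\supseteq$, take $x\in B_s(p)\subseteq B_r(p)$; then $\rho_Y(f(x),q)=\rho_Y(f(x),f(p))=\rho_X(x,p)<s$ since $f|_{B_r(p)}$ is an isometry, so $f(x)\in B_s(q)$ and $x=f^{-1}(f(x))\in f^{-1}(B_s(q))$. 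Thus $f^{-1}|_{B_s(q)}\colon B_s(q)\to B_s(f^{-1}(q))$ is a distance-preserving bijection, hence an isometry; as $q$ was arbitrary and $f^{-1}$ is continuous by assumption, $f^{-1}$ is a local isometry.

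I do not expect a genuine obstacle here — the proposition is deliberately modest — but the one point that needs care, and the reason continuity is not a vacuous hypothesis, is the image identity $f^{-1}(B_s(q))=B_s(f^{-1}(q))$: \emph{a priori} $f^{-1}$ could smear a small ball about $q$ over a large region of $X$ (as indeed happens for non-injective local isometries such as $\mathbb{R}\to S^1$), and it is precisely continuity of $f^{-1}$ that confines $f^{-1}(B_s(q))$ to the controlled region $B_r(p)$, after which the distances inherited from $f$ force the image to be exactly the ball of radius $s$.
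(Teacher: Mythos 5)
Your proof is correct and follows essentially the same route as the paper's: both directions hinge on using continuity of $f^{-1}$ to transfer the local isometry property of $f$ at $p$ to an open neighbourhood of $f(p)$. Your version is somewhat more careful than the paper's one-line argument (which just notes that $f$ is an open map), since you explicitly shrink to a ball $B_s(q)$ and verify the image identity $f^{-1}(B_s(q))=B_s(f^{-1}(q))$, which is exactly what the paper's definition of a local isometry demands.
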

    \begin{proof}
        Local isometries are continuous by definition so necessity is trivial. For sufficiency note that since $f^{-1}$ is continuous, $f(U)$ is open for any open set $U$ of $X$; in particular if $f$ is an isometry when restricted to some neighbourhood $U$ of $p\in X$, $f^{-1}$ restricts to an isometry on the open neighbourhood $f(U)$ of $f(p)$ making $f^{-1}$ a local isometry.
    \end{proof}
    \begin{corollary}
        Every gauge transformation $f:\mathcal{M}\rightarrow \mathcal{M}$ is a global isometry. 
    \end{corollary}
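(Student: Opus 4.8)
The plan is simply to chain together the propositions already established above; no genuinely new work is needed, and in fact the hypothesis that $f$ lies in the identity component of $\text{Isom}_g(\mathcal{M})$ will not be used --- the conclusion holds for an arbitrary Riemannian isometry $\mathcal{M}\rightarrow\mathcal{M}$.

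First I would record the standard background facts needed to match the hypotheses of those propositions. A Riemannian isometry $f:\mathcal{M}\rightarrow\mathcal{M}$ is a diffeomorphism: the defining identity $g_p(u,v)=g_{f(p)}(f_*u,f_*v)$ forces $f_*$ to be a linear isomorphism at each point, so $f$ is a local diffeomorphism, and a bijective local diffeomorphism is a diffeomorphism. Hence $f$ is a bijection with smooth --- \emph{a fortiori} continuous --- inverse $f^{-1}$, and $f^{-1}$ is itself a Riemannian isometry, as one sees by applying the defining identity of $f$ at the point $f^{-1}(q)$ for each $q\in\mathcal{M}$. Moreover $\mathcal{M}$ is a length space with the piecewise-smooth curves as admissible curves, and any smooth map --- in particular $f$ and $f^{-1}$ --- carries such curves to such curves, so both $f$ and $f^{-1}$ preserve admissible curves.

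With these observations in hand the argument is immediate. By the proposition that every Riemannian isometry is a local isometry, $f$ is a bijective local (metric) isometry of the length space $\mathcal{M}$ preserving admissible curves. Since $f^{-1}$ is continuous, the last proposition above shows that $f^{-1}$ is a local isometry (alternatively this follows directly from the same ``Riemannian isometry $\Rightarrow$ local isometry'' proposition applied to $f^{-1}$), and $f^{-1}$ preserves admissible curves by the previous paragraph. Applying part (ii) of the proposition on bijective local isometries with $X=Y=\mathcal{M}$ to the pair $f,\:f^{-1}$ then yields that $f$ is a global isometry, which is exactly the claim.

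The only point that requires a little care is the verification that a Riemannian isometry in the sense of the definition above is genuinely a diffeomorphism, so that $f^{-1}$ exists and is continuous; this is standard and is the natural place where one would invoke that $\text{Isom}_g(\mathcal{M})$ is a (Lie) group. Beyond that, every step is a direct citation of a result already proved, so I anticipate no real obstacle.
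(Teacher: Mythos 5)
Your proof is correct and is exactly the assembly of the preceding propositions that the paper intends --- the corollary is stated without proof there precisely because it follows by this chain (Riemannian isometry $\Rightarrow$ local metric isometry; continuous inverse $\Rightarrow$ $f^{-1}$ is a local isometry; bijective local isometry with local-isometry inverse $\Rightarrow$ global isometry). The only caveat is that your aside that the conclusion holds for an \emph{arbitrary} Riemannian isometry overreaches slightly: with the paper's definition a Riemannian isometry need not be bijective (cf.\ the $\mathbb{R}\rightarrow S^1$ example), and the bijectivity you need to invoke the final two propositions is supplied here by $f$ being an element of the \emph{group} $\text{Isom}_g(\mathcal{M})$, so membership in that group --- though not in its identity component --- is genuinely used.
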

    \printbibliography
\end{document}